\providecommand{\U}[1]{\protect\rule{.1in}{.1in}}
\definecolor{DarkBlue}{rgb}{0,0,1}
\definecolor{DarkPink}{rgb}{1,0,0}
\definecolor{Black}{rgb}{0,0,0}
\newtheorem{theorem}{Theorem}
\newtheorem{corollary}[theorem]{Corollary}
\newtheorem{lemma}[theorem]{Lemma}
\newtheorem{proposition}[theorem]{Proposition}
\newenvironment{proof}[1][Proof]{\noindent\textbf{#1.} }{\ \rule{0.5em}{0.5em}}
\begin{document}

\title{BosonSampling Is Far From Uniform}
\author{Scott Aaronson\thanks{MIT. \ Email: aaronson@csail.mit.edu. \ \ This material
is based upon work supported by the National Science Foundation under Grant
No. 0844626,\ an Alan T. Waterman Award, a Sloan Fellowship, a TIBCO Chair,
and an NSF STC on Science of Information.}
\and Alex Arkhipov\thanks{MIT. \ Email: arkhipov@mit.edu. \ Supported by an NSF
Graduate Fellowship.}}
\date{}
\maketitle

\begin{abstract}
BosonSampling, which we proposed three years ago, is a scheme for using
linear-optical networks to solve sampling problems that appear to be
intractable for a classical computer. \ In a recent manuscript, Gogolin et
al.\ claimed that even an ideal BosonSampling device's output would be
\textquotedblleft operationally indistinguishable\textquotedblright\ from a
uniform random outcome, at least \textquotedblleft without detailed a priori
knowledge\textquotedblright; or at any rate, that telling the two apart might
itself be a hard problem. \ We first answer these claims---explaining why the
first is based on a definition of \textquotedblleft a priori
knowledge\textquotedblright\ so strange that, were it adopted, almost no
quantum algorithm could be distinguished from a pure random-number source;
while the second is neither new nor a practical obstacle to interesting
BosonSampling experiments. \ However, we then go further, and address some
interesting research questions inspired by Gogolin et al.'s mistaken
arguments. \ We prove that, with high probability over a Haar-random matrix
$A$, the BosonSampling distribution induced by $A$ is far from the uniform
distribution in total variation distance. \ More surprisingly, and directly
counter to Gogolin et al., we give an efficient algorithm that distinguishes
these two distributions with constant bias. \ Finally, we offer three
\textquotedblleft bonus\textquotedblright\ results about BosonSampling.
\ First, we report an observation of Fernando Brandao: that one can
efficiently sample a distribution that has large entropy and that's
indistinguishable from a BosonSampling distribution by any circuit of fixed
polynomial size. \ Second, we show that BosonSampling distributions can be
efficiently distinguished from uniform even with photon losses and for general
initial states. \ Third, we offer the simplest known proof that
\textit{Fermion}Sampling is solvable in classical polynomial time, and we
reuse techniques from our BosonSampling analysis to characterize random
FermionSampling distributions.

\end{abstract}

\section{Background\label{INTRO}}

\textsc{BosonSampling} \cite{aark}\ can be defined as the following
computational problem. \ We are given as input an $m\times n$\ complex matrix
$A\in\mathbb{C}^{m\times n}$ ($m\geq n$), whose $n$\ columns are orthonormal
vectors in $\mathbb{C}^{m}$. \ Let $\Phi_{m,n}$\ be the set of all\ lists
$S=\left(  s_{1},\ldots,s_{m}\right)  $\ of $m$ nonnegative integers summing
to $n$ (we call these lists \textquotedblleft experimental
outcomes\textquotedblright); note that $\left\vert \Phi_{m,n}\right\vert
=\binom{m+n-1}{n}$. \ For each outcome $S\in\Phi_{m,n}$, let $A_{S}%
\in\mathbb{C}^{n\times n}$\ be the $n\times n$\ matrix that consists of
$s_{1}$\ copies of $A$'s first\ row, $s_{2}$\ copies of $A$'s second\ row, and
so on. \ Then let $\mathcal{D}_{A}$\ be the following probability distribution
over $\Phi_{m,n}$:%
\begin{equation}
\Pr_{\mathcal{D}_{A}}\left[  S\right]  =\frac{\left\vert \operatorname*{Per}%
\left(  A_{S}\right)  \right\vert ^{2}}{s_{1}!\cdots s_{m}!}, \label{per}%
\end{equation}
where $\operatorname*{Per}$\ represents the matrix permanent.\footnote{This is
indeed a normalized probability distribution; see for example \cite{aark}\ for
a proof.} \ The \textsc{BosonSampling}\ problem is to sample from
$\mathcal{D}_{A}$, either exactly or approximately. \ Ideally, we want to do
so in time polynomial in $n$ and $m$.

The \textsc{BosonSampling}\ problem has no known applications to cryptography
or anything else. \ Nevertheless, it has two remarkable properties that
motivate its study:

\begin{enumerate}
\item[(1)] \textsc{BosonSampling} is easy to solve using a quantum computer.
\ Indeed, it is solvable by an especially simple \textit{kind} of quantum
computer: one that consists entirely of a network of beamsplitters, through
which identical single photons are sent and then nonadaptively
measured.\footnote{This type of quantum computer is not believed to be
universal for quantum computation, or for that matter, even for
\textit{classical} computation! \ On the other hand, if we throw in one
additional resource---namely \textit{adaptivity}, or the ability to condition
later beamsplitter settings on the outcomes of earlier photon-number
measurements---then Knill, Laflamme, and Milburn \cite{klm}\ famously showed
that we \textit{do} get the capacity for universal quantum computation.} \ The
reason for this is trivial: while our definition of \textsc{BosonSampling}%
\ was purely mathematical, the problem directly models the physics of
identical, non-interacting photons (or in general, any bosonic particles).
\ We merely need to interpret $n$\ as the number of photons (one for each
input location or \textquotedblleft mode\textquotedblright), $m$ as the number
of output modes, $A$ as the transition matrix between the input and output
modes (as determined by the beamsplitter network), and $S=\left(  s_{1}%
,\ldots,s_{m}\right)  $\ as a possible output configuration consisting of
$s_{i}$\ photons in the $i^{th}$\ mode for each $i$. \ Then according to
quantum mechanics, the final amplitude for the basis state $\left\vert
S\right\rangle $ is $\operatorname*{Per}\left(  A_{S}\right)  /\sqrt
{s_{1}!\cdots s_{m}!}$, so the probability of observing $\left\vert
S\right\rangle $\ on measuring is given by the formula (\ref{per}).

\item[(2)] By contrast, it is \textit{not} known how to solve
\textsc{BosonSampling}\ efficiently using a classical computer. \ Indeed, one
can say something much stronger: we showed in \cite{aark} that, if there
\textit{is} a polynomial-time classical algorithm for exact
\textsc{BosonSampling}, then $\mathsf{P}^{\mathsf{\#P}}=\mathsf{BPP}%
^{\mathsf{NP}}$\ (and hence the polynomial hierarchy collapses), which is
considered vanishingly unlikely. \ The proof made indirect use of the
$\mathsf{\#P}$-completeness of the\ permanent. \ Even for \textit{approximate}
\textsc{BosonSampling}, we proved that a fast classical algorithm would imply
a $\mathsf{BPP}^{\mathsf{NP}}$\ algorithm to approximate the permanent of an
$n\times n$ matrix $X$\ of independent $\mathcal{N}\left(  0,1\right)
_{\mathbb{C}}$\ Gaussian entries,\ with high probability over $X$. \ If (as we
conjecture) this approximation problem is already $\mathsf{\#P}$-complete,
then a $\mathsf{BPP}^{\mathsf{NP}}$\ algorithm for it would be essentially
ruled out as well. \ In summary, \textsc{BosonSampling}\ lets us base our
belief in \textquotedblleft Feynman's Conjecture\textquotedblright---the
conjecture that quantum mechanics is exponentially hard to simulate by
classical computers---on assumptions that seem much more \textquotedblleft
generic\textquotedblright\ than (say) the classical hardness of factoring integers.
\end{enumerate}

One can study \textsc{BosonSampling}, as we did at first, purely from a
theoretical computer science standpoint. \ However, \textsc{BosonSampling} can
also be seen as an implicit proposal for a\ physics experiment---and perhaps
not surprisingly,\ that is what has led to most of the interest in it.

In an ideal \textsc{BosonSampling}\ experiment, one would simultaneously
generate $n$ identical single photons, one in each of $n$ input modes. \ One
would then send the photons through a large network of beamsplitters, with the
beamsplitter angles \textquotedblleft random\textquotedblright\ and
\textquotedblleft arbitrary\textquotedblright\ but known to the experimenter
in advance. \ Finally, one would measure the number of photons in each of $m$
output modes, and check (after sufficiently many repetitions) whether the
probability distribution over outputs $S\in\Phi_{m,n}$\ was consistent with
equation (\ref{per})---or in other words, with the prediction of quantum
mechanics.\footnote{To be clear, one would \textit{not} try to estimate
$\Pr\left[  S\right]  $\ for each of the exponentially-many possible outputs
$S$, since even for (say) $n=10,m=40$, that would require an impractical
amount of data-collection. \ Instead, one would simply verify that the
histogram of $\frac{\left\vert \operatorname*{Per}\left(  A_{S}\right)
\right\vert ^{2}}{s_{1}!\cdots s_{m}!}$\ for the $S$'s\ that \textit{were}
sampled was consistent with equation (\ref{per}).} \ Assuming our
complexity-theoretic conjectures, as $n$ and $m$ increased, those predictions
would rapidly get harder to reproduce by any simulation running on a classical computer.

In this way, one might hope to get \textquotedblleft experimental
evidence\textquotedblright\ against the \textit{Extended Church-Turing
Thesis}: i.e., the thesis that all physical processes can be simulated by a
classical computer with polynomial overhead. \ Furthermore, one might hope to
get such evidence more easily than by building a universal quantum computer.

Last year, four independent groups (based in Brisbane \cite{ebs:broome},
Oxford \cite{ebs:spring}, Vienna \cite{ebs:tillmann}, and Rome
\cite{ebs:crespi}) reported the first experiments more-or-less along the above
lines. \ In these experiments, $n$ (the number of photons) was generally
$3$,\footnote{Spring et al.\ \cite{ebs:spring}\ also managed to test $n=4$,
but for input states consisting of two modes with two photons each, rather
than four modes with one photon each.} while $m$ (the number of output modes)
was $5$ or $6$. \ The experiments directly confirmed, apparently for the first
time, the prediction of quantum mechanics that the amplitudes of $3$-photon
processes are given by permanents of $3\times3$\ complex matrices.

Obviously, these experiments do not yet provide any speedup over classical
computing, nor are their results surprising: at some level they
merely\ confirm quantum mechanics! \ But these are just the first steps. \ The
eventual goal would be to demonstrate \textsc{BosonSampling}\ with (say)
$n=20$ or $n=30$ photons: a regime where the quantum experiment probably
\textit{would} outperform its fastest classical simulation, if not by an
astronomical amount. \ In our view, this would be an exciting
proof-of-principle for quantum computation.

Scaling up \textsc{BosonSampling}\ to larger $n$ remains a nontrivial
experimental challenge. \ If it's possible at all, it will likely require
optical technologies (especially single-photon sources) much more reliable
than those that exist today. \ Indeed, we regard it as an open question
whether \textsc{BosonSampling}\ experiments \textit{can} be scaled to a
\textquotedblleft computationally interesting regime,\textquotedblright%
\ without the use of quantum fault-tolerance. \ And presumably, if one can
implement quantum fault-tolerance, then one might as well just skip
\textsc{BosonSampling}\ and build a universal quantum computer!

\section{The Claims of Gogolin et al.\label{CLAIMS}}

The above issues with \textsc{BosonSampling}---the lack of a known practical
motivation for it, the difficulties in scaling it up, etc.---are real and
well-known. \ We have tried to be clear about them from the outset. \ However,
in a recent preprint entitled \textquotedblleft Boson-Sampling in the light of
sample complexity,\textquotedblright\ Gogolin et al.\ \cite{gogolin} criticize
\textsc{BosonSampling}\ on different and much more theoretical grounds.
\ Namely, they claim that the output of even an \textit{ideal}
\textsc{BosonSampling}\ device would be \textquotedblleft operationally
indistinguishable\textquotedblright\ from the uniform distribution. \ Indeed,
they prove a theorem, which they interpret to mean that under
\textquotedblleft reasonable assumptions,\textquotedblright\ a classical
skeptic could never tell whether a claimed \textsc{BosonSampling}\ device\ was
simply outputting uniformly random noise.

Gogolin et al.\ add that \textquotedblleft it is important to note that our
findings do not contradict the results of [Aaronson and Arkhipov
\cite{aark}].\textquotedblright\ \ Yet despite this disclaimer, they strongly
imply that \cite{aark}\ overlooked an elementary point, one that severely
undermines the prospect of using \textsc{BosonSampling}\ to probe the Extended
Church-Turing Thesis.

In Sections \ref{SYMMETRIC}\ and \ref{INTRACT}, we will explain in detail why
Gogolin et al.\ are wrong. \ First, in Section \ref{SYMMETRIC}, we consider
their observation that so-called \textquotedblleft symmetric
algorithms\textquotedblright\ require exponentially many samples to
distinguish a Haar-random \textsc{BosonSampling}\ distribution $\mathcal{D}%
_{A}$ from the uniform distribution. \ We explain why their restriction to
\textquotedblleft symmetric algorithms\textquotedblright\ is absurd: if one
makes it, then countless other distributions become \textquotedblleft
indistinguishable from uniform,\textquotedblright\ even though they are
trivial to distinguish from uniform in reality!

Next, in Section \ref{INTRACT}, we consider Gogolin et al.'s \textquotedblleft
fallback position\textquotedblright: that, even if one allows non-symmetric
algorithms, distinguishing $\mathcal{D}_{A}$\ from the uniform distribution
could still be a hard \textit{computational} problem. \ We point out that we
made exactly the same observation in \cite{aark}---but that we also explained
in \cite{aark} why the asymptotic hardness of verification will \textit{not}
be the limiting factor, in practice, for interesting \textsc{BosonSampling}%
\ experiments (with, say, $n=30$\ photons) designed to probe the Extended
Church-Turing Thesis. \ (Unfortunately, Gogolin et al.\ never acknowledged or
engaged with this point.)

\section{Our Results\label{RESULTS}}

Even though we believe that the above completely suffices to answer Gogolin et
al., in Sections \ref{DEV}\ and \ref{DETECT}\ we go further, and address some
interesting technical questions raised by their work. \ In particular, once we
get over the confusion about \textquotedblleft symmetric
algorithms,\textquotedblright\ it's clear on numerical and heuristic grounds
that a generic \textsc{BosonSampling}\ distribution\ $\mathcal{D}_{A}$\ is
\textit{not} close to the uniform distribution. \ But can we rigorously
\textit{prove} that $\mathcal{D}_{A}$ is not close to uniform? \ (This, of
course, is necessary though not sufficient to prove that sampling from
$\mathcal{D}_{A}$\ is computationally intractable.) \ Also, is there a
polynomial-time classical algorithm to \textit{distinguish} $\mathcal{D}_{A}$
from the uniform distribution? \ What about from any efficiently-samplable
distribution? \ Finally, what can we say about \textsc{FermionSampling}%
\ (defined in terms of the determinant rather than the permanent), whose
statistical properties seem easier to understand?

Our results are as follows. \ In Section \ref{DEV}, we prove that a generic
\textsc{BosonSampling}\ distribution is \textit{not} close in variation
distance to the uniform distribution. \ We get this as a consequence of a
simple but nice fact, which could have independent applications to
\textsc{BosonSampling}: that, if $X$\ is an iid Gaussian matrix, then
$\left\vert \operatorname*{Per}\left(  X\right)  \right\vert ^{2}$\ is a
mixture of exponentially-distributed random variables. \ Then in Section
\ref{DETECT}, we describe a simple estimator $R^{\ast}$ (the squared product
of row-norms, scaled so that $\operatorname*{E}\left[  R^{\ast}\right]  =1$),
which we prove can distinguish a generic \textsc{BosonSampling}\ distribution
$\mathcal{D}_{A}$\ from the uniform distribution with constant bias and\ in
classical polynomial time. \ Let us state our result formally:

\begin{theorem}
\label{detect}Let $A\in\mathbb{C}^{m\times n}$ be a Haar-random
\textsc{BosonSampling}\ matrix with $m\geq n^{5.1}/\delta$. \ Let
$\mathcal{U}$\ be the uniform distribution over all experimental outcomes
$S\in\Phi_{m,n}$ (or over all \textquotedblleft
collision-free\textquotedblright\ outcomes, i.e., those with $s_{i}\in\left\{
0,1\right\}  $\ for all $i$), and let $\mathcal{D}_{A}$\ be the
\textsc{BosonSampling}\ distribution corresponding to $A$. \ There exists a
linear-time computable estimator $R^{\ast}$\ such that, for sufficiently large
$n$, and with probability $1-O\left(  \delta\right)  $\ over $A$, we have%
\[
\Pr_{S\sim\mathcal{D}_{A}}\left[  R^{\ast}\left(  A_{S}\right)  \geq1\right]
-\Pr_{S\sim\mathcal{U}}\left[  R^{\ast}\left(  A_{S}\right)  \geq1\right]
\geq\frac{1}{9}.
\]
In particular, this implies that, with $1-O\left(  \delta\right)
$\ probability, $\mathcal{D}_{A}$\ and $\mathcal{U}$\ have $\Omega\left(
1\right)  $\ variation distance.
\end{theorem}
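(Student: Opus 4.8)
The plan is to let $R^{\ast}$ be the normalized squared product of row-norms of $A_{S}$: writing $r_{1},\ldots,r_{m}$ for the rows of $A$ and setting $R^{\ast}(A_{S})=\prod_{i\in S}\frac{m\|r_{i}\|^{2}}{n}$, one has $\mathbb{E}_{S\sim\mathcal{U}}[R^{\ast}(A_{S})]\leq 1$ always (Maclaurin's inequality, since $\sum_{i}\|r_{i}\|^{2}=n$), with equality up to $o(1)$ when $A$ is Haar-random. First I would remove collisions: because $m\geq n^{5.1}/\delta\gg n^{2}$, both $\mathcal{D}_{A}$ and $\mathcal{U}$ put $1-o(\delta)$ of their mass on the collision-free outcomes, so I may work with $n$-element subsets $S\subseteq\{1,\ldots,m\}$, where $\mathcal{U}$ is literally uniform and $\Pr_{\mathcal{D}_{A}}[S]\propto|\operatorname{Per}(A_{S})|^{2}$. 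The key algebraic identity is that factoring the row-norms out of the permanent gives $|\operatorname{Per}(A_{S})|^{2}=(n/m)^{n}\,R^{\ast}(A_{S})\,|\operatorname{Per}(\widehat{A_{S}})|^{2}$, where $\widehat{A_{S}}$ is $A_{S}$ with each row rescaled to unit length; hence $\mathcal{D}_{A}$ is exactly $\mathcal{U}$ reweighted by the product $R^{\ast}(A_{S})\,|\operatorname{Per}(\widehat{A_{S}})|^{2}$.

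Next I would analyze $R^{\ast}$ under $\mathcal{U}$, where it is a function of the row-norms of $A$ alone. Modeling the Haar-random $A$ as $X(X^{\ast}X)^{-1/2}$ for $X$ an $m\times n$ matrix of iid $\mathcal{N}(0,1)_{\mathbb{C}}$ entries --- and using $m\geq n^{5.1}/\delta$ to absorb the difference between $(X^{\ast}X)^{-1/2}$ and $m^{-1/2}I$ into the error terms --- I get $\|r_{i}\|^{2}\approx\|x_{i}\|^{2}/m$ with $\|x_{i}\|^{2}$ distributed as $\mathrm{Gamma}(n,1)$, so that $\log R^{\ast}(A_{S})=\sum_{i\in S}\log\frac{m\|r_{i}\|^{2}}{n}$ is a sum of $n$ nearly-iid terms of per-term mean $\psi(n)-\ln n=-\frac{1}{2n}+O(n^{-2})$ and variance $\psi'(n)=\frac{1}{n}+O(n^{-2})$; note also that $\sum_{i=1}^{m}\log\frac{m\|r_{i}\|^{2}}{n}\leq 0$ by AM--GM, which is what forces the mean of $\log R^{\ast}$ below zero. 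A Berry--Esseen estimate for sampling $n$ rows without replacement then shows that, with probability $1-O(\delta)$ over $A$, $\log R^{\ast}$ is close to a $\mathcal{N}(-\tfrac{1}{2},1)$ variable, so $\Pr_{\mathcal{U}}[R^{\ast}(A_{S})\geq 1]=\Phi(-\tfrac{1}{2})+o(1)\approx 0.31$. For the lower bound under $\mathcal{D}_{A}$, I would argue that the extra factor $|\operatorname{Per}(\widehat{A_{S}})|^{2}$ decouples from $R^{\ast}$: for any fixed $S$, $R^{\ast}(A_{S})$ depends only on the norms $\|r_{i}\|$ and $|\operatorname{Per}(\widehat{A_{S}})|^{2}$ only on the directions $r_{i}/\|r_{i}\|$, which for Gaussian $X$ are exactly independent (and only negligibly coupled after orthogonalization), while $\mathbb{E}_{A}[|\operatorname{Per}(\widehat{A_{S}})|^{2}]$ is the same for every $S$ by row-permutation symmetry. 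Hence the reweighting factorizes, $\Pr_{\mathcal{D}_{A}}[R^{\ast}(A_{S})\geq 1]=\mathbb{E}_{\mathcal{U}}[\,R^{\ast}(A_{S})\,\mathbf{1}[R^{\ast}(A_{S})\geq 1]\,]/\mathbb{E}_{\mathcal{U}}[R^{\ast}(A_{S})]+o(1)$, and under the Gaussian model this equals $\mathbb{E}[e^{Z}\mathbf{1}[Z\geq 0]]+o(1)$ for $Z\sim\mathcal{N}(-\tfrac{1}{2},1)$, which is $\Phi(\tfrac{1}{2})\approx 0.69$ (the exponential tilt shifts the mean from $-\tfrac{1}{2}$ to $+\tfrac{1}{2}$ while keeping the total mass at $1$). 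Subtracting, the bias is $\Phi(\tfrac{1}{2})-\Phi(-\tfrac{1}{2})+o(1)\approx 0.38$, comfortably above $\tfrac{1}{9}$ once the error terms and the $O(\delta)$ exceptional set of $A$'s are absorbed; and the presence of a single event whose probabilities under $\mathcal{D}_{A}$ and $\mathcal{U}$ differ by at least $\tfrac{1}{9}$ gives $d_{\mathrm{TV}}(\mathcal{D}_{A},\mathcal{U})\geq\tfrac{1}{9}=\Omega(1)$.

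I expect the main obstacle to be making the decoupling step under $\mathcal{D}_{A}$ rigorous: one must show that re-weighting by the permanent-dependent, heavy-tailed factor $|\operatorname{Per}(\widehat{A_{S}})|^{2}$ does not disturb the correlation between $S$ and $R^{\ast}$ that produces the bias. This needs (i) the independence of row-norms and row-directions, which is exact for Gaussian $X$ but only approximate after the orthogonalization $A=X(X^{\ast}X)^{-1/2}$ --- and it is precisely propagating this correction through all $n$ rows of $A_{S}$ without it compounding that dictates the polynomial margin $m\geq n^{5.1}/\delta$; and (ii) enough control on the conditional law of $|\operatorname{Per}(\widehat{A_{S}})|^{2}$, for which the mixture-of-exponentials structure of $|\operatorname{Per}(X)|^{2}$ for Gaussian $X$ --- and likewise of $|\operatorname{Per}(\widehat{A_{S}})|^{2}$ conditioned on all but one of its rows, which is then a centered complex Gaussian --- is exactly the right tool, since it says precisely how this factor fluctuates. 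The remaining pieces --- concentration of the row-norms, the AM--GM identity fixing the mean of $\log R^{\ast}$, and the Berry--Esseen and Gaussian-tilt estimates that turn everything into $\Phi(\pm\tfrac{1}{2})$ --- are comparatively routine, and the generous gap between $0.38$ and $\tfrac{1}{9}$ means none of them has to be pushed to its sharpest form.
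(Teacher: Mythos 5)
Your proposal is correct and follows essentially the same route as the paper: the same row-norm estimator, the same key decomposition $P=QR^{\ast}$ with $Q$ (the permanent of the row-normalized matrix) exactly independent of $R^{\ast}$ in the Gaussian model (the paper's Proposition \ref{indprop} and Lemma \ref{rlem}), the same Berry--Esseen analysis of $\log R^{\ast}$ as a sum of log-$\chi^{2}$ terms, and the same transfer from Gaussian to Haar-random submatrices via Theorem \ref{trunc} plus a disjoint-submatrix variance argument. The only real difference is that you compute the sharp asymptotic bias $\Phi(\tfrac{1}{2})-\Phi(-\tfrac{1}{2})\approx 0.38$ via the lognormal tilt, whereas the paper settles for the cruder lower bound $0.146-O(1/\sqrt{n})$; both comfortably exceed $\tfrac{1}{9}$.
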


To clarify, the estimator $R^{\ast}$\ does not distinguish $\mathcal{D}_{A}%
$\ from \textit{any} efficiently-samplable distribution; indeed, we show in
Section \ref{MOCKUP}\ that there are even natural \textquotedblleft
classical\textquotedblright\ models that produce the same statistics for
$R^{\ast}$ as correct \textsc{BosonSampling}. \ However, $R^{\ast}$ does
confirm that the output of a purported \textsc{BosonSampling}\ device has
nontrivial dependence on the beamsplitter settings, of a sort consistent with
its working correctly. \ So, this could be combined with other evidence to
build up a circumstantial case that a purported \textsc{BosonSampling}\ device
works, even with (say) $100$ or $1000$ photons.

Thus, in Appendix \ref{FIXED}, we study the broader question of
\textsc{BosonSampling}\ versus \textit{any} efficiently-samplable
distribution. \ We first observe that, for any \textit{fixed} $k$, it is easy
to construct an efficiently-samplable distribution that is
indistinguishable---unconditionally!---from a \textsc{BosonSampling}%
\ distribution $\mathcal{D}_{A}$ by any circuit of size at most $n^{k}$.
\ Indeed, this observation has nothing to do with \textsc{BosonSampling}: it
follows from a Chernoff bound, and holds for any target distribution
whatsoever. \ On the other hand, the \textquotedblleft
mockup\textquotedblright\ distribution thus constructed has only $O\left(
\log n\right)  $ entropy. \ So one could ask whether such a mockup
distribution exists that \textit{also} has large entropy. \ Here we report an
observation due to Brandao (personal communication): namely, that for every
$k$, a general theorem of Trevisan, Tulsiani, and Vadhan \cite{ttv} can be
used to construct an efficiently-samplable distribution that is
indistinguishable from a generic \textsc{BosonSampling}\ distribution
$\mathcal{D}_{A}$\ by circuits of size at most $n^{k}$, and that \textit{also}
has $n-O\left(  \log n\right)  $\ entropy. \ Of course, all of this leaves
open the crucial question of whether or not there is a \textit{single}
efficiently-samplable distribution that cannot be distinguished from
$\mathcal{D}_{A}$\ by any polynomial-time algorithm.

Next, in Appendix \ref{INITIAL}, we\ sketch an argument that the estimator
$R^{\ast}$\ works to distinguish a \textsc{BosonSampling}\ distribution from
uniform,\ given \textit{any} initial state (pure or mixed) with all photons
concentrated in the first $n\ll m$ modes, and which has a non-negligible
probability of a nonzero number of photons much less than $m$. \ In
particular, this implies that $R^{\ast}$ is \textquotedblleft
robust\textquotedblright: it still works even if a large fraction of photons
are randomly lost to the environment, and even if the inputs are (say)
coherent or Gaussian states rather than single-photon Fock states.

Finally, Appendix \ref{FERMION} presents some results about the related
problem of \textsc{FermionSampling}. \ \ In particular, we give a
self-contained proof that \textsc{FermionSampling}\ is solvable in classical
polynomial time. \ This was shown previously by Terhal and DiVincenzo
\cite{td:fermion}\ and by Knill \cite{knill:matchgate}\ (and was implicit in
work of Valiant \cite{valiant:qc}). \ However, our algorithm, which runs in
$O\left(  mn^{2}\right)  $ time, is both simpler and faster than any
previously-published \textsc{FermionSampling} algorithm, and seems like an
obvious choice for implementations. \ The existence of this
algorithm\ underscores that neither the \textquotedblleft
quantum\textquotedblright\ nature of \textsc{BosonSampling}, nor its
exponentially-large Hilbert space, nor its $n$-particle interference can
possibly suffice for computational hardness. \ This is why, contrary to the
claims of, e.g., Gard et al.\ \cite{gard}, we do not think it is possible to
explain convincingly why \textsc{BosonSampling}\ should be a hard problem
without using tools from computational complexity theory, as we did in
\cite{aark}.

In Appendix \ref{FERMION},\ we also reuse techniques from Section
\ref{DETECT}\ to understand the statistical properties of Haar-random
\textsc{FermionSampling}\ distributions. \ This turns out to be relatively
easy, owing to the fact---which we prove for completeness---that $\left\vert
\operatorname*{Det}\left(  X\right)  \right\vert ^{2}$\ converges at an
$O\left(  \log^{-3/2}n\right)  $\ rate to a lognormal random variable, given a
matrix $X\in\mathbb{C}^{n\times n}$\ of iid Gaussians. \ The convergence of
$\left\vert \operatorname*{Det}\left(  X\right)  \right\vert ^{2}$\ to
lognormal was previously shown by Girko \cite{girko}\ and by Costello and Vu
\cite{cv}, but for real $X$\ and without bounding the convergence rate. \ Note
that numerically, the pdfs for $\left\vert \operatorname*{Det}\left(
X\right)  \right\vert ^{2}$\ and $\left\vert \operatorname*{Per}\left(
X\right)  \right\vert ^{2}$ look nearly identical (see Figure \ref{pccfig}).
\ Thus, we conjecture that $\left\vert \operatorname*{Per}\left(  X\right)
\right\vert ^{2}$\ converges to lognormal as well; if true, this would give us
a much more detailed statistical understanding of Haar-random
\textsc{BosonSampling}\ distributions.

\section{Preliminaries\label{PRELIM}}

We use $\left[  n\right]  $\ to denote $\left\{  1,\ldots,n\right\}  $.
\ Given two probability distributions $\mathcal{D}_{1}=\left\{  p_{x}\right\}
_{x}$ and $\mathcal{D}_{2}=\left\{  q_{x}\right\}  _{x}$, the
\textit{variation distance}%
\[
\left\Vert \mathcal{D}_{1}-\mathcal{D}_{2}\right\Vert :=\frac{1}{2}\sum
_{x}\left\vert p_{x}-q_{x}\right\vert
\]
captures the maximum bias with which a sample from $\mathcal{D}_{1}$\ can be
distinguished from a sample from $\mathcal{D}_{2}$.

We already, in Section \ref{INTRO}, defined the \textsc{BosonSampling}%
\ problem and most of the notation we will use in discussing it. \ However,
one issue we need to get out of the way is that of multiple photons in the
same mode: something that, from our perspective, is mostly an inconvenience
that can be made irrelevant by taking sufficiently many modes. \ Formally,
call an experimental outcome $S=\left(  s_{1},\ldots,s_{m}\right)  \in
\Phi_{m,n}$ \textit{collision-free} if each $s_{i}$\ is either $0$ or $1$---so
that $A_{S}$\ is simply an $n\times n$ submatrix of $A$, and $\Pr
_{\mathcal{D}_{A}}\left[  S\right]  $\ is simply $\left\vert
\operatorname*{Per}\left(  A_{S}\right)  \right\vert ^{2}$. \ Also, let
$\Lambda_{m,n}\subseteq\Phi_{m,n}$\ be the set of all collision-free $S$.
\ Note that $\left\vert \Lambda_{m,n}\right\vert =\binom{m}{n}$, which means
that%
\begin{equation}
\left\vert \Lambda_{m,n}\right\vert \geq\left(  1-\frac{n^{2}}{m}\right)
\left\vert \Phi_{m,n}\right\vert . \label{cfree}%
\end{equation}
In this paper, we will typically assume that $m\gg n^{2}$ (or, for technical
reasons, even larger lower bounds on $m$), in which case (\ref{cfree}) tells
us that \textit{most} outcomes are collision-free. \ Moreover, in the case
that $A$ is Haar-random, the following result from \cite{aark}\ justifies
restricting our attention to the collision-free outcomes $S\in\Lambda_{m,n}$\ only:

\begin{theorem}
[\cite{aark}]\label{birthday}Let $A\in\mathbb{C}^{m\times n}$\ be a
Haar-random \textsc{BosonSampling}\ matrix. \ Then%
\[
\operatorname*{E}_{A}\left[  \Pr_{S\sim\mathcal{D}_{A}}\left[  S\notin%
\Lambda_{m,n}\right]  \right]  <\frac{2n^{2}}{m}.
\]

\end{theorem}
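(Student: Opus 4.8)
The plan is to bound the probability of a collision by the \emph{expected number} of colliding photon pairs, to evaluate that expectation exactly (for a fixed $A$) in terms of the squared moduli of the entries of $A$, and then to average over the Haar measure.

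Since an outcome $S$ fails to be collision-free exactly when $\sum_{i=1}^{m}\binom{s_i}{2}\geq 1$, Markov's inequality gives
\[
\Pr_{S\sim\mathcal{D}_A}\left[S\notin\Lambda_{m,n}\right]\;\leq\;\operatorname*{E}_{S\sim\mathcal{D}_A}\left[\sum_{i=1}^{m}\binom{s_i}{2}\right]\;=\;\sum_{i=1}^{m}\operatorname*{E}_{S\sim\mathcal{D}_A}\left[\binom{s_i}{2}\right],
\]
so it suffices to understand the expected number of photon pairs in a fixed output mode $i$.

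The crux is to put $\operatorname*{E}_{S\sim\mathcal{D}_A}\!\left[\binom{s_i}{2}\right]$ in closed form. Writing the $n$-photon output state as $|\psi\rangle=\prod_{j=1}^{n}\bigl(\sum_{i}A_{ij}b_i^{\dagger}\bigr)|0\rangle$, so that $|\langle S|\psi\rangle|^{2}$ reproduces~(\ref{per}), one has $\operatorname*{E}_{S}\!\left[\binom{s_i}{2}\right]=\tfrac12\langle\psi|\,b_i^{\dagger}b_i^{\dagger}b_ib_i\,|\psi\rangle=\tfrac12\bigl\|b_ib_i|\psi\rangle\bigr\|^{2}$. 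A short calculation gives $b_ib_i|\psi\rangle=2\sum_{j<k}A_{ij}A_{ik}\,|\psi^{(jk)}\rangle$, where $|\psi^{(jk)}\rangle$ is the $(n-2)$-photon output state with input modes $j$ and $k$ deleted; and the overlap $\langle\psi^{(j'k')}|\psi^{(jk)}\rangle$ is the permanent of the submatrix of $A^{\dagger}A=I_n$ with rows $[n]\setminus\{j',k'\}$ and columns $[n]\setminus\{j,k\}$, which equals $1$ when $\{j,k\}=\{j',k'\}$ and $0$ otherwise (in the unequal case that submatrix has an all-zero row). Hence every cross term cancels and
\[
\operatorname*{E}_{S\sim\mathcal{D}_A}\!\left[\binom{s_i}{2}\right]\;=\;2\!\!\sum_{1\leq j<k\leq n}\!\!|A_{ij}|^{2}\,|A_{ik}|^{2}.
\]
This cancellation---the fact that the many-photon bunching statistic collapses to a sum of two-photon bunching probabilities $2|A_{ij}|^{2}|A_{ik}|^{2}$---is, I expect, the one genuinely nontrivial point, and it uses orthonormality of the columns of $A$ in an essential way. (Even treating the photons as distinguishable one gets $|A_{ij}|^{2}|A_{ik}|^{2}$ per pair, so bosonic bunching costs only a harmless factor of $2$.)

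Finally I would average over the Haar-random $A$. Because $A$ is the first $n$ columns of a Haar-random $m\times m$ unitary, each row $(A_{i1},\dots,A_{in})$ is distributed as the first $n$ coordinates of a uniformly random unit vector $v\in\mathbb{C}^{m}$, for which $(|v_1|^{2},\dots,|v_m|^{2})\sim\mathrm{Dirichlet}(1,\dots,1)$; hence $\operatorname*{E}_A[|A_{ij}|^{2}|A_{ik}|^{2}]=\operatorname*{E}[|v_j|^{2}|v_k|^{2}]=\tfrac{1}{m(m+1)}$ for $j\neq k$. Plugging this into the two displays and using symmetry over the $m$ rows,
\[
\operatorname*{E}_A\!\left[\Pr_{S\sim\mathcal{D}_A}\!\left[S\notin\Lambda_{m,n}\right]\right]\;\leq\;\sum_{i=1}^{m}2\binom{n}{2}\cdot\frac{1}{m(m+1)}\;=\;\frac{n(n-1)}{m+1}\;<\;\frac{2n^{2}}{m},
\]
which is the claim (in fact with room to spare). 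The only arithmetic to verify is the Dirichlet second moment, which is standard; no concentration or Gaussian-approximation machinery is needed.
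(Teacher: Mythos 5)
Your proof is correct, but it is worth noting that the paper under review does not prove this statement at all: Theorem \ref{birthday} is imported verbatim from \cite{aark}, and the argument there is genuinely different from yours. In \cite{aark} one first shows that the \emph{Haar-averaged} distribution $\operatorname*{E}_{A}[\mathcal{D}_{A}]$ is exactly uniform over $\Phi_{m,n}$ (the averaged output state is the maximally mixed state on the $n$-photon symmetric subspace, by irreducibility of the representation $U\mapsto\varphi(U)$ on that subspace), so that $\operatorname*{E}_{A}[\Pr[S\notin\Lambda_{m,n}]]$ equals exactly $1-\binom{m}{n}/\binom{m+n-1}{n}=1-\prod_{k=0}^{n-1}\frac{m-k}{m+k}$, which is then bounded by $2n^{2}/m$. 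You instead bound the collision probability by the expected number of colliding pairs, evaluate that expectation in closed form for each \emph{fixed} $A$ via the second-quantized identity $\operatorname*{E}_{S\sim\mathcal{D}_A}[\binom{s_i}{2}]=2\sum_{j<k}|A_{ij}|^{2}|A_{ik}|^{2}$ (which I checked, including the orthogonality of the $|\psi^{(jk)}\rangle$'s coming from $A^{\dagger}A=I_n$; it reproduces Hong--Ou--Mandel for $n=m=2$), and only then average, using the correct Dirichlet moment $\operatorname*{E}[|A_{ij}|^{2}|A_{ik}|^{2}]=\frac{1}{m(m+1)}$. Each route buys something: the proof in \cite{aark} yields the exact Haar-averaged collision probability in one line once the symmetric-subspace fact is in hand, while yours avoids representation theory entirely, gives a per-matrix statement valid for every fixed $A$ (quantifying that bosonic bunching costs exactly a factor $2$ over distinguishable particles in the expected pair count), and ends with the marginally sharper bound $n(n-1)/(m+1)$. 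Both comfortably clear the stated $2n^{2}/m$.
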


\section{Limitations of \textquotedblleft Symmetric
Algorithms\textquotedblright\label{SYMMETRIC}}

Suppose we want to verify that the output of a \textsc{BosonSampling}\ device
matches the predictions of quantum mechanics (that is, equation (\ref{per})).
\ Then given the matrix $A\in\mathbb{C}^{m\times n}$, our task can be
abstracted as that of designing a \textit{verification test}, $V_{A}%
:\Phi_{m,n}^{k}\rightarrow\left\{  0,1\right\}  $, that satisfies the
following two constraints:

\begin{itemize}
\item \textbf{Efficiency.} $\ V_{A}\left(  S_{1},\ldots,S_{k}\right)  $ can be
computed classically in time polynomial in $m$, $n$, and $k$.

\item \textbf{Usefulness.} $\ V_{A}\left(  S_{1},\ldots,S_{k}\right)  $ is
usually $1$ if the outcomes $S_{1},\ldots,S_{k}$\ are drawn from
$\mathcal{D}_{A}$, but usually $0$ if $S_{1},\ldots,S_{k}$ are generated in
various \textquotedblleft fake\textquotedblright\ ways (with the relevant
\textquotedblleft fake\textquotedblright\ ways depending on exactly what we
are trying to verify).
\end{itemize}

In this paper, we will typically assume two additional properties:

\begin{itemize}
\item \textbf{Uniformity.} \ The polynomial-time algorithm to compute
$V_{A}\left(  S_{1},\ldots,S_{k}\right)  $\ takes $A$ as part of its input,
rather than being a different algorithm for each $A$.

\item \textbf{Properness.} \ $V_{A}\left(  S_{1},\ldots,S_{k}\right)  $
distinguishes $\mathcal{D}_{A}$\ from the \textquotedblleft
fake\textquotedblright\ distributions, even if we consider $V_{A}$'s behavior
only in the case where no $S_{i}$\ contains collisions---i.e., where $S_{i}%
\in\Lambda_{m,n}$\ for all $i\in\left[  k\right]  $.
\end{itemize}

The motivation for the properness constraint is that our hardness results in
\cite{aark}\ used the collision-free outcomes $S\in\Lambda_{m,n}$ only, so one
might demand that one \textit{also} restrict oneself to $\Lambda_{m,n}$ only
when verifying that a hard problem is being solved.

Now, Gogolin et al.\ \cite{gogolin}\ insist on a further constraint on $V_{A}%
$, which they call \textquotedblleft symmetry.\textquotedblright

\begin{itemize}
\item \textbf{\textquotedblleft Symmetry.\textquotedblright} \ $V_{A}\left(
S_{1},\ldots,S_{k}\right)  $\ is a function only of the list of
\textit{multiplicities} of $S_{i}$'s within $\left(  S_{1},\ldots
,S_{k}\right)  $, and not of any other information about the $S_{i}$'s. \ In
other words, $V_{A}\left(  S_{1},\ldots,S_{k}\right)  $\ is invariant under
arbitrary permutations of the exponentially-large set $\Lambda_{m,n}$.
\end{itemize}

To illustrate, suppose that an alleged \textsc{BosonSampling}\ device with
$n=2$ photons and $m=3$ modes were run $k=3$ times, and suppose the
collision-free outcomes were:%
\[
\left\vert 1,1,0\right\rangle ,~\left\vert 1,1,0\right\rangle ,~\left\vert
0,1,1\right\rangle .
\]
Then a symmetric verifier would be told\ only that one outcome occurred twice
and that one outcome occurred once. \ From that information alone---together
with knowledge of $A\in\mathbb{C}^{m\times n}$---the verifier would have to
decide whether the device was sampling from the \textsc{BosonSampling}
distribution $\mathcal{D}_{A}$, or (the \textquotedblleft null
hypothesis\textquotedblright) whether it was just sampling from $\mathcal{U}$,
the uniform distribution over $S\in\Lambda_{m,n}$.

Most of Gogolin et al.'s paper \cite{gogolin}\ is devoted to proving that,
\textit{if} $A$ is drawn from the Haar measure, and $k$\ (the number of
experimental runs) is less than exponential in $n$, then with high probability
over $A$, the symmetric algorithm's task is information-theoretically
impossible. \ We believe their proof of this theorem to be correct. \ The
intuition is extremely simple: let $\tau\left(  A\right)  $\ be the expected
number of runs until \textit{any} outcome $S\in\Lambda_{m,n}$\ is observed
more than once. \ Then we will have $\tau\left(  A\right)  =2^{\Omega\left(
n\right)  }$, with overwhelming probability over the choice of $A$! \ This is
just because $\left\vert \Lambda_{m,n}\right\vert =\binom{m}{n}$\ is
exponentially large---and while $\mathcal{D}_{A}$\ is \textit{not} close in
variation distance to $\mathcal{U}$, neither is it concentrated on some tiny
subset of size $2^{o\left(  n\right)  }$. \ Thus, regardless of whether the
\textquotedblleft true\textquotedblright\ distribution is $\mathcal{D}_{A}%
$\ or $\mathcal{U}$, and notwithstanding the quadratic \textquotedblleft
speedup\textquotedblright\ obtained from the Birthday Paradox, after
$k=2^{o\left(  n\right)  }$\ runs, a symmetric algorithm is overwhelmingly
likely to have only the useless information, \textquotedblleft no sample
$S\in\Lambda_{m,n}$\ was observed more than once so far.\textquotedblright%
\ \ Or as Gogolin et al.\ put it:

\begin{quotation}
\noindent with probability exponentially close to one in the number of bosons,
no symmetric algorithm can distinguish the Boson-Sampling distribution from
the uniform one from fewer than exponentially many samples. \ This means that
the two distributions are operationally indistinguishable without detailed a
priori knowledge ... The realistic situation, at least as far as known
certification methods are concerned, much more closely resembles the black box
setting ... In this setting the certifier has no a priori knowledge about the
output distribution. \ It is hence reasonable to demand that its decision
should be independent of which particular samples he receives and only depend
on how often he receives them. \ That is to say, knowing nothing about the
probability distribution, the labels of the collected samples don't mean
anything to the certifier, hence they should not influence his decision
...\ [O]ur findings imply that in the black box setting distinguishing the
Boson-Sampling distribution from the uniform one requires exponentially many
samples ... Colloquially speaking, our results on this problem give rise to a
rather ironic situation: Instead of building a device that implements
Boson-Sampling, for example by means of a quantum optical experiment, one
could instead simply program a classical computer to efficiently sample from
the uniform distribution over [outputs] and claim that the device samples from
the post-selected Boson-Sampling distribution [for some unitary $U$]. \ If one
chooses $U$ from the Haar measure the chances of being caught cheating becomes
significantly large only after one was asked for exponentially many samples.
\ This implies that the findings of any experimental realisation of
Boson-Sampling have to be interpreted with great care, as far as the notion
\textquotedblleft quantum supremacy\textquotedblright\ [sic] is concerned.
\end{quotation}

Our response to these claims can be summed up in one sentence: \textit{there
is no reason whatsoever to restrict the verifier to symmetric algorithms only.
\ }The verifier's goal is to check whether the sampled distribution is a good
match to the ideal \textsc{BosonSampling}\ distribution $\mathcal{D}_{A}$.
\ Moreover, \textit{even under Gogolin et al.'s assumptions, }the verifier
knows the matrix $A$. \ And that makes sense: after all, $A$ is not secret,
but simply the input to the \textsc{BosonSampling}\ problem---just like some
particular positive integer is the input to \textsc{Factoring}, or some
particular graph is the input to \textsc{Hamilton Cycle}. \ So it seems
bizarre to throw away the information about the actual identities of the
observed outcomes $S\in\Lambda_{m,n}$, since the whole point is to compare
those outcomes to $\mathcal{D}_{A}$, given knowledge of $A$.

By analogy, supposing we were testing an algorithm for factoring a composite
number $N$ into its prime factors $p$ and $q$. \ Would anyone demand\ that our
verdict be independent of \textit{whether or not }$p\times q$\textit{ actually
equalled }$N$\textit{?} \ Would anyone say that our decision should depend
only on the shape of the histogram of probabilities for various output pairs
$\left(  p,q\right)  $, and be insensitive to the actual identities of the
$\left(  p,q\right)  $ pairs themselves? \ If not, then why impose such a
strange constraint on \textsc{BosonSampling}\ verifiers?

As a side note, suppose we decided, for some reason, that the verifier's
output had to be invariant under arbitrary relabelings of the \textit{input
and output modes}, though not necessarily of the entire set $\Lambda_{m,n}$
(call a verifier \textquotedblleft weakly symmetric\textquotedblright\ if it
has this property). \ Even then, we will show in Theorem \ref{weaksym}\ that
$\operatorname*{poly}\left(  m,n\right)  $ samples information-theoretically
suffice to distinguish $\mathcal{D}_{A}$\ from $\mathcal{U}$, with high
probability over $A$. \ The intuitive reason is that there are
\textquotedblleft only\textquotedblright\ $m!n!$\ possible relabelings of the
input and output modes, compared to $\binom{m}{n}!$\ relabelings of the
outcomes $S\in\Lambda_{m,n}$. \ So let $V_{A}$ be a verifier that errs on
input $\mathcal{U}$ with probability $\varepsilon\ll\frac{1}{m!n!}%
$---something we can easily achieve with $\operatorname*{poly}\left(
m,n\right)  $\ samples, using amplification.\ \ Then a weakly symmetric
verifier can simply run $V_{A}$\ for all $m!n!$\ possible mode-relabelings,
and check whether \textit{any} of them yield a good match between
$\mathcal{D}_{A}$\ and the experimental results.

In truth, though, there is no reason to restrict to weakly symmetric verifiers
either. \ Instead, the verifier should just get the raw list of experimental
outcomes,\ $S_{1},\ldots,S_{k}\in\Lambda_{m,n}$. \ In that case, Gogolin et
al.\ themselves state in their Theorem 3 that $O\left(  n^{3}\right)
$\ samples suffice to distinguish $\mathcal{D}_{A}$\ from $\mathcal{U}%
$\ information-theoretically, assuming the variation distance $\epsilon
=\left\Vert \mathcal{D}_{A}-\mathcal{U}\right\Vert $\ is a constant. \ This is
true, but much too weak: in fact $O\left(  1/\epsilon^{2}\right)  $\ samples
suffice to distinguish \textit{any} two probability distributions with
variation distance $\epsilon$. \ So the real issue is just to show that
$\epsilon=\Omega\left(  1\right)  $\ with high probability over $A$. \ That is
what we will do in Section \ref{DEV}.

\section{Intractability of Verification\label{INTRACT}}

Gogolin et al.'s second criticism is that, even if the verification algorithm
is given access to the input matrix $A$\ (which they strangely call
\textquotedblleft side information\textquotedblright), it \textit{still} can't
verify in polynomial time that a claimed \textsc{BosonSampling}\ device is
working directly. \ Indeed, this is the sole argument they offer for
restricting attention to \textquotedblleft symmetric\textquotedblright%
\ algorithms.\footnote{We still find the argument bizarre: if verifying a
\textsc{BosonSampling}\ device is computationally hard, then why make the
problem \textit{even harder}, by throwing away highly-relevant information?
\ The idea seems to be that, if non-symmetric algorithms need exponential
computation time (albeit very few samples), then at least no one can complain
that, by restricting to symmetric algorithms, we are prohibiting a known
polynomial-time algorithm. \ As we will explain, the trouble with this
argument is that, even \textit{assuming} we need $\exp\left(  n\right)  $
computation time, an algorithm that needs only $\operatorname*{poly}\left(
n\right)  $\ experimental samples is vastly preferable in practice to one that
needs $\exp\left(  n\right)  $\ samples.} \ As they write:

\begin{quotation}
\noindent due to the very fact that Boson-Sampling is believed to be hard,
efficient classical certification of Boson-Sampling devices seems to be out of
reach ... The complexity theoretic conjecture under which Boson-Sampling is a
hard sampling problem, namely that it is expected to be \#P hard to
approximate the permanent, implies that approximating the probabilities of the
individual outputs of a Boson-Sampling device is also computationally hard.
\ A classical certifier with limited computational power will hence have only
very limited knowledge about the ideal output distribution of a supposed
Boson-Sampling device ... it is certainly unreasonable to assume that the
[certifier] has full knowledge of the ideal Boson-Sampling distribution.
\ After all, it is the very point of Boson-Sampling that approximating the
probabilities of individual outcomes is a computationally hard problem ... Our
results indicate that even though, unquestionably, the Boson-Sampling
distribution has an intricate structure that makes sampling from it a
classically hard problem, this structure seems inaccessible by classical means.
\end{quotation}

While Gogolin et al.\ never mention this, we raised in the same point in
\cite[Section 1.3]{aark}:

\begin{quotation}
\noindent[U]nlike with \textsc{Factoring}, we do not believe there is any
$\mathsf{NP}$\ witness for \textsc{BosonSampling}. \ In other words, if $n$ is
large enough that a classical computer cannot solve \textsc{BosonSampling},
then $n$ is probably \textit{also} large enough that a classical computer
cannot even verify that a quantum computer is solving \textsc{BosonSampling} correctly.
\end{quotation}

And again, in \cite[Section 6.1]{aark}:

\begin{quotation}
\noindent Unlike with \textsc{Factoring}, we do not know of any
\textit{witness} for \textsc{BosonSampling} that a classical computer can
efficiently verify, much less a witness that a boson computer can produce.
\ This means that, when $n$ is very large (say, more than $100$), even if a
linear-optics device is correctly solving \textsc{BosonSampling}, there might
be no feasible way to prove this without presupposing the truth of the
physical laws being tested!
\end{quotation}

Note that Gogolin et al.\ do not offer any formal argument for why
\textsc{BosonSampling}\ verification is intractable, and neither did we. \ As
we'll see later, there are many different things one can \textit{mean} by
verification, and some of them give rise to fascinating technical questions,
on which we make some initial progress in this paper.\ \ In general, though,
even if we assume our hardness conjectures from \cite{aark}, we still lack a
satisfying picture of which types of \textsc{BosonSampling}\ verification can
and can't be done in classical polynomial time.

We did make the following observation in \cite[footnote 23]{aark}:

\begin{quotation}
\noindent[G]iven a\ matrix $X\in\mathbb{C}^{n\times n}$, there \textit{cannot}
in general be an $\mathsf{NP}$\ witness proving the value of
$\operatorname*{Per}\left(  X\right)  $, unless $\mathsf{P}^{\#\mathsf{P}%
}=\mathsf{P}^{\mathsf{NP}}$ and the polynomial hierarchy collapses. \ Nor,
under our conjectures, can there even be such a witness for \textit{most}
Gaussian matrices $X$.
\end{quotation}

However, one lesson of this paper is that the above does not suffice to show
that verification is hard. \ For it is possible that one can \textquotedblleft
verify\textquotedblright\ a \textsc{BosonSampling}\ distribution
$\mathcal{D}_{A}$\ (in the sense of distinguishing $\mathcal{D}_{A}$\ from a
large and interesting collection of \textquotedblleft null
hypotheses\textquotedblright), without having to verify the\ value of
$\operatorname*{Per}\left(  X\right)  $\ for any particular matrix $X$.

Having said all that, let's suppose, for the sake of argument, that it
\textit{is} computationally intractable to verify a\ claimed
\textsc{BosonSampling}\ device, for some reasonable definition of
\textquotedblleft verify.\textquotedblright\ \ If so, then what is our
response to that objection? \ Here, we'll simply quote the response we gave in
\cite[Section 1.3]{aark}:

\begin{quotation}
\noindent While [the intractability of verification] sounds discouraging, it
is not really an issue from the perspective of near-term experiments. \ For
the foreseeable future, $n$ being \textit{too large} is likely to be the least
of one's problems! \ If one could implement our experiment with (say)\ $20\leq
n\leq30$, then certainly a classical computer could verify the answers---but
at the same time, one would be getting direct evidence that a quantum computer
could efficiently solve an \textquotedblleft interestingly
difficult\textquotedblright\ problem, one for which the best-known classical
algorithms require many millions of operations.
\end{quotation}

And again, in \cite[Section 6.1]{aark}:

\begin{quotation}
\noindent\lbrack F]or experimental purposes, the most useful values of $n$ are
presumably those for which a classical computer has some difficulty computing
an $n\times n$\ permanent, but can nevertheless do so in order to confirm the results.
\end{quotation}

In other words: yes, there might be no way to verify a \textsc{BosonSampling}%
\ device when $n=100$. \ But $100$-photon \textsc{BosonSampling}\ is probably
neither experimentally feasible nor conceptually necessary \textit{anyway}!
\ The only known \textquotedblleft application\textquotedblright%
\ of\ \textsc{BosonSampling} is as a proof-of-principle: showing that a
quantum system can dramatically outperform its fastest classical simulation on
some task, under plausible complexity conjectures. \ And for that application,
all that matters is that there be \textit{some} range of $n$'s for which

\begin{enumerate}
\item[(1)] linear-optical devices and classical computers can both solve
\textsc{BosonSampling} in \textquotedblleft non-astronomical\textquotedblright%
\ amounts of time, but

\item[(2)] the linear-optical devices solve the problem noticeably faster.
\end{enumerate}

More concretely, as we pointed out in \cite{aark}, the fastest-known algorithm
for general $n\times n$ permanents, called \textit{Ryser's algorithm}, uses
about $2^{n+1}n$\ floating-point operations. \ If (say) $n=30$,\ then
$2^{n+1}n$ is about $64$\ billion, which is large but perfectly within the
capacity of today's computers. \ A classical computer therefore could feasibly
check that $\left\vert \operatorname*{Per}\left(  A_{S_{1}}\right)
\right\vert ^{2},\ldots,\left\vert \operatorname*{Per}\left(  A_{S_{k}%
}\right)  \right\vert ^{2}$\ satisfied the expected statistics, where
$S_{1},\ldots,S_{k}$\ were the outputs of an alleged $30$-photon
\textsc{BosonSampling}\ device. \ At the same time, the device would
presumably sample the $S_{i}$'s \textit{faster} than any known classical
method, for any reasonable definition of the word \textquotedblleft
faster.\textquotedblright\ \ If so, then \textsc{BosonSampling}\ would have
achieved its intended purpose: in our view, one would have done an experiment
that was harder than any previous experiment for a believer in the Extended
Church-Turing Thesis to explain.

\section{Deviation from Uniformity\label{DEV}}

Even if experimenters can live with $64$ billion floating-point operations,
they certainly \textit{can't} live with $64$ billion laboratory experiments!
\ So we still have the burden of showing that few experiments suffice to
distinguish a \textsc{BosonSampling}\ distribution $\mathcal{D}_{A}$ from
\textquotedblleft trivial\textquotedblright\ alternatives, and in particular
from the uniform distribution $\mathcal{U}$ over $\Lambda_{m,n}$. \ That is
what we will do in this section. \ In particular, we will prove that
$\left\Vert \mathcal{D}_{A}-\mathcal{U}\right\Vert =\Omega\left(  1\right)  $,
with $1-o\left(  1\right)  $ probability over a Haar-random $A\in
\mathbb{C}^{m\times n}$. \ Or equivalently, that the number of samples needed
to distinguish $\mathcal{D}_{A}$\ from $\mathcal{U}$%
\ (information-theoretically and with constant bias) is a constant,
independent of $n$.

Let $X=\left(  x_{ij}\right)  \in\mathbb{C}^{n\times n}$ be a matrix of iid
Gaussians, drawn from $\mathcal{N}=\mathcal{N}\left(  0,1\right)
_{\mathbb{C}}^{n\times n}$. \ Thus, $\mathcal{N}$\ has the pdf%
\[
f_{\mathcal{N}}\left(  X\right)  =\frac{1}{\pi^{n^{2}}}\exp\left(  -%
{\textstyle\sum\limits_{ij}}
\left\vert x_{ij}\right\vert ^{2}\right)  .
\]
Also, let%
\[
P=P\left(  X\right)  =\frac{\left\vert \operatorname*{Per}\left(  X\right)
\right\vert ^{2}}{n!},
\]
so that $\operatorname{E}\left[  P\right]  =1$, and let $f_{P}$\ be the pdf of
$P$. \ (We will typically suppress dependence on $n$.) \ Our goal, in this
section, is to understand some basic facts about the function $f_{P}$,
ignoring issues of computational complexity, and then use those facts to prove
that $\mathcal{D}_{A}$ is not close to the uniform distribution. \ For a plot
of $f_{P}$\ and the corresponding pdf for the determinant\ in the case $n=6$
(reproduced from \cite{aark}), see Figure \ref{pccfig}.

\definecolor{DarkBlue}{rgb}{0,0,1} \definecolor{DarkPink}{rgb}{1,0,0}
\begin{figure}[ptb]
\centering
\par
\begin{tikzpicture}[x=7cm, y=2cm]
\draw (0,0) -- (2,0) node[right] {};
\draw (0,0) -- (0,4.5) node[above] {};
\foreach \x in
{0,0.1,0.2,0.3,0.4,0.5,0.6,0.7,0.8,0.9,1,1.1,1.2,1.3,1.4,1.5,1.6,1.7,1.8,1.9,2.0}
\draw (\x ,1pt) -- (\x ,-1pt) node[anchor=north] {\tiny $\x$};
\foreach \y in {0, 0.5, ..., 4.5}
\draw (1pt,\y) -- (-1pt, \y) node[anchor=east] {\tiny $\y$};
\node[rotate=90] at (-0.1,2.25) {Probability Density Function};
\draw[color=DarkBlue, thick, mark=square*, mark size=0.85pt] plot
file {Perm_plot_points.table};
\draw[color=DarkPink, thick, mark = triangle, mark size = 0.85pt]
plot file {Det_plot_points.table};
\node (Det) at (0.45,1.7) {
$\displaystyle\frac{\bigl|\operatorname{Det}(X)\bigr|^2}{n!}$};
\node (DetPoint) at (0.195,1.05) {};
\draw[->, thick, color=DarkPink] (Det) -- (DetPoint);
\node (Per) at (0.57,1.12) {
$\displaystyle\frac{\bigl|\operatorname{Per}(X)\bigr|^2}{n!}$};
\node (PerPoint) at (0.33,0.72) {};
\draw[->, thick, color=DarkBlue] (Per) -- (PerPoint);
\end{tikzpicture}
\caption{Probability density functions of the random variables $D=\left\vert
\operatorname*{Det}\left(  X\right)  \right\vert ^{2}/n!$\ and $P=\left\vert
\operatorname*{Per}\left(  X\right)  \right\vert ^{2}/n!$, where
$X\sim\mathcal{N}\left(  0,1\right)  _{\mathbb{C}}^{n\times n}$ is a complex
Gaussian random matrix, in the case $n=6$ (reproduced from \cite{aark}).
\ Note that $\operatorname*{E}\left[  D\right]  =\operatorname*{E}\left[
P\right]  =1$. \ As $n$ increases, the bends on the left become steeper.}%
\label{pccfig}%
\end{figure}

The first step is to give a characterization of $f_{P}$\ that, while easy to
prove (it involves considering only the topmost row of the matrix $X$), will
provide a surprising amount of leverage.

\begin{lemma}
\label{expmix}There exists a random variable $c>0$ such that%
\[
f_{P}\left(  x\right)  =\operatorname{E}_{c}\left[  ce^{-cx}\right]  .
\]
In other words, $P\left(  X\right)  $\ is a (possibly continuous) mixture of
exponentially-distributed random variables.
\end{lemma}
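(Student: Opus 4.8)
The plan is to expose the single row of $X$ that the hint in the excerpt points to. Write $X$ as having first row $v = (x_{11},\dots,x_{1n}) \in \mathbb{C}^n$ and let $Y \in \mathbb{C}^{(n-1)\times n}$ be the remaining rows. Laplace expansion of the permanent along the first row gives $\operatorname{Per}(X) = \sum_{j=1}^n x_{1j} \operatorname{Per}(Y^{(j)})$, where $Y^{(j)}$ is $Y$ with its $j$th column deleted. Condition on $Y$: then the vector $w = w(Y) \in \mathbb{C}^n$ with entries $w_j = \operatorname{Per}(Y^{(j)})$ is fixed, and $\operatorname{Per}(X) = \langle \bar v, w\rangle$ is a linear functional of the Gaussian vector $v$. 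Since $v \sim \mathcal{N}(0,1)^n_{\mathbb{C}}$ is a standard complex Gaussian, $\langle \bar v, w\rangle$ is a one-dimensional complex Gaussian with mean $0$ and variance $\|w\|^2$ (over the randomness of $v$, with $Y$ fixed). Hence, conditioned on $Y$, the quantity $|\operatorname{Per}(X)|^2$ is $\|w(Y)\|^2$ times a standard exponential random variable (recall that if $z \sim \mathcal{N}(0,\sigma^2)_{\mathbb{C}}$ then $|z|^2 \sim \mathrm{Exp}(1/\sigma^2)$, i.e. has density $\sigma^{-2} e^{-t/\sigma^2}$).

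Now divide by $n!$ to pass to $P = |\operatorname{Per}(X)|^2/n!$. Set $c = c(Y) := n!/\|w(Y)\|^2$, a positive random variable depending only on $Y$ (it is positive almost surely because $w(Y) \neq 0$ with probability $1$ — the permanents $\operatorname{Per}(Y^{(j)})$ are not all zero a.s., since each is a nonconstant polynomial in the Gaussian entries and hence nonzero a.s.). Then the conditional density of $P$ given $Y$ is exactly $x \mapsto c\, e^{-cx}$. Taking the expectation over $Y$ — equivalently over the induced distribution of $c$ — yields
\[
f_P(x) = \operatorname{E}_Y\!\left[ c(Y)\, e^{-c(Y) x}\right] = \operatorname{E}_c\!\left[ c\, e^{-cx}\right],
\]
which is the claimed representation.

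A few routine points would need to be checked but present no real obstacle: (i) the normalization — taking $x$ from $0$ to $\infty$ of the right-hand side integrates each exponential to $1$, so $\int_0^\infty f_P = 1$, consistent with the known fact that $P$ is a genuine density, and $\operatorname{E}[P] = \operatorname{E}_c[1/c] = \operatorname{E}_Y[\|w\|^2]/n! = 1$ by the normalization of the BosonSampling distribution; (ii) the measurability/integrability statements licensing the interchange of expectation over $Y$ with the pointwise density identity, which hold because everything in sight is nonnegative (Tonelli). The one conceptual point worth stating carefully, rather than a difficulty per se, is the conditioning argument: the distribution of $P$ is a mixture, over the law of $Y$, of the conditional laws $\mathcal{L}(P \mid Y)$, and we have identified each of those conditional laws as an exponential with a $Y$-dependent rate. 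If I wanted to be fully rigorous about "conditioning on $Y$," I would phrase it as: the joint density of $X$ factors as $f_{\mathcal{N}}(v)\, f_{\mathcal{N}}(Y)$ by independence of the rows, integrate out the direction of $v$ to get the conditional density of $|\langle \bar v, w(Y)\rangle|^2$ in closed form, and then integrate over $Y$. No step here is genuinely hard; the content of the lemma is entirely the observation that Laplace expansion along one row turns $\operatorname{Per}(X)$ into a Gaussian linear form.
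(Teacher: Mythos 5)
Your proposal is correct and follows essentially the same route as the paper: expand the permanent along the first row, condition on the remaining rows so that $\operatorname{Per}(X)$ becomes a mean-zero complex Gaussian linear form with variance $\|w(Y)\|^2$, observe that its squared modulus is then exponentially distributed, and average over the conditioning. Your version is somewhat more careful about the a.s.\ positivity of $c$ and the justification for interchanging the conditioning with the density identity, but the underlying argument is identical.
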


\begin{proof}
Given $X=\left(  x_{ij}\right)  \in\mathbb{C}^{n\times n}$, let $X_{1}%
,\ldots,X_{n}$\ be the bottom $\left(  n-1\right)  \times\left(  n-1\right)
$\ minors. \ Then by definition,%
\[
\operatorname*{Per}\left(  X\right)  =x_{11}\operatorname*{Per}\left(
X_{1}\right)  +\cdots+x_{1n}\operatorname*{Per}\left(  X_{n}\right)  .
\]
Now, $\operatorname*{Per}\left(  X_{1}\right)  ,\ldots,\operatorname*{Per}%
\left(  X_{n}\right)  $ have some (correlated) probability measure, call it
$\mathcal{C}$. \ Then we can think of $\operatorname*{Per}\left(  X\right)  $
as simply $c_{1}x_{1}+\cdots+c_{n}x_{n}$, where $x_{1},\ldots,x_{n}$\ are
independent $\mathcal{N}\left(  0,1\right)  _{\mathbb{C}}$\ Gaussians, and
$\left(  c_{1},\ldots,c_{n}\right)  $\ is drawn from $\mathcal{C}$. \ This, in
turn, is a complex Gaussian with mean $0$ and variance $\left\vert
c_{1}\right\vert ^{2}+\cdots+\left\vert c_{n}\right\vert ^{2}$. \ Therefore
the pdf of\ $\operatorname*{Per}\left(  X\right)  $\ must be a convex
combination of complex Gaussians, each with mean $0$ (but with different
variances). \ So the pdf of $\left\vert \operatorname*{Per}\left(  X\right)
\right\vert ^{2}$\ is a convex combination of absolute squares of complex
Gaussians, which are exponentially-distributed random variables.
\end{proof}

As a side note, since $1/c$ is the expectation of the random variable with pdf
$ce^{-cx}$, we have%
\[
\operatorname{E}\left[  \frac{1}{c}\right]  =\operatorname{E}_{c}\left[
\int_{0}^{\infty}ce^{-cx}xdx\right]  =\int_{0}^{\infty}f_{P}\left(  x\right)
xdx=1.
\]

To illustrate the usefulness of Lemma \ref{expmix}, we now apply it to deduce
various interesting facts about $f_{P}$, explaining what were left in
\cite{aark}\ as empirical observations. \ The following theorem is not needed
for later results in this section, but is included for completeness.

\begin{theorem}
\label{grabbag}$f_{P}$\ is monotonically decreasing, finite, positive, and
smooth, except that it might diverge at $x=0$.
\end{theorem}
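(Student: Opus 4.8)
The plan is to derive every clause of the theorem directly from the mixture representation of Lemma \ref{expmix}. Write $\mu$ for the law of the random variable $c$ appearing there, so that $f_P(x)=\int_0^\infty c\,e^{-cx}\,d\mu(c)$ for $x>0$. The first thing I would nail down is that $\mu$ is genuinely supported in $(0,\infty)$: tracing the proof of Lemma \ref{expmix}, $c$ equals $n!/(|\operatorname*{Per}(X_1)|^2+\cdots+|\operatorname*{Per}(X_n)|^2)$, which is always finite and is strictly positive unless all the minor permanents $\operatorname*{Per}(X_i)$ vanish simultaneously; since each $\operatorname*{Per}(X_i)$ is a nonzero polynomial in the Gaussian entries of $X$, that event has measure zero, so $c\in(0,\infty)$ almost surely.

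Granting this, monotonicity and positivity are immediate and I would dispatch them first: for each fixed $c>0$ the function $x\mapsto c\,e^{-cx}$ is strictly decreasing and strictly positive on $(0,\infty)$, and integrating against the probability measure $\mu$ preserves both — for monotonicity, compare the integrands pointwise at any $x_1<x_2$; for positivity, use that the integrand is a.e.\ positive. Finiteness on $(0,\infty)$ I would get from the one-line bound $\sup_{c>0}c\,e^{-cx}=1/(ex)$ (attained at $c=1/x$), giving $f_P(x)\le 1/(ex)$. The escape clause at $x=0$ is necessary, since $f_P(0)=\operatorname{E}[c]$ can be infinite; in fact it is finite precisely when $|\operatorname*{Per}(X)|^2$ carries no extra mass near $0$, so one cannot hope to do better.

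For smoothness on $(0,\infty)$ I would differentiate under the integral sign, aiming to prove $f_P^{(k)}(x)=(-1)^k\int_0^\infty c^{k+1}e^{-cx}\,d\mu(c)$ by induction on $k$. The step that needs care — and it is the only step that does — is the domination: on each slab $x\in[\varepsilon,M]$ with $0<\varepsilon<M$, the $k$-th $x$-derivative of the integrand is bounded in absolute value by $c^{k+1}e^{-c\varepsilon}\le\sup_{c>0}c^{k+1}e^{-c\varepsilon}=:C_{k,\varepsilon}<\infty$, and since $\mu$ is a probability measure the constant $C_{k,\varepsilon}$ is $\mu$-integrable, so the standard differentiation-under-the-integral theorem applies. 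Iterating yields $f_P\in C^\infty((0,\infty))$.

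I do not anticipate any real obstacle: the theorem is essentially a corollary of Lemma \ref{expmix}. The only points requiring (minor) attention are producing the $c$-uniform dominating constant for the differentiation step and recording that $c$ lies in $(0,\infty)$ with probability one, which secures strict positivity and rules out $c=\infty$ atoms spoiling the other claims.
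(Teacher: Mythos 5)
Your proof is correct and follows essentially the same route as the paper: every clause is read off from the mixture representation $f_P(x)=\operatorname{E}_c[ce^{-cx}]$ of Lemma \ref{expmix}, with monotonicity and positivity by convexity, finiteness via $\sup_{c>0}ce^{-cx}=1/(ex)$, and smoothness by differentiating under the expectation. You are in fact somewhat more careful than the paper, which exchanges the limit and the expectation without an explicit dominating function and does not record that $c\in(0,\infty)$ almost surely.
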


\begin{proof}
For monotonically decreasing and positive, it suffices to note that $f_{P}%
$\ is a convex combination of functions with those properties. \ For finite,%
\[
f_{P}\left(  x\right)  =\operatorname{E}_{c}\left[  ce^{-cx}\right]  \leq
\sup_{c>0}ce^{-cx}\leq\frac{1}{ex}%
\]
for $x>0$. \ Likewise for continuous,%
\begin{align*}
\frac{df_{P}\left(  x\right)  }{dx}  &  =\lim_{\varepsilon\rightarrow0}%
\frac{f_{P}\left(  x+\varepsilon\right)  -f_{P}\left(  x-\varepsilon\right)
}{2\varepsilon}\\
&  =\lim_{\varepsilon\rightarrow0}\frac{\operatorname{E}_{c}\left[
ce^{-c\left(  x+\varepsilon\right)  }-ce^{-c\left(  x-\varepsilon\right)
}\right]  }{2\varepsilon}\\
&  =\lim_{\varepsilon\rightarrow0}\operatorname{E}_{c}\left[  \frac
{ce^{-cx}\left(  e^{-c\varepsilon}-e^{c\varepsilon}\right)  }{2\varepsilon
}\right] \\
&  =\operatorname{E}_{c}\left[  -c^{2}e^{-cx}\right]
\end{align*}
which is finite for $x>0$, and a similar calculation can be repeated for the
higher derivatives.
\end{proof}

We now use Lemma \ref{expmix}\ to show, in two senses, that $f_{P}$ is far
from the point distribution concentrated on $P=1$.

\begin{lemma}
\label{313}We have%
\[
\frac{1}{2}\operatorname{E}\left[  \left\vert P-1\right\vert \right]
>0.313,~~~~\Pr\left[  \left\vert P-1\right\vert \geq\frac{1}{2}\right]
>0.615.
\]

\end{lemma}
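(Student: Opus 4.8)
The plan is to reduce everything to a statement about the mixing rate $c$ of Lemma~\ref{expmix} by conditioning on $c$: given $c$, the variable $P$ is exactly exponential with rate $c$, so every quantity we need is an elementary integral in $c$, and we then average over $c$ using the only two facts we have about its law, namely $c>0$ and $\operatorname{E}[1/c]=1$ (the side note following Lemma~\ref{expmix}). All interchanges of $\operatorname{E}_{c}$ with these integrals are justified by nonnegativity (Tonelli), and finiteness is clear since $(1-P)^{+}\le 1$ and $e^{-c}/c\le 1/c$.

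For the tail bound, write $\Pr[\,|P-1|\ge\tfrac12\,]=\Pr[P\le\tfrac12]+\Pr[P\ge\tfrac32]$; conditioning on $c$ and integrating the density $ce^{-cx}$,
\[
\Pr\!\left[\,|P-1|\ge\tfrac12\,\right]=\operatorname{E}_{c}\!\left[(1-e^{-c/2})+e^{-3c/2}\right]=1-\operatorname{E}_{c}\!\left[e^{-c/2}-e^{-3c/2}\right].
\]
So it suffices to bound the loss term $e^{-c/2}-e^{-3c/2}$ pointwise in $c$. Substituting $u=e^{-c/2}\in(0,1)$ turns it into $u-u^{3}$, which is maximized at $u=1/\sqrt{3}$ with value $\tfrac{2}{3\sqrt{3}}=0.3849\ldots$; hence $\Pr[\,|P-1|\ge\tfrac12\,]\ge 1-\tfrac{2}{3\sqrt{3}}>0.615$.

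For the mean deviation, use $\operatorname{E}[P]=1$, which forces $\operatorname{E}[(P-1)^{+}]=\operatorname{E}[(1-P)^{+}]$ and hence $\tfrac12\operatorname{E}[\,|P-1|\,]=\operatorname{E}[(1-P)^{+}]$. A short integration by parts gives $\operatorname{E}[(1-P)^{+}\mid c]=1-\tfrac1c+\tfrac{e^{-c}}{c}$, and averaging over $c$ with $\operatorname{E}[1/c]=1$ collapses this to
\[
\tfrac12\operatorname{E}[\,|P-1|\,]=\operatorname{E}_{c}\!\left[\frac{e^{-c}}{c}\right].
\]
Now reparametrize by $t:=1/c$, so that the right-hand side is $\operatorname{E}_{t}[\, t\,e^{-1/t}\,]$ subject only to $\operatorname{E}[t]=1$. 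One computes that $g(t)=te^{-1/t}$ has $g''(t)=e^{-1/t}/t^{3}>0$ on $(0,\infty)$, so $g$ is convex, and Jensen's inequality yields $\operatorname{E}_{t}[g(t)]\ge g(\operatorname{E}[t])=g(1)=e^{-1}=0.3678\ldots>0.313$, as needed.

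The only non-bookkeeping step is the reparametrization $t=1/c$: it is $te^{-1/t}$, not $e^{-c}/c$, that is the convex function, and seeing this is exactly what makes $\operatorname{E}[1/c]=1$ — all that Lemma~\ref{expmix} tells us about $c$ — enough to finish via Jensen. Everything else is integrating an exponential density and one single-variable maximization.
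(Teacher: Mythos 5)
Your proof is correct. For the tail bound $\Pr[\,|P-1|\ge\frac12\,]>0.615$ you do essentially exactly what the paper does: condition on $c$, reduce to minimizing $1-(e^{-c/2}-e^{-3c/2})$ pointwise over $c>0$, and locate the minimum at $c=\ln 3$, giving $1-\frac{2\sqrt3}{9}$; your substitution $u=e^{-c/2}$ is just a cleaner way to run that one-variable optimization. For the mean-deviation bound, however, your route genuinely differs from the paper's and in fact improves it. The paper lower-bounds $\frac12\operatorname{E}[\,|P-1|\,]$ by $\inf_{c>0}\bigl(\frac12-\frac{1}{2c}+\frac{e^{-c}}{c}\bigr)$, a pointwise infimum over mixture components that is evaluated numerically as $\approx 0.313$ at $c\approx 1.678$, and it makes no use of the normalization of $P$. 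You instead exploit $\operatorname{E}[P]=1$ to write $\frac12\operatorname{E}[\,|P-1|\,]=\operatorname{E}[(1-P)^{+}]=\operatorname{E}_c[e^{-c}/c]$, and then the convexity of $t\mapsto te^{-1/t}$ combined with $\operatorname{E}[1/c]=1$ (the paper's own side note) yields the closed-form constant $e^{-1}\approx 0.368$ via Jensen. This is tight for the class under consideration (equality at $c\equiv 1$, i.e.\ $P$ exactly exponential), so your argument buys both a sharper constant and the elimination of the numerical minimization, at the mild cost of invoking the extra, but freely available, moment constraint; the paper's version has the minor virtue of holding for an arbitrary exponential mixture regardless of its mean. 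All of your interchanges and the convexity computation $g''(t)=e^{-1/t}/t^{3}>0$ check out.
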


\begin{proof}
For the first part: since $f_{P}$\ is a mixture of pdf's of the form
$ce^{-cx}$, by convexity it suffices to lower-bound%
\[
\inf_{c>0}\frac{1}{2}\int_{0}^{\infty}ce^{-cx}\left\vert x-1\right\vert dx.
\]
The expression inside the $\inf$\ can be evaluated as%
\[
\frac{1}{2}-\frac{1}{2c}+\frac{1}{ce^{c}},
\]
for which a minimum of $\sim0.313$\ at $c\approx1.678$\ can be obtained numerically.

For the second part: again appealing to convexity, it suffices to lower-bound%
\[
\inf_{c>0}\left(  1-\int_{1/2}^{3/2}ce^{-cx}dx\right)  ,
\]
for which a minimum of $1-\frac{2\sqrt{3}}{9}>0.615$\ is obtained at $c=\ln3$.
\end{proof}

Next, let $\mathcal{H}$\ be the distribution over $X\in\mathbb{C}^{n\times n}%
$\ defined by the pdf%
\[
f_{\mathcal{H}}\left(  X\right)  =f_{\mathcal{N}}\left(  X\right)  P\left(
X\right)  .
\]
In other words, $\mathcal{H}$\ is the distribution over $n\times n$\ matrices
obtained by starting from the Gaussian distribution $\mathcal{N}$, then
rescaling all the probabilities by $\left\vert \operatorname*{Per}\left(
X\right)  \right\vert ^{2}$. \ Then as an immediate corollary of Lemma
\ref{313}, we find that%
\begin{equation}
\left\Vert \mathcal{H}-\mathcal{N}\right\Vert =\frac{1}{2}\int_{X}\left\vert
f_{\mathcal{H}}\left(  X\right)  -f_{\mathcal{N}}\left(  X\right)  \right\vert
dX=\frac{1}{2}\operatorname{E}\left[  \left\vert P-1\right\vert \right]
>0.313. \label{line313}%
\end{equation}

Now let $A\in\mathbb{C}^{m\times n}$\ be a Haar-random column-orthonormal
matrix, and recall that $A_{S}$\ is the $n\times n$\ submatrix of $A$
corresponding to experimental outcome $S\in\Lambda_{m,n}$. \ Let
$\mathcal{H}^{\prime}$\ be the distribution over $n\times n$\ matrices
$X=\sqrt{m}A_{S}$, where $S$\ was drawn from the \textsc{BosonSampling}%
\ distribution $\mathcal{D}_{A}$ (averaged over all $A$). \ Likewise, let
$\mathcal{N}^{\prime}$\ be\ the distribution over $X=\sqrt{m}A_{S}$\ where
$S$\ was drawn from the uniform\ distribution $\mathcal{U}$. \ (Note that by
symmetry, we could have just as well set $X=\sqrt{m}A_{S}$\ for any
\textit{fixed} $S\in\Lambda_{m,n}$---say, the lexicographically first $S$.)

Then based on the results in \cite{aark}, we might guess that $\mathcal{N}%
^{\prime}\approx\mathcal{N}$ and $\mathcal{H}^{\prime}\approx\mathcal{H}$:
that is, a random $n\times n$\ submatrix of a Haar-random $A$\ should look
close to an iid Gaussian matrix, while a random submatrix whose probability
was scaled by $\left\vert \operatorname*{Per}\right\vert ^{2}$ should look
close to a $\left\vert \operatorname*{Per}\right\vert ^{2}$-scaled Gaussian
matrix. \ Thus, line (\ref{line313}) strongly suggests that%
\[
\left\Vert \mathcal{H}^{\prime}-\mathcal{N}^{\prime}\right\Vert \approx
\left\Vert \mathcal{H}-\mathcal{N}\right\Vert =\Omega\left(  1\right)  .
\]
Or in words: it should be easy to tell (information-theoretically and with
constant bias) whether an $n\times n$\ submatrix $A_{S}$\ was uniformly
sampled or \textquotedblleft BosonSampled\textquotedblright\ from among
the\ $\binom{m}{n}$\ submatrices of a Haar-random $A$, \textit{just by
examining }$A_{S}$\textit{\ itself} (and not even knowing $S$ or the rest of
$A$). \ Intuitively, this is because a BosonSampled $A_{S}$\ will tend to
\textquotedblleft stick out\textquotedblright\ by having an unusually large
$\left\vert \operatorname*{Per}\left(  A_{S}\right)  \right\vert ^{2}$.

As a consequence, one also expects that, with high probability over $A$, the
\textsc{BosonSampling}\ distribution $\mathcal{D}_{A}$\ should have
$\Omega\left(  1\right)  $ variation distance\ from the uniform distribution
$\mathcal{U}$. \ For once we decide whether the submatrix $A_{S}$ was drawn
from $\mathcal{H}^{\prime}$\ or from $\mathcal{N}^{\prime}$, that should then
tell us whether $S$\ itself was drawn from $\mathcal{D}_{A}$\ or $\mathcal{U}$.

Unfortunately, there are two technical difficulties in formalizing the above.
\ The first is that, if $f_{P}$\ (i.e., the pdf of $\left\vert
\operatorname*{Per}\left(  X\right)  \right\vert ^{2}/n!$) were
\textit{extremely} heavy-tailed---and we can't currently prove that it
isn't---then $\mathcal{H}^{\prime}$\ wouldn't need to be close to
$\mathcal{H}$\ in variation distance. \ Intuitively, the rare matrices $X$
such that $f_{\mathcal{N}^{\prime}}\left(  X\right)  $\ was far from
$f_{\mathcal{N}}\left(  X\right)  $\ could \textquotedblleft acquire an
outsized importance\textquotedblright\ for the scaled distributions
$\mathcal{H}^{\prime}$\ and $\mathcal{H}$, if such $X$'s had enormous
permanents. \ The second difficulty is that, if we want to show that
$\mathcal{D}_{A}$\ is far from $\mathcal{U}$\ with $1-o\left(  1\right)
$\ probability over $A$ (rather than merely $\Omega\left(  1\right)
$\ probability), then we need to say something about the lack of strong
correlations between \textit{different} submatrices $A_{S},A_{S^{\prime}}$\ of
the same $A$. \ For otherwise, it might be that all the $A_{S}$'s that caused
$\mathcal{H}^{\prime}$\ and $\mathcal{N}^{\prime}$\ to have constant variation
distance from each other, were concentrated within (say) $50\%$\ of the $A$'s.

Fortunately, we can overcome these difficulties, with help from a result
proved in \cite{aark}.

\begin{theorem}
[\cite{aark}]\label{trunc}Let $m\geq\frac{n^{5}}{\delta}\log^{2}\frac
{n}{\delta}$\ for any $\delta>0$. \ Then $\left\Vert \mathcal{N}^{\prime
}-\mathcal{N}\right\Vert =O\left(  \delta\right)  $.
\end{theorem}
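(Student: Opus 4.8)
The plan is to reduce everything to a pointwise comparison of the density of $\mathcal{N}^{\prime}$ against the Gaussian density. By the row-exchangeability of the Haar measure I may take $\mathcal{N}^{\prime}$ to be the law of $X=\sqrt{m}\,A_{S_{0}}$, the rescaled top $n\times n$ block of a Haar-random column-orthonormal $A\in\mathbb{C}^{m\times n}$. The starting point is the closed form of this truncation density: it is a classical fact about truncations of Haar-random isometries (and can be derived from the Gaussian representation $A=G(G^{*}G)^{-1/2}$, $G\in\mathbb{C}^{m\times n}$ i.i.d.\ complex Gaussian, by integrating out the bottom $m-n$ rows) that $X$ has density
\[
f_{\mathcal{N}^{\prime}}(X)=\frac{\kappa}{\pi^{n^{2}}}\,\det\Big(I_{n}-\tfrac{1}{m}X^{*}X\Big)^{m-2n}\mathbf{1}\big[\,\|X\|_{\mathrm{op}}<\sqrt{m}\,\big]
\]
for a normalizing constant $\kappa=\kappa(m,n)>0$ (a genuine density since $m\geq 2n$). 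I would not compute $\kappa$; instead, writing $\rho(X):=f_{\mathcal{N}^{\prime}}(X)/f_{\mathcal{N}}(X)=\kappa\det(I_{n}-\tfrac1m X^{*}X)^{m-2n}e^{\|X\|_{F}^{2}}\mathbf{1}[\cdots]$, and using that $\mathcal{N}$ has full support so $\mathcal{N}^{\prime}\ll\mathcal{N}$, the total variation distance is exactly $\|\mathcal{N}^{\prime}-\mathcal{N}\|=\mathbb{E}_{\mathcal{N}}[(\rho(X)-1)_{+}]$, so the whole problem becomes controlling $\rho$ under the Gaussian.

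Two elementary bounds on $\log\det(I-\tfrac1m X^{*}X)=\sum_{j}\log(1-\sigma_{j}^{2}/m)$ (with $\sigma_{j}$ the singular values of $X$, all $<\sqrt m$ when $\rho\neq0$) do the work. From $\log(1-u)\leq-u$ one gets $\det(I-\tfrac1m X^{*}X)^{m-2n}\leq e^{-(m-2n)\|X\|_{F}^{2}/m}$, hence the pointwise upper bound $\rho(X)\leq\kappa\,e^{2n\|X\|_{F}^{2}/m}$ for all $X$. From $\log(1-u)\geq-u-u^{2}$ on $[0,\tfrac12]$, together with $\sum_{j}\sigma_{j}^{4}\leq\|X\|_{F}^{4}$, one gets the matching lower bound $\rho(X)\geq\kappa\,e^{-\|X\|_{F}^{4}/m}$ whenever $\|X\|_{F}^{2}\leq m/2$. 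The second bound pins down $\kappa$: choosing a cutoff scale $a:=Cn^{2}\log(n/\delta)$ for a suitable constant $C$ — so that $a\leq m/2$, $a^{2}/m=O(\delta)$, and a Chernoff bound gives $\mathcal{N}(\|X\|_{F}^{2}>a)=O(\delta)$, all three guaranteed by the hypothesis $m\geq n^{5}\delta^{-1}\log^{2}(n/\delta)$ — we obtain
\[
1=\int f_{\mathcal{N}^{\prime}}\;\geq\;\int_{\{\|X\|_{F}^{2}\leq a\}}\rho\,f_{\mathcal{N}}\;\geq\;\kappa\,e^{-a^{2}/m}\,\mathcal{N}\big(\|X\|_{F}^{2}\leq a\big)\;\geq\;\kappa\,(1-O(\delta)),
\]
so $\kappa\leq 1+O(\delta)$.

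To finish, combine the two ingredients. Since $\kappa\leq 1+O(\delta)$, one has $(\rho(X)-1)_{+}\leq(\kappa e^{2n\|X\|_{F}^{2}/m}-1)_{+}\leq(e^{2n\|X\|_{F}^{2}/m}-1)+O(\delta)\,e^{2n\|X\|_{F}^{2}/m}$, and therefore
\[
\|\mathcal{N}^{\prime}-\mathcal{N}\|\;\leq\;\mathbb{E}_{\mathcal{N}}\!\big[e^{2n\|X\|_{F}^{2}/m}-1\big]+O(\delta)\,\mathbb{E}_{\mathcal{N}}\!\big[e^{2n\|X\|_{F}^{2}/m}\big].
\]
Under $\mathcal{N}$, $\|X\|_{F}^{2}$ is a sum of $n^{2}$ i.i.d.\ $\mathrm{Exp}(1)$ variables, so $\mathbb{E}_{\mathcal{N}}[e^{t\|X\|_{F}^{2}}]=(1-t)^{-n^{2}}$ for $t<1$; with $t=2n/m$ this is $1+O(n^{3}/m)=1+O(\delta)$ by the hypothesis. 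Hence $\|\mathcal{N}^{\prime}-\mathcal{N}\|=O(\delta)$.

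The step I expect to be the main obstacle is the very first one — obtaining (or citing in exactly the normalization one wants) the closed form of the truncation density and confirming that under the hypotheses it is a genuine absolutely continuous density; after that the argument is a short chain of one-line inequalities. The one place where care is genuinely needed is the choice of the cutoff scale $a$: it must be simultaneously small enough that $a^{2}/m=O(\delta)$ and large enough that the Gaussian Frobenius norm exceeds it with probability only $O(\delta)$, and it is precisely the $\delta^{-1}\log^{2}(n/\delta)$ factor in the hypothesis on $m$ that makes both constraints satisfiable at once. (If one wanted to avoid the density formula altogether, one could instead work directly with $A=G(G^{*}G)^{-1/2}$ and condition on the bottom $m-n$ rows, but the $G_{1}$-dependence of $(G^{*}G)^{-1/2}$ makes that route noticeably messier and, via Pinsker, yields a weaker polynomial in $n$.)
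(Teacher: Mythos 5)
Your proof is correct. Note that the present paper does not actually prove Theorem \ref{trunc}: it is imported from \cite{aark}, where the argument likewise begins from the closed-form truncation density $f_{\mathcal{N}^{\prime}}(X)\propto\det\left(I-\tfrac{1}{m}X^{\ast}X\right)^{m-2n}$ and compares it pointwise to the Gaussian density, so your route is essentially the standard one. The substantive difference is that \cite{aark} establishes the \emph{uniform} pointwise bound $f_{\mathcal{N}^{\prime}}(X)\leq\left(1+O\left(\delta\right)\right)f_{\mathcal{N}}(X)$ for all $X$ (this is exactly Theorem \ref{haarhide}, quoted in Appendix \ref{FIXED}), from which the variation-distance statement follows in one line; your upper bound $\rho(X)\leq\kappa e^{2n\|X\|_{F}^{2}/m}$ is non-uniform, and you recover $O(\delta)$ only after integrating against the Gaussian moment generating function. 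In fact your estimate upgrades to the uniform one with no extra machinery: instead of using $\log(1-u)\leq-u$, maximize each term $(m-2n)\log(1-\sigma_{j}^{2}/m)+\sigma_{j}^{2}$ over $\sigma_{j}^{2}\in[0,m)$ --- the maximum occurs at $\sigma_{j}^{2}=2n$ and is at most $4n^{2}/m$ --- giving $\rho(X)\leq\kappa e^{4n^{3}/m}=1+O(\delta)$ everywhere, which shortens your endgame and recovers Theorem \ref{haarhide} as a byproduct. (Your argument as written also only needs $m\gtrsim n^{4}\log^{2}(n/\delta)/\delta$, slightly weaker than the stated hypothesis, consistent with the paper's remark that the exponent $5$ is an artifact.)
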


As discussed in \cite{aark}, we are confident that the $n^{5}$\ in Theorem
\ref{trunc}\ is purely an artifact of the proof, and that it can be improved
to $n^{2}$. \ (Indeed, Jiang \cite{jiang}\ \textit{did} prove an analogue of
Theorem \ref{trunc} assuming only $m\gg n^{2}$, except for real orthogonal
matrices rather than unitary matrices, and without the explicit dependence on
$\delta$.)

In any case, by combining Theorem \ref{trunc}\ with the second part of Lemma
\ref{313}, we immediately obtain the following.

\begin{corollary}
\label{615delta}Assume $m\geq\frac{n^{5}}{\delta}\log^{2}\frac{n}{\delta}$.
\ Then%
\[
\Pr_{X\sim\mathcal{N}^{\prime}}\left[  \left\vert P\left(  X\right)
-1\right\vert \geq\frac{1}{2}\right]  \geq0.615-O\left(  \delta\right)  .
\]

\end{corollary}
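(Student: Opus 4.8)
The plan is to combine the two ingredients already established: the second part of Lemma~\ref{313}, which controls the event $\{|P(X)-1|\ge 1/2\}$ under the pure iid complex Gaussian measure $\mathcal{N}$, and Theorem~\ref{trunc}, which says that once $m\ge\frac{n^5}{\delta}\log^2\frac{n}{\delta}$ the scaled-submatrix measure $\mathcal{N}'$ is within $O(\delta)$ of $\mathcal{N}$ in variation distance. The only observation needed is that transferring an $\Omega(1)$ statement from $\mathcal{N}$ to $\mathcal{N}'$ costs at most $\|\mathcal{N}'-\mathcal{N}\|$, since variation distance controls the discrepancy in the probability of any single fixed event.

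Concretely, first I would fix the Borel-measurable event $E=\{X:|P(X)-1|\ge\tfrac12\}$ (measurable because $P$ is continuous in the entries of $X$). By the characterization of variation distance recalled in Section~\ref{PRELIM}, namely $\|\mathcal{D}_1-\mathcal{D}_2\|=\sup_E|\Pr_{\mathcal{D}_1}[E]-\Pr_{\mathcal{D}_2}[E]|$, we obtain
\[
\Pr_{X\sim\mathcal{N}'}[E]\;\ge\;\Pr_{X\sim\mathcal{N}}[E]-\|\mathcal{N}'-\mathcal{N}\|.
\]
Then I would invoke Theorem~\ref{trunc} to bound $\|\mathcal{N}'-\mathcal{N}\|=O(\delta)$ (valid precisely under the stated hypothesis on $m$), and the second part of Lemma~\ref{313} to bound $\Pr_{X\sim\mathcal{N}}[E]>0.615$. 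Chaining these inequalities gives $\Pr_{X\sim\mathcal{N}'}[E]\ge 0.615-O(\delta)$, which is exactly the assertion of the corollary.

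There is essentially no obstacle here: the substance lies in Lemma~\ref{313} and Theorem~\ref{trunc}, both of which we are entitled to assume. The one point worth stating with care is that $\mathcal{N}'$ is by definition the law of $\sqrt{m}\,A_S$ with $S$ drawn from $\mathcal{U}$ and averaged over Haar-random $A$ --- equivalently, by the symmetry remark preceding Theorem~\ref{trunc}, the law of $\sqrt{m}\,A_S$ for any fixed $S\in\Lambda_{m,n}$ --- so Theorem~\ref{trunc} applies verbatim. In particular, note that this corollary needs no hypothesis on the tail of $f_P$; that subtlety enters only later, when one tries to pass from $\mathcal{N}',\mathcal{H}'$ back to $\mathcal{N},\mathcal{H}$ for the permanent-scaled distributions.
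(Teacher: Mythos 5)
Your proposal is correct and is exactly the argument the paper intends: the paper derives Corollary \ref{615delta} by combining Theorem \ref{trunc} with the second part of Lemma \ref{313}, using precisely the fact that total variation distance bounds the discrepancy in the probability of the fixed event $\{|P(X)-1|\ge\tfrac12\}$. Your write-up simply makes explicit the one-line chaining of inequalities that the paper labels ``immediate.''
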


Given a \textsc{BosonSampling} matrix $A\in\mathbb{C}^{m\times n}$, let
$\mathcal{B}_{A}$\ be the (discrete) distribution over matrices $X\in
\mathbb{C}^{n\times n}$\ obtained by first drawing $S$\ from $\mathcal{D}_{A}%
$, and then setting $X:=\sqrt{m}A_{S}$. \ Then next we need a lemma that
upper-bounds the probability, over $A$, that $\mathcal{B}_{A}$\ looks very
different from $\mathcal{H}^{\prime}$\ with respect to a certain statistical test.

\begin{lemma}
\label{walem}Let $A\in\mathbb{C}^{m\times n}$\ be a Haar-random
\textsc{BosonSampling} matrix with $m\geq n^{5.1}/\delta$. \ Also, let%
\[
W_{A}=\Pr_{S\in\Lambda_{m,n}}\left[  \left\vert m^{n}P\left(  A_{S}\right)
-1\right\vert \geq\frac{1}{2}\right]
\]
(recalling that $P\left(  X\right)  =\left\vert \operatorname*{Per}\left(
X\right)  \right\vert ^{2}/n!$). \ Then%
\[
\Pr_{A}\left[  W_{A}\leq\frac{1}{2}\right]  =O\left(  \frac{1}{n}\right)  .
\]
Or rewriting, we have%
\[
\Pr_{S\in\Lambda_{m,n}}\left[  \left\vert \frac{m^{n}}{n!}\left\vert
\operatorname*{Per}\left(  A_{S}\right)  \right\vert ^{2}-1\right\vert
\geq\frac{1}{2}\right]  \geq\frac{1}{2}%
\]
with probability $1-O\left(  \delta\right)  $\ over $A$.
\end{lemma}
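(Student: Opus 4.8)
The plan is to prove this by a second-moment argument over the Haar-random $A$: Corollary \ref{615delta} pins down the first moment of $W_{A}$, a ``pairs'' form of Theorem \ref{trunc} controls the second moment, and Chebyshev finishes. First I would record the mean. Since $m^{n}P\left(A_{S}\right)=P\!\left(\sqrt{m}A_{S}\right)$, and since $\mathcal{N}^{\prime}$ is by definition the law of $\sqrt{m}A_{S}$ when $A$ is Haar-random and $S\sim\mathcal{U}$,
\[
\operatorname*{E}_{A}\left[W_{A}\right]=\Pr_{X\sim\mathcal{N}^{\prime}}\left[\left\vert P\left(X\right)-1\right\vert \geq\tfrac{1}{2}\right]\geq0.615-O\left(\delta\right),
\]
by Corollary \ref{615delta} (for large $n$ the hypothesis $m\geq n^{5.1}/\delta$ comfortably implies the $m\geq\frac{n^{5}}{\delta}\log^{2}\frac{n}{\delta}$ needed there). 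So once $\delta$ is below a suitable absolute constant, $\operatorname{E}_{A}\left[W_{A}\right]$ exceeds $\tfrac{1}{2}$ by $\Omega\left(1\right)$, and everything reduces to showing $W_{A}$ concentrates around its mean.

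Next I would expand the second moment. Writing $g\left(X\right)$ for the indicator of the event $\left\vert P\left(X\right)-1\right\vert \geq\tfrac{1}{2}$, we have $W_{A}=\operatorname{E}_{S\sim\mathcal{U}}\left[g\!\left(\sqrt{m}A_{S}\right)\right]$, hence
\[
\operatorname*{E}_{A}\left[W_{A}^{2}\right]=\operatorname*{E}_{S,S^{\prime}}\left[\operatorname*{E}_{A}\left[g\!\left(\sqrt{m}A_{S}\right)g\!\left(\sqrt{m}A_{S^{\prime}}\right)\right]\right],
\]
with $S,S^{\prime}$ independent uniform $n$-subsets of $[m]$. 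I split this according to whether $S\cap S^{\prime}=\emptyset$. A union bound gives $\Pr_{S,S^{\prime}}\left[S\cap S^{\prime}\neq\emptyset\right]\leq n^{2}/m\leq\delta/n^{3.1}$; on that rare event I bound the integrand trivially by $1$, so it contributes only $O\left(\delta/n^{3.1}\right)$ to $\operatorname{E}_{A}\left[W_{A}^{2}\right]$.

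The crux is the disjoint case. There I need that, for every \emph{fixed} pair of disjoint $n$-subsets $S,S^{\prime}$, the joint law of $\left(\sqrt{m}A_{S},\sqrt{m}A_{S^{\prime}}\right)$ is within $O\left(\delta\right)$ in variation distance of $\mathcal{N}\times\mathcal{N}$, i.e.\ of two \emph{independent} iid complex Gaussian $n\times n$ matrices. This is really Theorem \ref{trunc} again: realize $A$ as the first $n$ columns of a Haar-random $m\times2n$ column-orthonormal $B$ (legitimate since $m\gg 2n$); by Theorem \ref{trunc} with ``$n$'' played by $2n$, the rescaled $2n\times2n$ submatrix $\sqrt{m}B_{S\cup S^{\prime}}$ is $O\left(\delta\right)$-close to $\mathcal{N}^{2n\times2n}$ --- and $m\geq n^{5.1}/\delta$ is just enough to beat the requirement $m\geq\frac{\left(2n\right)^{5}}{\delta}\log^{2}\frac{2n}{\delta}$ for large $n$, which is exactly what the extra $n^{0.1}$ over Theorem \ref{trunc} buys. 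Restricting a distribution to its first $n$ columns and then splitting its $2n$ rows into the blocks indexed by $S$ and $S^{\prime}$ can only decrease variation distance, giving the claim. Consequently, for disjoint $S,S^{\prime}$,
\[
\operatorname*{E}_{A}\left[g\!\left(\sqrt{m}A_{S}\right)g\!\left(\sqrt{m}A_{S^{\prime}}\right)\right]=\left(\Pr_{\mathcal{N}}\left[\left\vert P-1\right\vert \geq\tfrac{1}{2}\right]\right)^{2}+O\left(\delta\right),
\]
while the single-submatrix Theorem \ref{trunc} gives $\Pr_{\mathcal{N}}\left[\left\vert P-1\right\vert \geq\tfrac{1}{2}\right]=\operatorname{E}_{A}\left[W_{A}\right]+O\left(\delta\right)$. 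Adding the disjoint and non-disjoint contributions gives $\operatorname{E}_{A}\left[W_{A}^{2}\right]=\operatorname{E}_{A}\left[W_{A}\right]^{2}+O\left(\delta\right)$, i.e.\ $\operatorname{Var}_{A}\left[W_{A}\right]=O\left(\delta\right)$.

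Finally, since $\operatorname{E}_{A}\left[W_{A}\right]-\tfrac{1}{2}=\Omega\left(1\right)$, Chebyshev's inequality gives $\Pr_{A}\left[W_{A}\leq\tfrac{1}{2}\right]\leq O\!\left(\operatorname{Var}_{A}\left[W_{A}\right]\right)=O\left(\delta\right)$, which is the ``rewritten'' form of the lemma; instantiating at $\delta\asymp1/n$ (so $m\gtrsim n^{6.1}$) yields the stated $O\left(1/n\right)$, and the other display is just $m^{n}P\left(A_{S}\right)=\frac{m^{n}}{n!}\left\vert \operatorname*{Per}\left(A_{S}\right)\right\vert ^{2}$ together with $W_{A}>\tfrac{1}{2}$. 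The step I expect to be the real obstacle is the disjoint case: one must be sure that Theorem \ref{trunc}'s variation bound genuinely survives passing from a square $n\times n$ truncation to a $2n\times n$ one, and that $m\geq n^{5.1}/\delta$ really does absorb both the constant $2^{5}$ and the $\log^{2}$ factor from the invoked theorem --- precisely the role of the harmless-looking extra $n^{0.1}$. A secondary point of care is keeping all these errors additive, so that plain Chebyshev suffices given the $\Omega\left(1\right)$ gap from Corollary \ref{615delta}, rather than having to run a higher-moment argument.
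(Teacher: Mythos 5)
Your proposal is correct and follows essentially the same route as the paper's proof: the first moment of $W_{A}$ via Corollary \ref{615delta}, the variance bound by splitting pairs $S,S'$ into intersecting ones (handled by the union bound $n^{2}/m$) and disjoint ones (handled by viewing $A_{S\cup S'}$ as sitting inside a $2n\times 2n$ submatrix of a Haar-random $m\times 2n$ matrix and invoking Theorem \ref{trunc} with $n':=2n$), followed by Chebyshev. The only addition is your explicit care about the $2^{5}\log^{2}$ slack absorbed by the $n^{0.1}$, and your reading of the $O(1/n)$ versus $O(\delta)$ discrepancy in the lemma's statement, both of which are reasonable glosses on what the paper leaves implicit.
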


\begin{proof}
For all $S\in\Lambda_{m,n}$, let $w_{S}:=1$\ if%
\[
\left\vert m^{n}P\left(  A_{S}\right)  -1\right\vert \geq\frac{1}{2}%
\]
and $w_{S}:=0$\ otherwise.\ \ Then clearly%
\[
W_{A}=\frac{1}{\left\vert \Lambda_{m,n}\right\vert }\sum_{S\in\Lambda_{m,n}%
}w_{S}.
\]
Hence%
\begin{align*}
\operatorname*{E}_{A}\left[  W_{A}\right]   &  =\frac{1}{\left\vert
\Lambda_{m,n}\right\vert }\sum_{S\in\Lambda_{m,n}}\operatorname*{E}_{A}\left[
w_{S}\right] \\
&  =\frac{1}{\left\vert \Lambda_{m,n}\right\vert }\sum_{S\in\Lambda_{m,n}}%
\Pr_{X\sim\mathcal{N}^{\prime}}\left[  \left\vert P\left(  X\right)
-1\right\vert \geq\frac{1}{2}\right] \\
&  \geq0.615-O\left(  \delta\right)  ,
\end{align*}
where the second line uses the definition of $\mathcal{N}^{\prime}$\ and the
third line uses Corollary \ref{615delta}.

Now consider \textit{two} experimental outcomes, $S,T\in\Lambda_{m,n}$.
\ Notice that, if $S$\ and $T$ are disjoint ($S\cap T=\varnothing$), then
$A_{S\cup T}$\ is simply a $2n\times n$\ submatrix of $A$. \ So, if we think
of \thinspace$A$ as an $m\times n$\ submatrix of a Haar-random $m\times
2n$\ matrix $A^{\prime}$, then $A_{S\cup T}$\ is a submatrix of a $2n\times
2n$\ submatrix of $A^{\prime}$. \ But this means that, if we set $n^{\prime
}:=2n$\ (which has no effect on the asymptotics), then we can apply Theorem
\ref{trunc}\ to $A_{S\cup T}$\ exactly as if it were an $n\times
n$\ submatrix. \ So in particular, $A_{S\cup T}$\ will be $O\left(
\delta\right)  $-close in variation distance to a $2n\times n$\ matrix of iid
Gaussians with mean $0$ and variance $1/m$---or in other words, to two
independent samples from $\mathcal{N}^{\prime}$.

We can use the above considerations to upper-bound the variance of $W_{A}$:%
\begin{align*}
\operatorname*{Var}_{A}\left[  W_{A}\right]   &  =\frac{1}{\left\vert
\Lambda_{m,n}\right\vert ^{2}}\sum_{S,T\in\Lambda_{m,n}}\left(
\operatorname*{E}_{A}\left[  w_{S}w_{T}\right]  -\operatorname*{E}_{A}\left[
w_{S}\right]  \operatorname*{E}_{A}\left[  w_{T}\right]  \right) \\
&  \leq O\left(  \delta\right)  +\frac{1}{\left\vert \Lambda_{m,n}\right\vert
^{2}}\sum_{S,T\in\Lambda_{m,n}:S\cap T\neq\varnothing}1\\
&  \leq\frac{n^{2}}{m}+O\left(  \delta\right)  .
\end{align*}
Here we used the facts that $\operatorname*{E}_{A}\left[  w_{S}w_{T}\right]
\leq1$, and that $\Pr_{S,T\in\Lambda_{m,n}}\left[  S\cap T\neq\varnothing
\right]  \leq n^{2}/m$\ by the union bound.

Combining and using Chebyshev's inequality,%
\begin{align*}
\Pr_{A}\left[  W_{A}<\frac{1}{2}\right]   &  <\frac{\operatorname*{Var}%
_{A}\left[  W_{A}\right]  }{\left(  \operatorname*{E}_{A}\left[  W_{A}\right]
-1/2\right)  ^{2}}\\
&  \leq\frac{n^{2}/m+O\left(  \delta\right)  }{\left(  0.615-O\left(
\delta\right)  -1/2\right)  ^{2}}\\
&  =O\left(  \frac{n^{2}}{m}+\delta\right) \\
&  =O\left(  \delta\right)  ,
\end{align*}
where the last line used that $m\geq n^{5.1}/\delta$.
\end{proof}

Once again, we are confident that the condition $m\geq n^{5.1}/\delta$\ in
Lemma \ref{walem} is not tight; indeed, one should be able to get something
whenever $m\geq n^{2}/\delta$.

The final ingredient in the proof is a simple fact about variation distance.

\begin{lemma}
\label{var4}Let $\mathcal{U}$\ be the uniform distribution over a finite set
$X$, and let $\mathcal{D}=\left\{  p_{x}\right\}  _{x\in X}$\ be some other
distribution. \ Also, let $Z\subseteq X$\ satisfy $\left\vert Z\right\vert
\geq\left(  1-\alpha\right)  \left\vert X\right\vert $, and let $M\in\left[
\left(  1-\beta\right)  \left\vert X\right\vert ,\left\vert X\right\vert
\right]  $. \ Then%
\[
\left\Vert \mathcal{D}-\mathcal{U}\right\Vert \geq\frac{1-\alpha}{4}\Pr_{x\in
Z}\left[  \left\vert Mp_{x}-1\right\vert \geq\frac{1}{2}\right]  -\frac{\beta
}{2-2\beta}.
\]

\end{lemma}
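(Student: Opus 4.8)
The plan is to compare $\mathcal{D}$ to $\mathcal{U}$ by routing through the reference value $1/M$ rather than $1/N$ (writing $N:=|X|$), since the hypothesis is phrased in terms of $Mp_x$ being bounded away from $1$. First I would write $\|\mathcal{D}-\mathcal{U}\|=\tfrac12\sum_{x\in X}|p_x-\tfrac1N|$ and apply the triangle inequality termwise: $|p_x-\tfrac1N|\ge|p_x-\tfrac1M|-|\tfrac1M-\tfrac1N|$. Summing over all $N$ points of $X$ gives $\sum_x|p_x-\tfrac1N|\ge\sum_x|p_x-\tfrac1M|-N(\tfrac1M-\tfrac1N)=\sum_x|p_x-\tfrac1M|-(\tfrac NM-1)$, where I used $M\le N$ so that $\tfrac1M\ge\tfrac1N$. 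The hypothesis $M\ge(1-\beta)N$ then bounds the error term: $\tfrac NM-1\le\tfrac1{1-\beta}-1=\tfrac\beta{1-\beta}$.

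Next I would lower-bound $\sum_x|p_x-\tfrac1M|$ using only the ``bad'' outcomes. Let $E:=\{x\in Z:|Mp_x-1|\ge\tfrac12\}$. For $x\in E$ we have $|p_x-\tfrac1M|=|Mp_x-1|/M\ge\tfrac1{2M}\ge\tfrac1{2N}$ (again using $M\le N$), so $\sum_x|p_x-\tfrac1M|\ge\sum_{x\in E}|p_x-\tfrac1M|\ge|E|/(2N)$. Since $|E|=|Z|\cdot\Pr_{x\in Z}[|Mp_x-1|\ge\tfrac12]$ and $|Z|\ge(1-\alpha)N$, this quantity is at least $\tfrac{1-\alpha}{2}\Pr_{x\in Z}[|Mp_x-1|\ge\tfrac12]$.

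Combining the two bounds and dividing by $2$ yields exactly
\[
\left\Vert \mathcal{D}-\mathcal{U}\right\Vert \ge\frac{1-\alpha}{4}\Pr_{x\in Z}\left[\left\vert Mp_x-1\right\vert\ge\frac12\right]-\frac{\beta}{2-2\beta},
\]
using $\tfrac{\beta}{2(1-\beta)}=\tfrac{\beta}{2-2\beta}$. There is no serious obstacle here; the only thing to get right is the bookkeeping on the two uses of $M\le N$ and the single use of $M\ge(1-\beta)N$ --- in particular, noticing that passing through $1/M$, rather than comparing $p_x$ to $1/N$ directly, is what lets the ``$Mp_x$ near $1$'' hypothesis be applied cleanly, with the discrepancy between $1/M$ and $1/N$ absorbed into the additive $\tfrac{\beta}{2-2\beta}$ loss.
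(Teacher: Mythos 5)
Your proof is correct and follows essentially the same route as the paper's: both pass from the reference value $1/|X|$ to $1/M$ via the termwise triangle inequality, lower-bound the resulting sum by restricting to the points of $Z$ with $|Mp_x-1|\geq\tfrac12$, and absorb the $1/M$ versus $1/|X|$ discrepancy into the additive $\tfrac{\beta}{2-2\beta}$ term using $M\geq(1-\beta)|X|$. The only cosmetic difference is that the paper restricts the sum to $Z$ before applying the triangle inequality, whereas you keep all of $X$ and discard the complement of the bad set later; the bookkeeping works out identically.
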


\begin{proof}
We have%
\begin{align*}
\left\Vert \mathcal{D}-\mathcal{U}\right\Vert  &  =\frac{1}{2}\sum_{x\in
X}\left\vert p_{x}-\frac{1}{\left\vert X\right\vert }\right\vert \\
&  \geq\frac{1}{2}\sum_{x\in Z}\left\vert p_{x}-\frac{1}{\left\vert
X\right\vert }\right\vert \\
&  \geq\frac{1}{2}\sum_{x\in Z}\left(  \left\vert p_{x}-\frac{1}{M}\right\vert
-\left(  \frac{1}{M}-\frac{1}{\left\vert X\right\vert }\right)  \right) \\
&  =\left(  \frac{1}{2M}\sum_{x\in Z}\left\vert Mp_{x}-1\right\vert \right)
-\frac{\beta}{1-\beta}\left(  \frac{\left\vert Z\right\vert }{2\left\vert
X\right\vert }\right) \\
&  \geq\frac{\left\vert Z\right\vert }{4M}\Pr_{x\in Z}\left[  \left\vert
Mp_{x}-1\right\vert \geq\frac{1}{2}\right]  -\frac{\beta}{1-\beta}\left(
\frac{\left\vert Z\right\vert }{2\left\vert X\right\vert }\right) \\
&  \geq\frac{1-\alpha}{4}\Pr_{x\in Z}\left[  \left\vert Mp_{x}-1\right\vert
\geq\frac{1}{2}\right]  -\frac{\beta}{2-2\beta}%
\end{align*}

\end{proof}

Combining Lemmas \ref{walem}\ and \ref{var4}\ now yields the main result of
the section.

\begin{theorem}
\label{vardist}Let $A\in\mathbb{C}^{m\times n}$ be a Haar-random
\textsc{BosonSampling}\ matrix with $m\geq n^{5.1}/\delta$. \ Let
$\mathcal{U}$\ be the uniform distribution over $\Lambda_{m,n}$, and let
$\mathcal{D}_{A}$\ be the \textsc{BosonSampling}\ distribution corresponding
to $A$. \ Then for sufficiently large $n$ and with probability $1-O\left(
\delta\right)  $\ over $A$,%
\[
\left\Vert \mathcal{D}_{A}-\mathcal{U}\right\Vert \geq\frac{1}{9}.
\]

\end{theorem}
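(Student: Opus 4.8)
The plan is to combine the two preceding lemmas, Lemma \ref{walem} and Lemma \ref{var4}, with the collision-free bound (\ref{cfree}), and to verify that the resulting constant beats $1/9$ for $n$ large enough. The key observation is that Lemma \ref{var4} is stated for an abstract distribution $\mathcal{D}$ against the uniform distribution $\mathcal{U}$ over a finite set, and we want to instantiate it with $X = \Phi_{m,n}$, $\mathcal{D} = \mathcal{D}_A$, $Z = \Lambda_{m,n}$ (the collision-free outcomes), and an appropriate normalizing factor $M$. For $S \in \Lambda_{m,n}$ we have $\Pr_{\mathcal{D}_A}[S] = |\operatorname{Per}(A_S)|^2 = P(A_S)\, n!$, so if we set $M := m^n/n!$ then $M\, p_S = m^n P(A_S)$, which is exactly the quantity appearing in the definition of $W_A$ in Lemma \ref{walem}. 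So the event $\{|M p_S - 1| \geq 1/2\}$ is precisely the event counted by $W_A$, and $\Pr_{S \in \Lambda_{m,n}}[|M p_S - 1| \geq 1/2] = W_A$.

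Next I would check the two hypotheses of Lemma \ref{var4}. For $\alpha$: by (\ref{cfree}), $|\Lambda_{m,n}| \geq (1 - n^2/m)|\Phi_{m,n}|$, so we may take $\alpha = n^2/m = O(\delta)$ given $m \geq n^{5.1}/\delta$. For $\beta$: we need $M = m^n/n! \in [(1-\beta)|\Phi_{m,n}|, |\Phi_{m,n}|]$. Here $|\Phi_{m,n}| = \binom{m+n-1}{n}$, and one checks that $\binom{m+n-1}{n} = \frac{m^n}{n!}(1 + O(n^2/m))$, so $m^n/n!$ and $|\Phi_{m,n}|$ agree up to a multiplicative $1 + O(n^2/m)$ factor; in particular $\beta = O(n^2/m) = O(\delta)$ works (if $m^n/n!$ slightly exceeds $|\Phi_{m,n}|$ we can absorb the tiny discrepancy by enlarging the set or shrinking $M$ negligibly, at the cost of only $O(\delta)$ in the final bound — this is the one mildly fiddly point). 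Plugging into Lemma \ref{var4}:
\[
\|\mathcal{D}_A - \mathcal{U}\| \geq \frac{1 - O(\delta)}{4}\, W_A - O(\delta).
\]
By Lemma \ref{walem}, with probability $1 - O(\delta)$ over $A$ we have $W_A \geq 1/2$, hence $\|\mathcal{D}_A - \mathcal{U}\| \geq \frac{1}{8}(1 - O(\delta)) - O(\delta)$, which is at least $1/9$ once $n$ is large enough that the $O(\delta)$ terms are small (and here one should be slightly careful to state the result with $\delta$ itself below a fixed threshold, or to note that the hidden constants work out). For the case where $\mathcal{U}$ is instead the uniform distribution over \emph{all} of $\Phi_{m,n}$, one applies Lemma \ref{var4} directly with $Z = \Lambda_{m,n}$ and $X = \Phi_{m,n}$; for the case where $\mathcal{U}$ is uniform over $\Lambda_{m,n}$ only, one takes $X = Z = \Lambda_{m,n}$ so $\alpha = 0$ and argues identically with $M = m^n/n!$, again picking up only $O(n^2/m)$ error from comparing $m^n/n!$ to $|\Lambda_{m,n}| = \binom{m}{n}$.

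I do not expect a genuine obstacle here: all the substantive work — establishing that a Haar-random submatrix $\sqrt{m}A_S$ is close to iid Gaussian (Theorem \ref{trunc}), that $|\operatorname{Per}|^2/n!$ is a mixture of exponentials far from the point mass at $1$ (Lemmas \ref{expmix} and \ref{313}), and the second-moment/Chebyshev argument controlling the fluctuation of $W_A$ over the choice of $A$ (Lemma \ref{walem}) — has already been done. The only thing to be careful about is bookkeeping of the constants: Lemma \ref{var4} loses a factor of $4$, Lemma \ref{walem} gives only $W_A \geq 1/2$ (not the $0.615$ one gets in expectation), and there are several additive $O(\delta)$ and $O(n^2/m)$ slacks, so one must confirm $\frac{1}{4} \cdot \frac{1}{2} = \frac{1}{8} > \frac{1}{9}$ with enough room for the error terms — which it is, comfortably, for large $n$ and small $\delta$.
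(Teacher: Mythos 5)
Your proposal is correct and follows essentially the same route as the paper: instantiate Lemma \ref{var4} with $X=\Phi_{m,n}$, $Z=\Lambda_{m,n}$, $M=m^{n}/n!$ (so that $Mp_{S}=m^{n}P(A_{S})$ matches the event in Lemma \ref{walem}), take $\alpha=\beta=n^{2}/m=O(\delta)$, and conclude $\|\mathcal{D}_{A}-\mathcal{U}\|\geq\frac{1}{8}-O(n^{2}/m)\geq\frac{1}{9}$ for large $n$. Your extra remark distinguishing the case where $\mathcal{U}$ is uniform over $\Lambda_{m,n}$ from the case where it is uniform over all of $\Phi_{m,n}$ is in fact slightly more careful than the paper's own write-up, which silently applies the lemma with $X=\Phi_{m,n}$.
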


\begin{proof}
In Lemma \ref{var4}, set%
\[
X=\Phi_{m,n},~~~~Z=\Lambda_{m,n},~~~~M=\frac{m^{n}}{n!}.
\]
Then $\left\vert X\right\vert =\binom{m+n-1}{n}$ and $\left\vert Z\right\vert
=\binom{m}{n}$, which means that we can set $\alpha:=n^{2}/m$\ and
$\beta:=n^{2}/m$. \ So using Lemma \ref{var4} and then Lemma \ref{walem}, we
find that, with probability $1-O\left(  \delta\right)  $\ over $A$:%
\begin{align*}
\left\Vert \mathcal{D}_{A}-\mathcal{U}\right\Vert  &  \geq\frac{1-\alpha}%
{4}\Pr_{x\in Z}\left[  \left\vert Mp_{x}-1\right\vert \geq\frac{1}{2}\right]
-\frac{\beta}{2-2\beta}\\
&  =\frac{1-n^{2}/m}{4}\Pr_{S\in\Lambda_{m,n}}\left[  \left\vert \frac{m^{n}%
}{n!}\left\vert \operatorname*{Per}\left(  A_{S}\right)  \right\vert
^{2}-1\right\vert \geq\frac{1}{2}\right]  -\frac{n^{2}/m}{2-2n^{2}/m}\\
&  \geq\frac{1-n^{2}/m}{8}-\frac{n^{2}/m}{2-2n^{2}/m}\\
&  \geq\frac{1}{8}-O\left(  \frac{n^{2}}{m}\right) \\
&  \geq\frac{1}{9}%
\end{align*}
for sufficiently large $n$.
\end{proof}

\subsection{Weakly-Symmetric Verifiers\label{WEAKLY}}

In this subsection, we use Theorem \ref{vardist}\ to justify a claim made in
Section \ref{SYMMETRIC}: namely, that it's possible to distinguish a generic
\textsc{BosonSampling}\ distribution $\mathcal{D}_{A}$ from the uniform
distribution $\mathcal{U}$, even if we restrict ourselves
to\ \textquotedblleft weakly-symmetric\textquotedblright\ verifiers (which
don't know the labels of the input and output modes).

The first step is to observe the following corollary of Theorem \ref{vardist}.

\begin{corollary}
\label{ampcor}Let $A\in\mathbb{C}^{m\times n}$ be a Haar-random
\textsc{BosonSampling}\ matrix with $m\geq n^{5.1}/\delta$. \ Then there
exists a proper verifier $V_{A}$\ (not necessarily computationally efficient)
and a constant $c>1$\ such that, with probability $1-O\left(  \delta\right)
$\ over $A$, the following holds:%
\begin{align}
\Pr_{S_{1},\ldots,S_{k}\sim\mathcal{D}_{A}}\left[  V_{A}\left(  S_{1}%
,\ldots,S_{k}\right)  \text{ accepts}\right]   &  \geq1-\frac{1}{c^{k}%
},\label{ck1}\\
\Pr_{S_{1},\ldots,S_{k}\sim\mathcal{U}}\left[  V_{A}\left(  S_{1},\ldots
,S_{k}\right)  \text{ accepts}\right]   &  \leq\frac{1}{c^{k}}. \label{ck2}%
\end{align}
Moreover, if (\ref{ck1})\ and (\ref{ck2}) hold for a given $A$, then they also
hold for any $P_{\sigma}AP_{\tau}$, where $P_{\sigma}$\ and $P_{\tau}$\ are
the permutation matrices corresponding to $\sigma\in S_{m}$\ and $\tau\in
S_{n}$\ respectively.
\end{corollary}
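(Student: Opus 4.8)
The plan is to build $V_A$ from the single-sample test implicit in Theorem \ref{vardist}, then amplify by a Chernoff-type argument, and finally check that the construction respects row/column permutations of $A$. By Theorem \ref{vardist}, with probability $1-O(\delta)$ over $A$ we have $\|\mathcal D_A-\mathcal U\|\ge 1/9$. For any two distributions at constant variation distance, there is a set $T\subseteq\Phi_{m,n}$ (namely $T=\{S:\Pr_{\mathcal D_A}[S]>\Pr_{\mathcal U}[S]\}$) with $\Pr_{\mathcal D_A}[S\in T]-\Pr_{\mathcal U}[S\in T]=\|\mathcal D_A-\mathcal U\|\ge 1/9$. Write $p:=\Pr_{\mathcal D_A}[S\in T]$ and $q:=\Pr_{\mathcal U}[S\in T]$, so $p-q\ge 1/9$. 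Given $k$ samples $S_1,\dots,S_k$, let $V_A$ count how many lie in $T$ and accept iff that count exceeds the threshold $k(p+q)/2$. Since the indicator variables $\mathbf 1[S_i\in T]$ are iid Bernoulli (with mean $p$ under $\mathcal D_A$ and mean $q$ under $\mathcal U$), a standard Chernoff/Hoeffding bound gives that both the false-reject probability under $\mathcal D_A$ and the false-accept probability under $\mathcal U$ are at most $e^{-k(p-q)^2/2}\le e^{-k/162}$, so (\ref{ck1}) and (\ref{ck2}) hold with $c:=e^{1/162}>1$.

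For properness: the same argument applies verbatim if we intersect $T$ with $\Lambda_{m,n}$ and restrict both distributions to collision-free outcomes, because Theorem \ref{birthday} and inequality (\ref{cfree}) guarantee $\Pr_{\mathcal D_A}[S\notin\Lambda_{m,n}]$ and $\Pr_{\mathcal U}[S\notin\Lambda_{m,n}]$ are both $O(n^2/m)=O(\delta)$, hence the conditional distributions $\mathcal D_A|_{\Lambda_{m,n}}$ and $\mathcal U|_{\Lambda_{m,n}}$ still have variation distance $\Omega(1)$ (losing at most an additive $O(\delta)$), and the constant $1/9$ in Theorem \ref{vardist} has slack to absorb this. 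So $V_A$ can be taken to examine only the collision-free part of its input, which is exactly what properness demands.

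For the invariance under $P_\sigma A P_\tau$: relabeling the output modes by $\sigma\in S_m$ and the input modes by $\tau\in S_n$ induces a bijection $\pi_\sigma$ on $\Lambda_{m,n}$ (permute the coordinates of $S$ by $\sigma$) under which $(P_\sigma A P_\tau)_{\pi_\sigma(S)}$ is obtained from $A_S$ by permuting rows according to $\sigma$ and columns according to $\tau$; hence $|\operatorname{Per}((P_\sigma A P_\tau)_{\pi_\sigma(S)})|^2=|\operatorname{Per}(A_S)|^2$, so $\mathcal D_{P_\sigma A P_\tau}=\pi_\sigma(\mathcal D_A)$ and likewise $\mathcal U$ is $\pi_\sigma$-invariant. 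Therefore the set $T$ for $P_\sigma A P_\tau$ is just $\pi_\sigma(T)$, and the verifier $V_{P_\sigma A P_\tau}$ that counts samples landing in $\pi_\sigma(T)$ has exactly the same acceptance probabilities; so if (\ref{ck1})–(\ref{ck2}) hold for $A$ they hold for $P_\sigma A P_\tau$ with the same $c$. The main thing to be careful about is not the amplification (which is routine Chernoff) but bookkeeping the $O(\delta)$ error terms — the collision-free restriction and the $1-O(\delta)$ failure probability of Theorem \ref{vardist} must be shown not to interfere, which is fine since we only need a constant gap, not the sharp constant $1/9$.
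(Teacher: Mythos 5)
Your proposal is correct and follows essentially the same route as the paper: invoke Theorem \ref{vardist} for the constant variation-distance gap, amplify over $k$ samples via a Chernoff/Hoeffding bound, and observe that the construction is covariant under row/column permutations of $A$ (your explicit verifier is a threshold count on the likelihood-ratio set, while the paper's is a threshold on $\prod_i\left\vert\operatorname{Per}\left(A_{S_i}\right)\right\vert^2$, but both are standard instantiations of the amplification property of variation distance). Your treatment of properness and of the $O(\delta)$ bookkeeping is fine.
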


\begin{proof}
The inequalities (\ref{ck1})\ and (\ref{ck2})\ follow immediately from Theorem
\ref{vardist},\ together with a basic \textquotedblleft
amplification\textquotedblright\ property of variation distance: namely that,
by a Chernoff bound,%
\[
\left\Vert \mathcal{D}_{1}^{\otimes k}-\mathcal{D}_{2}^{\otimes k}\right\Vert
>1-\frac{1}{\exp\left(  \Omega\left(  k\cdot\left\Vert \mathcal{D}%
_{1}-\mathcal{D}_{2}\right\Vert _{1}^{2}\right)  \right)  }.
\]
If one wishes to be more explicit, the verifier $V_{A}\left(  S_{1}%
,\ldots,S_{k}\right)  $\ should accept if and only if%
\[%
{\displaystyle\prod\limits_{i=1}^{k}}
\left\vert \operatorname*{Per}\left(  A_{S_{i}}\right)  \right\vert ^{2}%
\geq\left(  \frac{n!}{m^{n}}\right)  ^{k}.
\]
For the last part, simply observe that, if we take $V_{A}$\ as above, then
$\Pr_{S_{1},\ldots,S_{k}\sim\mathcal{D}_{A}}\left[  V_{A}\text{ accepts}%
\right]  $\ and\ $\Pr_{S_{1},\ldots,S_{k}\sim\mathcal{U}}\left[  V_{A}\text{
accepts}\right]  $\ are both completely unaffected by permutations of the rows
and columns of $A$.
\end{proof}

Using Corollary \ref{ampcor},\ we can easily construct a weakly-symmetric verifier.

\begin{theorem}
\label{weaksym}Let $A\in\mathbb{C}^{m\times n}$ be a Haar-random
\textsc{BosonSampling}\ matrix with $m\geq n^{5.1}/\delta$, and let
$k=O\left(  \log\frac{n!m!}{\Delta}\right)  $. \ Then there exists a
\textbf{weakly-symmetric} proper verifier $V_{A}^{\ast}$\ (not necessarily
computationally efficient) such that, with probability $1-O\left(
\delta\right)  $\ over $A$, the following holds:%
\begin{align*}
\Pr_{S_{1},\ldots,S_{k}\sim\mathcal{D}_{A}}\left[  V_{A}^{\ast}\left(
S_{1},\ldots,S_{k}\right)  \text{ accepts}\right]   &  \geq1-\Delta,\\
\Pr_{S_{1},\ldots,S_{k}\sim\mathcal{U}}\left[  V_{A}^{\ast}\left(
S_{1},\ldots,S_{k}\right)  \text{ accepts}\right]   &  \leq\Delta.
\end{align*}

\end{theorem}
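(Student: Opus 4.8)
The plan is to build the weakly-symmetric verifier $V_A^\ast$ out of the proper (but not weakly-symmetric) verifier $V_A$ from Corollary \ref{ampcor}, by "searching over all mode relabelings." First I would take $V_A$ to be the explicit test from Corollary \ref{ampcor}, amplified to error $\varepsilon \ll \tfrac{1}{n!\,m!\,}\cdot\Delta$: by the Chernoff/amplification property of variation distance quoted there, this requires only $k = O\!\left(\log \tfrac{n!\,m!}{\Delta}\right)$ samples, since $\|\mathcal{D}_A - \mathcal{U}\| \geq \tfrac{1}{9}$ with probability $1 - O(\delta)$ over $A$ by Theorem \ref{vardist}. Then I would define
\[
V_A^\ast(S_1,\ldots,S_k) := \bigvee_{\sigma \in S_m,\ \tau \in S_n} V_{P_\sigma A P_\tau}(S_1,\ldots,S_k),
\]
i.e. $V_A^\ast$ accepts iff \emph{some} relabeling $P_\sigma A P_\tau$ of the rows and columns of $A$ makes the samples pass the test for that relabeled matrix.

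The key observations making this work are the following. Weak symmetry of $V_A^\ast$: relabeling the input/output modes of the experiment permutes the rows and columns of $A$, hence permutes the terms of the big disjunction, leaving the disjunction's value unchanged; this is exactly the "last part" of Corollary \ref{ampcor}. Completeness: when $S_1,\ldots,S_k$ are drawn from the true $\mathcal{D}_A$, the $\sigma=\mathrm{id},\tau=\mathrm{id}$ term of the disjunction already accepts with probability $\geq 1-\varepsilon \geq 1-\Delta$ by (\ref{ck1}). Soundness: when the $S_i$ are drawn from $\mathcal{U}$, each individual relabeled test $V_{P_\sigma A P_\tau}$ accepts with probability at most $\varepsilon$ by (\ref{ck2}) applied to the matrix $P_\sigma A P_\tau$ — here I need that $P_\sigma A P_\tau$ is still Haar-random and still satisfies the conclusion of Theorem \ref{vardist}, which holds because Haar measure on column-orthonormal matrices is invariant under multiplying by permutation matrices, so with probability $1-O(\delta)$ over $A$ the inequalities (\ref{ck1}),(\ref{ck2}) hold simultaneously for $A$ and all its $m!\,n!$ relabelings. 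A union bound over the $m!\,n!$ terms of the disjunction then gives $\Pr_{\mathcal{U}}[V_A^\ast \text{ accepts}] \leq m!\,n!\,\varepsilon \leq \Delta$.

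The one place requiring a little care — and the step I would flag as the main obstacle — is the "simultaneously for all relabelings" point. Theorem \ref{vardist} gives the variation-distance bound with probability $1-O(\delta)$ over a single Haar-random $A$; a priori one might worry that conditioning on a bad event for \emph{some} relabeling costs a factor of $m!\,n!$ in the failure probability. But in fact the event "$\|\mathcal{D}_A - \mathcal{U}\| \geq \tfrac19$" is itself invariant under row/column permutations of $A$ (permuting modes just relabels the outcomes $S$ and permutes $\mathcal{U}$ to itself), so it holds for $A$ iff it holds for every $P_\sigma A P_\tau$; thus there is no union-bound loss here at all, and the "$1-O(\delta)$ over $A$" in the conclusion is genuinely the same $O(\delta)$ as in Theorem \ref{vardist}. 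Once that is observed, the rest is the routine amplification bookkeeping sketched above, and I would not belabor it.
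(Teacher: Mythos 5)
Your proposal is correct and is essentially identical to the paper's own proof: both define $V_A^\ast$ as the disjunction of $V_{P_\sigma A P_\tau}$ over all $m!\,n!$ mode relabelings, get completeness from the identity term of the disjunction, and get soundness from a union bound, with the "simultaneously for all relabelings" point handled exactly by the permutation-invariance observation in the last part of Corollary \ref{ampcor}. No gaps.
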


\begin{proof}
Let $V_{A}^{\ast}\left(  S_{1},\ldots,S_{k}\right)  $ accept if and only if
there \textit{exist} permutations $\sigma\in S_{m}$\ and $\tau\in S_{n}$\ that
cause $V_{P_{\sigma}AP_{\tau}}\left(  S_{1},\ldots,S_{k}\right)  $\ to accept,
where $V_{A}$\ is the verifier from Corollary \ref{ampcor}. \ Then clearly
$V_{A}^{\ast}$\ is weakly-symmetric: i.e., it behaves identically on $A$\ and
$P_{\sigma}AP_{\tau}$, for any pair $\left(  \sigma,\delta\right)  \in
S_{m}\times S_{n}$. \ Moreover, by Corollary \ref{ampcor}\ together with the
union bound, we have%
\begin{align*}
\Pr_{S_{1},\ldots,S_{k}\sim\mathcal{D}_{A}}\left[  V_{A}^{\ast}\left(
S_{1},\ldots,S_{k}\right)  \text{ rejects}\right]   &  \leq n!m!\left(
\frac{1}{c^{k}}\right)  \leq\Delta,\\
\Pr_{S_{1},\ldots,S_{k}\sim\mathcal{U}}\left[  V_{A}^{\ast}\left(
S_{1},\ldots,S_{k}\right)  \text{ accepts}\right]   &  \leq n!m!\left(
\frac{1}{c^{k}}\right)  \leq\Delta
\end{align*}
with probability $1-O\left(  \delta\right)  $\ over $A$.
\end{proof}

\section{Detecting Deviations from Uniformity\label{DETECT}}

In Section \ref{DEV}, we showed that most \textsc{BosonSampling}%
\ distributions $\mathcal{D}_{A}$\ have constant\ variation distance from the
uniform distribution $\mathcal{U}$\ over experimental outcomes $S$\ (and
furthermore, that this is true even if we restrict to collision-free outcomes
$S\in\Lambda_{m,n}$). \ However, we did not show how to distinguish
$\mathcal{D}_{A}$\ from $\mathcal{U}$\ using a polynomial-time algorithm.
\ Moreover, the distinguishing procedure $V_{A}$\ from Corollary \ref{ampcor}
involved computing\ $\left\vert \operatorname*{Per}\left(  A_{S}\right)
\right\vert ^{2}$ for each experimental outcome $S$. \ And just as Gogolin et
al. \cite{gogolin}\ asserted, we do \textit{not} expect $\left\vert
\operatorname*{Per}\left(  A_{S}\right)  \right\vert ^{2}$ to computable (or
even closely approximable) in polynomial time---at least, to whatever extent
we expect \textsc{BosonSampling}\ to be a hard problem in the first
place.\footnote{Indeed, we even conjecture that this problem is
\textit{nonuniformly} hard: that is, for a Haar-random $A\in\mathbb{C}%
^{m\times n}$, we conjecture that there is no polynomial-size circuit $C_{A}%
$\ that computes or closely approximates $\left\vert \operatorname*{Per}%
\left(  A_{S}\right)  \right\vert ^{2}$ given $S\in\Lambda_{m,n}$ as input.}

Nevertheless, in this section we will show that $\mathcal{D}_{A}%
$\ \textit{can} be distinguished in polynomial time from $\mathcal{U}$---and
moreover, quite easily. \ As we will discuss in Section \ref{MOCKUP}, one
interpretation of this result is that $\mathcal{D}_{A}$\ versus $\mathcal{U}$
was simply the \textquotedblleft wrong comparison\textquotedblright: instead,
one should compare $\mathcal{D}_{A}$\ to various more interesting
\textquotedblleft mockup\ distributions,\textquotedblright\ which are easy to
sample classically but at least encode \textit{some} information about $A$.
\ On the other hand, our result suffices to refute the claim of Gogolin et
al.\ \cite{gogolin},\ about $\mathcal{D}_{A}$\ looking just like the uniform
distribution\ to any polynomial-time algorithm.

Given a matrix $X\in\mathbb{C}^{n\times n}$, let $R_{i}\left(  X\right)  $\ be
the squared $2$-norm of $X$'s $i^{th}$\ row:%
\[
R_{i}=R_{i}\left(  X\right)  =\left\vert x_{i1}\right\vert ^{2}+\cdots
+\left\vert x_{in}\right\vert ^{2}.
\]
Clearly $\operatorname{E}\left[  R_{i}\right]  =\operatorname*{Var}\left[
R_{i}\right]  =n$\ if $X\sim\mathcal{N}$ is Gaussian. \ Indeed, it is not hard
to see that $R_{i}$\ is a $\chi^{2}$ random variable with $2n$ degrees of
freedom, multiplied by the scalar $1/2$.\ \ Its pdf is%
\[
f_{R_{i}}\left(  r\right)  =\frac{e^{-r}r^{n-1}}{\left(  n-1\right)  !}.
\]
Throughout this paper, we will refer to such a random variable as a
\textit{complex }$\chi^{2}$\textit{ variable with }$n$\textit{ degrees of
freedom}.

Next, let $R$\ be the product of the squared row-norms:%
\[
R=R\left(  X\right)  =R_{1}\cdots R_{n}.
\]
Note that if $X\sim\mathcal{N}$ is Gaussian, then by the independence of the
rows,%
\[
\operatorname{E}\left[  R\right]  =\operatorname{E}\left[  R_{1}\right]
\cdots\operatorname{E}\left[  R_{n}\right]  =n^{n}.
\]
Thus, it will be convenient to define%
\[
R^{\ast}=\frac{R}{n^{n}}%
\]
so that $\operatorname{E}\left[  R^{\ast}\right]  =1$. \ Of course, $R^{\ast}$
can be computed in linear (i.e., $O\left(  n^{2}\right)  $)\ time given
$X\in\mathbb{C}^{n\times n}$ as input.

Our claim is that \textit{computing }$R^{\ast}\left(  A_{S}\right)  $\textit{,
for a few experimental outcomes }$S\in\Lambda_{m,n}$\textit{, already suffices
to tell whether }$S$\textit{ was drawn from }$\mathcal{D}_{A}$\textit{\ or
from }$\mathcal{U}$\textit{, with high probability}. \ The intuition for this
claim is simple: first, each $R_{i}$\ is at least \textit{slightly} correlated
with $\left\vert \operatorname*{Per}\left(  X\right)  \right\vert ^{2}$, since
multiplying the $i^{th}$\ row of $X$ by any scalar $c$ also multiplies
$\operatorname*{Per}\left(  X\right)  $\ by $c$. \ Second, if $X$\ is
Gaussian, then $R_{i}\left(  X\right)  /n$\ will typically be $1\pm
\Theta\left(  1/\sqrt{n}\right)  $. \ This suggests that the \textit{product}
of the $R_{i}\left(  X\right)  /n$'s, over all $i\in\left[  n\right]  $, will
typically be $1\pm\Theta\left(  1\right)  $: or in other words, that $R^{\ast
}$\ will have constant-sized fluctuations. \ If so, then $R^{\ast}\left(
X\right)  $\ should be an easy-to-compute random variable, whose fluctuations
nontrivially correlate with the fluctuations in $\left\vert
\operatorname*{Per}\left(  X\right)  \right\vert ^{2}$. \ Therefore $R^{\ast
}\left(  A_{S}\right)  $\ should be larger, in expectation, if $S$ was drawn
from $\mathcal{D}_{A}$\ than if $S$ was drawn from $\mathcal{U}$. \ (See
Figure \ref{rstar} for plots of expected distribution over $R^{\ast}$, both
for a Haar-random $\mathcal{D}_{A}$\ and for $\mathcal{U}$.)

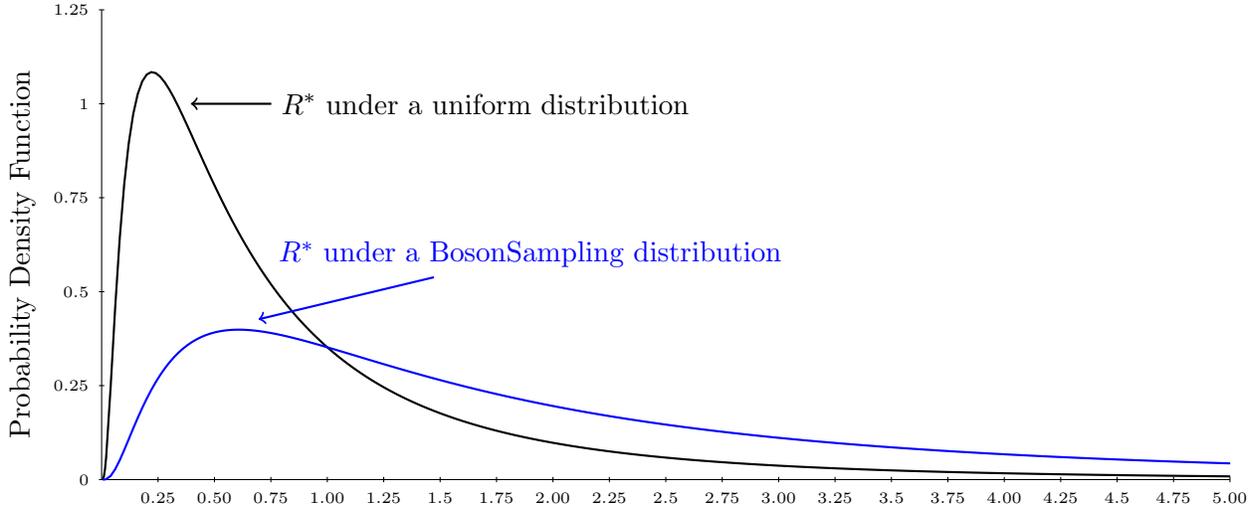
\begin{figure}[ptb]
\centering
\begin{tikzpicture}[x=3cm, y=5cm]
\draw (0,0) -- (5,0) node[right] {};
\draw (0,0) -- (0,1.25) node[above] {};
\foreach \x in
{0.25,0.50,0.75,1.00,1.25,1.5,1.75,2.00,2.25,2.5,2.75,3.00,3.25,3.5,3.75,4.00,4.25,4.5,4.75,5.00}
\draw (\x, 1pt) -- (\x,-1pt) node[anchor=north] {\tiny $\x$};
\foreach \y in {0, 0.25, 0.5, 0.75, 1, 1.25}
\draw (1pt,\y) -- (-1pt, \y) node[anchor=east] {\tiny $\y$};
\node[rotate=90] at (-0.35,0.6) {Probability Density Function};
\draw[color=Black, thick, mark=square*, mark size=0pt] plot
file {R_star_uniform.table};
		\draw[color=DarkBlue, thick, mark=square*, mark size=0pt] plot
		file {R_star_boson.table};
\node[text=Black] (Unif) at (1.7,1) {
$R^{\ast}$ under a uniform distribution};
\node (UnifPoint) at (0.35,1.0) {};
\draw[->, thick, color=Black] (Unif) -- (UnifPoint);
\node[text=DarkBlue] (Boson) at (1.9,0.6) {
$R^{\ast}$ under a BosonSampling distribution};
\node (BosonPoint) at (0.65,0.42) {};
\draw[->, thick, color=DarkBlue] (Boson) -- (BosonPoint);
\end{tikzpicture}
\caption{Probability density functions for the row-norm estimator $R^{\ast
}\left(  A_{S}\right)  $, when $S$ is drawn either from the uniform
distribution $\mathcal{U}$\ (in which case $R^{\ast}\left(  A_{S}\right)  $
quickly converges to a lognormal random variable), or from a Haar-random
\textsc{BosonSampling} distribution $\mathcal{D}_{A}$, in the limits
$n\rightarrow\infty$\ and $m/n\rightarrow\infty$. \ One can see that $R^{\ast
}$\ is typically larger under $\mathcal{D}_{A}$, and that $R^{\ast}$ suffices
to distinguish $\mathcal{D}_{A}$\ from $\mathcal{U}$\ with constant bias.}%
\label{rstar}%
\end{figure}

Before formalizing these intuitions, we make two remarks. \ First, $R^{\ast
}\left(  X\right)  $\ does not yield a \textit{good} approximation to
$\left\vert \operatorname*{Per}\left(  X\right)  \right\vert ^{2}$---nor
should we expect it to, if we believe (following \cite{aark})\ that
approximating $\left\vert \operatorname*{Per}\left(  X\right)  \right\vert
^{2}$, with $1-1/\operatorname*{poly}\left(  n\right)  $\ probability and to
within $\pm n!/\operatorname*{poly}\left(  n\right)  $\ error, is a
$\mathsf{\#P}$-hard problem! \ Instead, $R^{\ast}\left(  X\right)  $\ is
\textquotedblleft just barely\textquotedblright\ correlated with $\left\vert
\operatorname*{Per}\left(  X\right)  \right\vert ^{2}$. \ Still, we will show
that this correlation is already enough to distinguish $\mathcal{D}_{A}$\ from
$\mathcal{U}$\ with constant bias. \ Second, if we were satisfied to
distinguish $\mathcal{D}_{A}$\ from $\mathcal{U}$\ with
$1/\operatorname*{poly}\left(  n\right)  $ bias, then it would even suffice to
look at $R_{i}\left(  X\right)  $, for any single row $i$. \ Indeed, we
conjecture that it would even suffice to look at $\left\vert x_{ij}\right\vert
^{2}$, for any single \textit{entry} $x_{ij}$\ of $X$! \ However, looking at
the product of the $R_{i}$'s seems necessary if we want a constant-bias distinguisher.

We now prove that looking at $R^{\ast}\left(  A_{S}\right)  $\ indeed
distinguishes $\mathcal{D}_{A}$\ from $\mathcal{U}$, with high probability
over $A\in\mathbb{C}^{m\times n}$. \ As in Section \ref{DEV}, let%
\[
P=P\left(  X\right)  =\frac{\left\vert \operatorname*{Per}\left(  X\right)
\right\vert ^{2}}{n!}.
\]
Then let%
\[
Q=Q\left(  X\right)  =\frac{P\left(  X\right)  }{R^{\ast}\left(  X\right)  }.
\]
The following proposition, though trivial to prove, is crucial to what follows.

\begin{proposition}
\label{indprop}$Q\left(  X\right)  $ and $R^{\ast}\left(  X\right)  $ are
independent random variables, if $X\sim\mathcal{N}\left(  0,1\right)
_{\mathbb{C}}^{n\times n}$ is Gaussian. \ Indeed, we can think of $Q$\ as
simply $\left\vert \operatorname*{Per}\left(  Y\right)  \right\vert ^{2}/n!$,
where $Y\in\mathbb{C}^{n\times n}$ is a matrix each of whose rows is an
independent Haar-random vector, with squared $2$-norm equal to $n$. \ So in
particular, $\operatorname{E}\left[  Q\right]  =1$.
\end{proposition}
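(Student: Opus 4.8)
The plan is to exploit the multiplicative structure of the permanent under row rescalings. Write each row of $X$ as $x_i = \sqrt{R_i}\, u_i$, where $R_i = R_i(X)$ is the squared $2$-norm of the $i$th row and $u_i$ is a unit vector in $\mathbb{C}^n$. For a standard complex Gaussian row $x_i$, the pair $(R_i, u_i)$ is distributed as: $R_i$ a complex $\chi^2$ variable with $n$ degrees of freedom, $u_i$ Haar-uniform on the unit sphere in $\mathbb{C}^n$, and crucially $R_i$ and $u_i$ are \emph{independent} (this is the familiar decomposition of a Gaussian vector into its norm and its direction). Moreover the $n$ rows are mutually independent, so the whole collection $(R_1,\dots,R_n)$ is independent of $(u_1,\dots,u_n)$.

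Now observe that scaling the $i$th row by $\sqrt{R_i}$ multiplies $\operatorname{Per}(X)$ by $\sqrt{R_i}$, hence multiplies $|\operatorname{Per}(X)|^2$ by $R_i$. Therefore $|\operatorname{Per}(X)|^2 = R_1 \cdots R_n \, |\operatorname{Per}(U)|^2 = R \cdot |\operatorname{Per}(U)|^2$, where $U$ is the matrix with rows $u_1,\dots,u_n$. Dividing by $n!$ and by $R^{\ast} = R/n^n$ gives
\[
Q(X) = \frac{P(X)}{R^{\ast}(X)} = \frac{|\operatorname{Per}(X)|^2 / n!}{R/n^n} = \frac{n^n |\operatorname{Per}(U)|^2}{n!}.
\]
So $Q$ is a function of $U = (u_1,\dots,u_n)$ alone, while $R^{\ast}$ is a function of $(R_1,\dots,R_n)$ alone; since those two tuples are independent, $Q$ and $R^{\ast}$ are independent. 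To match the stated description, note that $Y := \sqrt{n}\, U$ has rows that are independent Haar-random vectors each of squared $2$-norm $n$, and $Q(X) = |\operatorname{Per}(Y)|^2 / n!$ since $\operatorname{Per}(\sqrt{n}\,U) = n^{n/2}\operatorname{Per}(U)$.

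For the claim $\operatorname{E}[Q] = 1$: we have $\operatorname{E}[P] = 1$ and $\operatorname{E}[R^{\ast}] = 1$ by the normalizations already established, and by independence $\operatorname{E}[P] = \operatorname{E}[Q \cdot R^{\ast}] = \operatorname{E}[Q]\operatorname{E}[R^{\ast}] = \operatorname{E}[Q]$, so $\operatorname{E}[Q] = 1$. (One does need $\operatorname{E}[R^{\ast}] = 1 \neq 0$, which holds, and finiteness of the relevant expectations, which is clear since $P$ has mean $1$.) The only real point requiring care is the independence of the norm and direction of a complex Gaussian vector and the resulting product decomposition of the permanent; everything else is bookkeeping. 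I do not anticipate a genuine obstacle here — the proposition is flagged in the paper as ``trivial to prove,'' and the content is entirely the observation that row-rescaling acts multiplicatively on $|\operatorname{Per}|^2$ and that Gaussian rows factor as independent (radius, direction).
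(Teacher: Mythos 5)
Your proof is correct and follows essentially the same route as the paper's: both rest on the polar decomposition of each Gaussian row into an independent (norm, direction) pair and the multiplicativity of $\left\vert \operatorname*{Per}\right\vert ^{2}$ under row rescaling, yielding $P=QR^{\ast}$ with the two factors independent by construction. Your explicit derivation of $\operatorname{E}\left[  Q\right]  =1$ from $\operatorname{E}\left[  P\right]  =\operatorname{E}\left[  Q\right]  \operatorname{E}\left[  R^{\ast}\right]  $ is exactly the intended (implicit) reasoning.
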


\begin{proof}
Suppose we first choose $Y$ as above, and then choose $R_{1},\ldots,R_{n}$\ as
independent complex $\chi^{2}$ random variables with $n$ degrees of freedom.
\ Then $X$, a Gaussian random matrix, can be obtained from $Y$ by simply
multiplying the $i^{th}$ row of $Y$ by $\sqrt{R_{i}/n}$, for all $i\in\left[
n\right]  $. \ So we can decompose $P$ as%
\[
P=\frac{\left\vert \operatorname*{Per}\left(  X\right)  \right\vert ^{2}}%
{n!}=\frac{\left\vert \operatorname*{Per}\left(  Y\right)  \right\vert ^{2}%
}{n!}\frac{R_{1}\cdots R_{n}}{n^{n}}=QR^{\ast}%
\]
where $Q=\left\vert \operatorname*{Per}\left(  Y\right)  \right\vert ^{2}/n!$
and $R^{\ast}=R_{1}\cdots R_{n}/n^{n}$ are independent by construction.
\end{proof}

As in Section \ref{DEV}, let $\mathcal{N}=\mathcal{N}\left(  0,1\right)
_{\mathbb{C}}^{n\times n}$\ be the Gaussian distribution over $X\in
\mathbb{C}^{n\times n}$, let $\mathcal{H}$\ be the Gaussian distribution
scaled by $\left\vert \operatorname*{Per}\left(  X\right)  \right\vert ^{2}$,
and let $f_{\mathcal{N}}\left(  X\right)  $\ and $f_{\mathcal{H}}\left(
X\right)  =f_{\mathcal{N}}\left(  X\right)  P\left(  X\right)  $\ be the pdfs
of $\mathcal{N}$\ and $\mathcal{H}$\ respectively. \ Then the following lemma
says that, provided $\operatorname*{E}\nolimits_{\mathcal{N}}\left[
\left\vert R^{\ast}\left(  X\right)  -1\right\vert \right]  $\ is large, we
can indeed use $R^{\ast}\left(  X\right)  $\ to distinguish the case
$X\sim\mathcal{H}$\ from the case $X\sim\mathcal{N}$:

\begin{lemma}
\label{rlem}%
\[
\Pr_{\mathcal{H}}\left[  R^{\ast}\geq1\right]  -\Pr_{\mathcal{N}}\left[
R^{\ast}\geq1\right]  =\frac{1}{2}\operatorname*{E}\limits_{\mathcal{N}%
}\left[  \left\vert R^{\ast}-1\right\vert \right]  .
\]

\end{lemma}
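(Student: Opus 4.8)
The plan is to reduce the whole statement to a short computation built on the multiplicative decomposition $P = Q R^{\ast}$ supplied by Proposition \ref{indprop}. First I would rewrite the left-hand side as an expectation under $\mathcal{N}$ using the change of measure that defines $\mathcal{H}$: since $f_{\mathcal{H}}(X) = f_{\mathcal{N}}(X)\,P(X)$ and $\operatorname{E}_{\mathcal{N}}[P] = 1$, for any event $E$ determined by $X$ we have $\Pr_{\mathcal{H}}[E] = \operatorname{E}_{\mathcal{N}}[P(X)\,\mathbf{1}_{E}]$. Taking $E = \{R^{\ast} \geq 1\}$ and substituting $P = Q R^{\ast}$ gives $\Pr_{\mathcal{H}}[R^{\ast} \geq 1] = \operatorname{E}_{\mathcal{N}}\!\left[ Q\,R^{\ast}\,\mathbf{1}[R^{\ast} \geq 1] \right]$.

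Next I would invoke the independence of $Q$ and $R^{\ast}$ under $\mathcal{N}$ (Proposition \ref{indprop}) together with $\operatorname{E}[Q] = 1$ to factor $Q$ out: $\operatorname{E}_{\mathcal{N}}\!\left[ Q\,R^{\ast}\,\mathbf{1}[R^{\ast} \geq 1] \right] = \operatorname{E}[Q]\cdot\operatorname{E}_{\mathcal{N}}\!\left[ R^{\ast}\,\mathbf{1}[R^{\ast} \geq 1] \right] = \operatorname{E}_{\mathcal{N}}\!\left[ R^{\ast}\,\mathbf{1}[R^{\ast} \geq 1] \right]$. Subtracting $\Pr_{\mathcal{N}}[R^{\ast} \geq 1] = \operatorname{E}_{\mathcal{N}}[\mathbf{1}[R^{\ast} \geq 1]]$ then collapses the difference into a single positive-part expectation, $\operatorname{E}_{\mathcal{N}}\!\left[ (R^{\ast} - 1)\,\mathbf{1}[R^{\ast} \geq 1] \right] = \operatorname{E}_{\mathcal{N}}\!\left[ (R^{\ast} - 1)^{+} \right]$.

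Finally I would use the normalization $\operatorname{E}_{\mathcal{N}}[R^{\ast}] = 1$ already noted in the paper, which says $\operatorname{E}[(R^{\ast} - 1)^{+}] = \operatorname{E}[(R^{\ast} - 1)^{-}]$, hence each of them equals $\tfrac{1}{2}\operatorname{E}[\,|R^{\ast} - 1|\,]$. Combining this with the previous display yields exactly the claimed identity.

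I do not expect any genuine obstacle: the lemma is a one-line consequence of the factorization $P = Q R^{\ast}$ and the normalizations $\operatorname{E}[Q] = \operatorname{E}[R^{\ast}] = 1$. The only points meriting a word of care are the legitimacy of the change of measure (which is immediate from $f_{\mathcal{H}} = f_{\mathcal{N}} P$ and $\operatorname{E}_{\mathcal{N}}[P] = 1$) and the interchange of expectation with the indicator (immediate since all the random variables involved are nonnegative).
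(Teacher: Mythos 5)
Your proposal is correct and follows essentially the same route as the paper's proof: rewrite the probability difference under the change of measure $f_{\mathcal{H}}=f_{\mathcal{N}}P$, substitute $P=QR^{\ast}$, factor out $Q$ by independence and $\operatorname{E}[Q]=1$, and finish with $\operatorname{E}[R^{\ast}]=1$ to equate the positive and negative parts of $R^{\ast}-1$. The only difference is cosmetic (indicator-function expectations versus explicit integrals over the region $R^{\ast}\geq 1$).
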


\begin{proof}
We have%
\begin{align}
\Pr_{\mathcal{H}}\left[  R^{\ast}\left(  X\right)  \geq1\right]
-\Pr_{\mathcal{N}}\left[  R^{\ast}\left(  X\right)  \geq1\right]   &
=\int_{X:R^{\ast}\left(  X\right)  \geq1}\left(  f_{\mathcal{H}}\left(
X\right)  -f_{\mathcal{N}}\left(  X\right)  \right)  dX\nonumber\\
&  =\int_{X:R^{\ast}\left(  X\right)  \geq1}\left(  f_{\mathcal{N}}\left(
X\right)  P\left(  X\right)  -f_{\mathcal{N}}\left(  X\right)  \right)
dX\nonumber\\
&  =\int_{X:R^{\ast}\left(  X\right)  \geq1}f_{\mathcal{N}}\left(  X\right)
\left(  Q\left(  X\right)  R^{\ast}\left(  X\right)  -1\right)  dX\nonumber\\
&  =\int_{X:R^{\ast}\left(  X\right)  \geq1}f_{\mathcal{N}}\left(  X\right)
\left(  R^{\ast}\left(  X\right)  -1\right)  dX. \label{indline}%
\end{align}
Here line (\ref{indline}) used the fact that $Q$ and $R^{\ast}$ are
independent, and hence that%
\[
\operatorname*{E}\left[  Q|R^{\ast}=r\right]  =\operatorname*{E}\left[
Q\right]  =1
\]
regardless of the value of $R^{\ast}$.

Now, since $\operatorname*{E}\left[  R^{\ast}\right]  =1$, the above also
equals%
\[
\int_{X:R^{\ast}\left(  X\right)  <1}f_{\mathcal{N}}\left(  X\right)  \left(
1-R^{\ast}\left(  X\right)  \right)  dX.
\]
But this means that it's simply%
\[
\frac{1}{2}\int_{X}f_{\mathcal{N}}\left(  X\right)  \left\vert R^{\ast}\left(
X\right)  -1\right\vert dX=\frac{1}{2}\operatorname*{E}\limits_{\mathcal{N}%
}\left[  \left\vert R^{\ast}-1\right\vert \right]  .
\]

\end{proof}

Thus, we arrive the remarkable result that, to understand the ability of our
row-norm estimator $R^{\ast}$ to distinguish the distributions $\mathcal{H}%
$\ and $\mathcal{N}$, \textit{we need only understand the intrinsic variation
of }$R^{\ast}$. \ All other information about $\operatorname*{Per}\left(
X\right)  $\ is irrelevant. \ (Of course, this result depends crucially on the
independence of $Q$ and $R^{\ast}$.)

Better yet, $R=R_{1}\cdots R_{n}$ is a product of $n$\ iid random
variables---which means that, with a little work, we can indeed understand the
distribution of $R$. \ As a first step, let%
\[
L:=\ln R.
\]
Then $L$, of course, is a \textit{sum} of $n$ iid random variables,%
\[
L=\ln R_{1}+\cdots+\ln R_{n}.
\]
Indeed, let $\ell_{n}$ be the log of a complex $\chi^{2}$ variable with $n$
degrees of freedom: that is,
\[
\ell_{n}=\ln\left(  \left\vert x_{1}\right\vert ^{2}+\cdots+\left\vert
x_{n}\right\vert ^{2}\right)  ,
\]
where $x_{j}\sim\mathcal{N}\left(  0,1\right)  _{\mathbb{C}}$ for all
$j\in\left[  n\right]  $. \ Then each $\ln R_{i}$\ is distributed as $\ell
_{n}$, so $L$ is simply the sum of $n$\ independent $\ell_{n}$\ random
variables. \ As such, we might expect $L$ to be close to Gaussian. \ The
Berry-Esseen Theorem (which we state for completeness) assures us that this is
indeed the case, provided the $\ell_{n}$'s satisfy certain conditions.

\begin{theorem}
[Berry-Esseen Theorem]\label{berryesseen}Let $Z_{1},\ldots,Z_{n}$\ be real iid
random variables satisfying%
\begin{align*}
\operatorname{E}\left[  Z_{i}\right]   &  =\upsilon,\\
\operatorname{E}\left[  \left(  Z_{i}-\upsilon\right)  ^{2}\right]   &
=\sigma^{2}>0,\\
\operatorname{E}\left[  \left\vert Z_{i}-\upsilon\right\vert ^{3}\right]   &
=\rho<\infty.
\end{align*}
Then let%
\[
Z:=Z_{1}+\cdots+Z_{n},
\]
and let $W\sim\mathcal{N}\left(  \upsilon n,\sigma^{2}n\right)  _{\mathbb{R}}%
$\ be a real Gaussian with mean $\upsilon n$\ and variance $\sigma^{2}n$.
\ Then for all $x\in\mathbb{R}$,%
\[
\left\vert \Pr\left[  Z\leq x\right]  -\Pr\left[  W\leq x\right]  \right\vert
\leq\frac{C\rho}{\sigma^{3}\sqrt{n}},
\]
where $C$ is some universal constant.
\end{theorem}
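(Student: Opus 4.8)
The plan is to prove this by the classical Fourier-analytic argument of Esseen (Stein's method would give an alternative, but the characteristic-function route is the most direct). First I would standardize: replacing each $Z_{i}$ by $(Z_{i}-\upsilon)/\sigma$, we may assume $\upsilon=0$ and $\sigma=1$, in which case the normalized summands have third absolute moment $\rho/\sigma^{3}\ge 1$ (by the power-mean inequality $\rho\ge\sigma^{3}$), and the quantity to control is $\sup_{y}|F_{n}(y)-\Phi(y)|$, where $F_{n}$ is the CDF of $Z/\sqrt{n}$ and $\Phi$ is the standard normal CDF. Write $\varphi(t)=\operatorname{E}[e^{itZ_{1}/\sqrt{n}}]$, so that the characteristic function of $Z/\sqrt{n}$ is $\varphi(t)^{n}$ while that of $\Phi$ is $e^{-t^{2}/2}$. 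The engine of the proof is Esseen's smoothing inequality: for every $T>0$,
\[
\sup_{x}\left|F_{n}(x)-\Phi(x)\right|\ \le\ \frac{1}{\pi}\int_{-T}^{T}\left|\frac{\varphi(t)^{n}-e^{-t^{2}/2}}{t}\right|dt\ +\ \frac{c_{0}}{T},
\]
for an absolute constant $c_{0}$, which I would derive by convolving $F_{n}-\Phi$ with a fixed kernel of bandwidth $T$ (e.g.\ the Fej\'er kernel), applying Fourier inversion to the convolution, and using that $\Phi$ has density bounded by $1/\sqrt{2\pi}$ to bound the smoothing error.

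The third and main step is to estimate $|\varphi(t)^{n}-e^{-t^{2}/2}|$ on $|t|\le T$, where I would take $T\asymp\sqrt{n}/\rho$ (a legitimate choice since $\rho\ge 1$). From $|e^{ix}-1-ix+x^{2}/2|\le|x|^{3}/6$ together with the moment hypotheses,
\[
\varphi(t)=1-\frac{t^{2}}{2n}+\theta(t)\,\frac{\rho\,|t|^{3}}{6\,n^{3/2}},\qquad |\theta(t)|\le 1 ,
\]
which on the chosen range of $t$ gives $|\varphi(t)|\le e^{-t^{2}/(3n)}$. Combining this with the elementary bound $|a^{n}-b^{n}|\le n|a-b|\max(|a|,|b|)^{n-1}$ (taking $a=\varphi(t)$, $b=e^{-t^{2}/(2n)}$, so $b^{n}=e^{-t^{2}/2}$) yields, for large $n$,
\[
\left|\varphi(t)^{n}-e^{-t^{2}/2}\right|\ \le\ n\left|\varphi(t)-e^{-t^{2}/(2n)}\right|\cdot\Bigl(\max\bigl(|\varphi(t)|,e^{-t^{2}/(2n)}\bigr)\Bigr)^{n-1}\ \le\ \frac{c'\rho\,|t|^{3}}{\sqrt{n}}\,e^{-t^{2}/4},
\]
and hence
\[
\frac{1}{\pi}\int_{-T}^{T}\left|\frac{\varphi(t)^{n}-e^{-t^{2}/2}}{t}\right|dt\ \le\ \frac{c'\rho}{\pi\sqrt{n}}\int_{-\infty}^{\infty}t^{2}e^{-t^{2}/4}\,dt\ =\ O\!\left(\frac{\rho}{\sqrt{n}}\right).
\]
With $T\asymp\sqrt{n}/\rho$ the boundary term $c_{0}/T$ is also $O(\rho/\sqrt{n})$; adding the two contributions and undoing the standardization (which turns the normalized third moment back into $\rho/\sigma^{3}$) gives the claimed bound $C\rho/(\sigma^{3}\sqrt{n})$.

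The step I expect to be the main obstacle is exactly the uniform estimate of $|\varphi(t)^{n}-e^{-t^{2}/2}|$ over the full range $|t|\le T$: obtaining a clean Gaussian envelope $e^{-ct^{2}}$ rather than just a local bound near $t=0$ forces one to show, via the Taylor expansion above, that $|\varphi(t)|$ stays below $1$ at the right quadratic rate — this is precisely where finiteness of $\rho$ is used, and it is what dictates capping $|t|$ at order $\sqrt{n}/\rho$. The smoothing inequality itself is standard but still requires a careful kernel choice and some bookkeeping of the absolute constant $c_{0}$.
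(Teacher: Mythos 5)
Your proposal is a correct outline of the classical Esseen smoothing / characteristic-function proof, and all the key estimates (the choice $T\asymp\sqrt{n}/\rho$, the Gaussian envelope $|\varphi(t)|\le e^{-t^{2}/(3n)}$ on that range, the telescoping bound $|a^{n}-b^{n}|\le n|a-b|\max(|a|,|b|)^{n-1}$, and the use of $\rho/\sigma^{3}\ge 1$ to dispose of small $n$) are the right ones. Note, however, that the paper does not prove this statement at all: it is the standard Berry--Esseen theorem, stated ``for completeness'' and imported as a known classical result, so there is no in-paper argument to compare yours against; your sketch matches the textbook proof (Feller, Petrov) essentially verbatim.
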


Thus, it remains only to check that the random variables $\ell_{n}$\ satisfy
the conditions for Theorem \ref{berryesseen}. \ We can do so using the
following striking relations:

\begin{lemma}
\label{heartbreak}We have%
\begin{align*}
\operatorname{E}\left[  \ell_{n}\right]   &  =-\gamma+\sum_{j=1}^{n-1}\frac
{1}{j}=\ln n-O\left(  \frac{1}{n}\right)  ,\\
\operatorname*{Var}\left[  \ell_{n}\right]   &  =\frac{\pi^{2}}{6}-\sum
_{j=1}^{n-1}\frac{1}{j^{2}}=\frac{1+o\left(  1\right)  }{n},\\
\operatorname{E}\left[  \left(  \ell_{n}-\operatorname{E}\left[  \ell
_{n}\right]  \right)  ^{4}\right]   &  =6\left(  \frac{\pi^{4}}{90}-\sum
_{j=1}^{n-1}\frac{1}{j^{4}}\right)  +3\operatorname*{Var}\left[  \ell
_{n}\right]  ^{2}=\frac{3+o\left(  1\right)  }{n^{2}},
\end{align*}
where $\gamma=0.577\ldots$\ is the Euler-Mascheroni constant.
\end{lemma}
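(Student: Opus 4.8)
The claim is about the random variable $\ell_{n} = \ln(|x_{1}|^{2}+\cdots+|x_{n}|^{2})$, where the $x_{j}$ are iid standard complex Gaussians.  Since the squared norm $|x_{1}|^{2}+\cdots+|x_{n}|^{2}$ is (one half times) a $\chi^{2}$ with $2n$ real degrees of freedom, it is exactly a Gamma$(n,1)$ random variable; call it $G_{n}$, with density $e^{-r}r^{n-1}/(n-1)!$.  So $\ell_{n} = \ln G_{n}$, and every moment we need is a logarithmic moment of a Gamma variable.  The cleanest route is to compute the cumulant generating function $M(t) = \operatorname{E}[G_{n}^{t}] = \Gamma(n+t)/\Gamma(n)$, valid for $t > -n$; then all moments of $\ell_{n}$ come from differentiating $\ln M(t) = \ln\Gamma(n+t) - \ln\Gamma(n)$ at $t=0$, which produces polygamma functions.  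Concretely $\operatorname{E}[\ell_{n}] = \psi(n)$, $\operatorname{Var}[\ell_{n}] = \psi'(n)$, and the fourth central moment is $\psi'''(n) + 3\psi'(n)^{2}$ (the third cumulant $\psi''(n)$ does not enter the fourth central moment's leading combination in the way one might fear — one just uses $\mu_{4} = \kappa_{4} + 3\kappa_{2}^{2}$, i.e. $\psi'''(n) + 3\psi'(n)^{2}$).

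**Carrying it out.**  First I would record the identity $\operatorname{E}[G_{n}^{t}] = \Gamma(n+t)/\Gamma(n)$ by direct integration against the Gamma density (the integral is just the definition of $\Gamma(n+t)$ after rescaling).  Then differentiate: $\frac{d}{dt}\ln\Gamma(n+t)\big|_{t=0} = \psi(n)$, and use the classical series $\psi(n) = -\gamma + \sum_{j=1}^{n-1}\frac1j$ to get the first displayed formula, with the asymptotic $\psi(n) = \ln n - \frac{1}{2n} + O(1/n^{2}) = \ln n - O(1/n)$ from the standard expansion.  For the variance, $\psi'(n) = \frac{\pi^{2}}{6} - \sum_{j=1}^{n-1}\frac{1}{j^{2}}$ (since $\psi'(1) = \pi^{2}/6 = \zeta(2)$ and $\psi'$ telescopes by the recurrence $\psi'(n+1) = \psi'(n) - 1/n^{2}$), and this tail equals $\sum_{j\ge n}1/j^{2} = 1/n + O(1/n^{2}) = (1+o(1))/n$.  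For the fourth central moment, I would use $\mu_{4} = \kappa_{4} + 3\kappa_{2}^{2}$ where $\kappa_{4} = \psi'''(n) = -6\big(\zeta(4) - \sum_{j=1}^{n-1}1/j^{4}\big)$ up to sign bookkeeping — more precisely $\psi'''(n) = 6\big(\frac{\pi^{4}}{90} - \sum_{j=1}^{n-1}\frac{1}{j^{4}}\big)$ after fixing signs via $\psi'''(1) = 6\zeta(4) = \pi^{4}/15$ — wait, one must be careful: $\psi^{(k)}(1) = (-1)^{k+1}k!\,\zeta(k+1)$, so $\psi'''(1) = 6\zeta(4) = \pi^{4}/15$, and the telescoping recurrence gives the stated finite-sum form; the tail $\zeta(4) - \sum_{j<n}j^{-4} = O(1/n^{3})$ is negligible, leaving $\mu_{4} = 3\psi'(n)^{2} + O(1/n^{3}) = 3/n^{2} + o(1/n^{2})$.

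**The main obstacle.**  There is no real obstacle in difficulty — everything reduces to known polygamma identities and their $1/n$ asymptotics — so the "hard part" is purely bookkeeping: getting every sign and every $\pi^{2}/6$, $\pi^{4}/90$ constant exactly right (these are $\zeta(2)$ and $\zeta(4)$ respectively), and correctly assembling the fourth central moment from cumulants rather than conflating it with $\kappa_{4}$ alone.  The one genuinely substantive check is confirming that the odd third cumulant $\psi''(n) = -\sum_{j\ge n}2/j^{3} = -\Theta(1/n^{2})$ is small enough that it does not spoil the Berry–Esseen application downstream; but that is not part of this lemma's statement.  Once the three closed forms are in hand, each asymptotic ($\ln n - O(1/n)$, $(1+o(1))/n$, $(3+o(1))/n^{2}$) follows from the elementary estimate $\sum_{j\ge n}j^{-k} = \frac{1}{(k-1)n^{k-1}} + O(n^{-k})$, and the lemma is complete.
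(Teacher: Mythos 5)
Your proposal is correct and follows essentially the same route as the paper's proof sketch: express the moments via the cumulants $\kappa_k$ of $\ell_n$ (using $\mu_4=\kappa_4+3\kappa_2^2$), identify $\kappa_k$ with the polygamma values $\psi^{(k-1)}(n)$ via the log of $\operatorname{E}\left[\left(\left\vert x_1\right\vert^2+\cdots+\left\vert x_n\right\vert^2\right)^t\right]$, and finish with the $\zeta(2)$, $\zeta(4)$ series and elementary tail estimates. The only difference is that you make explicit the identity $\operatorname{E}\left[G_n^t\right]=\Gamma(n+t)/\Gamma(n)$ that the paper leaves implicit in its claim $\kappa_k=\phi^{(k-1)}(n)$, and your sign bookkeeping for $\psi'''(n)$ lands in the right place.
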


\begin{proof}
[Proof Sketch]The first step is to rewrite the moments in terms of the
so-called \textit{cumulants} $\kappa_{k}$\ of $\ell_{n}$, which are the
coefficients of the log of $\ell_{n}$'s moment generating function:%
\[
\sum_{k=1}^{\infty}\kappa_{k}\frac{t^{k}}{k!}=\ln\operatorname{E}\left[
e^{\ell_{n}t}\right]  =\ln\operatorname{E}\left[  \left(  \left\vert
x_{1}\right\vert ^{2}+\cdots+\left\vert x_{n}\right\vert ^{2}\right)
^{t}\right]  .
\]
By standard facts about cumulants, we have%
\begin{align*}
\operatorname{E}\left[  \ell_{n}\right]   &  =\kappa_{1},\\
\operatorname*{Var}\left[  \ell_{n}\right]   &  =\kappa_{2},\\
\operatorname{E}\left[  \left(  \ell_{n}-\operatorname{E}\left[  \ell
_{n}\right]  \right)  ^{4}\right]   &  =\kappa_{4}+3\kappa_{2}^{2}.
\end{align*}
The second step is to express the cumulants in terms of the so-called
\textit{polygamma function} $\phi^{\left(  k\right)  }$:%
\[
\phi^{\left(  k\right)  }\left(  z\right)  :=\left(  \frac{d}{dz}\right)
^{k+1}\left(  \ln\Gamma\left(  z\right)  \right)  .
\]
One can show that%
\[
\kappa_{k}=\phi^{\left(  k-1\right)  }\left(  n\right)  .
\]
The third step is to note that, for positive integers $n$ and $k\geq2$, we
have%
\[
\phi^{\left(  k-1\right)  }\left(  n\right)  =\left(  -1\right)  ^{k}\left(
k-1\right)  !\sum_{j=n}^{\infty}\frac{1}{j^{k}}=\left(  -1\right)  ^{k}\left(
k-1\right)  !\left(  \zeta\left(  k\right)  -\sum_{j=1}^{n-1}\frac{1}{j^{k}%
}\right)  ,
\]
where $\zeta\left(  k\right)  =\sum_{j=1}^{\infty}\frac{1}{j^{k}}$\ is the
Riemann zeta function (which satisfies $\zeta\left(  2\right)  =\pi^{2}/6$ and
$\zeta\left(  4\right)  =\pi^{4}/90$). \ On the other hand, noting that
$\zeta\left(  1\right)  $\ diverges, when $k=1$\ we instead have%
\[
\phi^{\left(  0\right)  }\left(  n\right)  =-\gamma+\sum_{j=1}^{n-1}\frac
{1}{j}.
\]

\end{proof}

Note that, for any real random variable $X$, by convexity we have%
\[
\operatorname{E}\left[  \left\vert X-\operatorname{E}\left[  X\right]
\right\vert ^{3}\right]  \leq\operatorname{E}\left[  \left(
X-\operatorname{E}\left[  X\right]  \right)  ^{4}\right]  ^{3/4}.
\]
So one corollary of Lemma \ref{heartbreak}\ is%
\[
\operatorname{E}\left[  \left\vert \ell_{n}-\operatorname{E}\left[  \ell
_{n}\right]  \right\vert ^{3}\right]  \leq\frac{3^{3/4}+o\left(  1\right)
}{n^{3/2}}.
\]
Thus, in the notation of Theorem \ref{berryesseen}, we have%
\[
\frac{\rho}{\sigma^{3}\sqrt{n}}=O\left(  \frac{1/n^{3/2}}{\left(  1/\sqrt
{n}\right)  ^{3}\sqrt{n}}\right)  =O\left(  \frac{1}{\sqrt{n}}\right)  .
\]
This yields the following:

\begin{theorem}
\label{closetogauss}Let $L=\ln R$, and let $L^{\prime}$\ be a real Gaussian
with mean $\operatorname{E}\left[  \ell_{n}\right]  \cdot n$\ and variance
$\operatorname*{Var}\left[  \ell_{n}\right]  \cdot n$. \ Then for all
$x\in\mathbb{R}$,%
\[
\left\vert \Pr\left[  L\leq x\right]  -\Pr\left[  L^{\prime}\leq x\right]
\right\vert =O\left(  \frac{1}{\sqrt{n}}\right)  .
\]

\end{theorem}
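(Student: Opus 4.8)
The plan is to apply the Berry--Esseen theorem (Theorem~\ref{berryesseen}) directly, taking the iid summands to be $Z_i := \ln R_i$, each of which is distributed as $\ell_n$ by the remarks preceding the theorem. The sum $Z = Z_1 + \cdots + Z_n$ is then exactly $L = \ln R$, and the Gaussian $W$ supplied by Berry--Esseen, having mean $\operatorname{E}[\ell_n]\cdot n$ and variance $\operatorname{Var}[\ell_n]\cdot n$, is exactly the $L'$ in the statement. So everything reduces to checking the hypotheses and showing that the Berry--Esseen error term $C\rho/(\sigma^3\sqrt n)$ is $O(1/\sqrt n)$.

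First I would read off the three parameters from Lemma~\ref{heartbreak}: $\upsilon = \operatorname{E}[\ell_n] = \ln n - O(1/n)$; $\sigma^2 = \operatorname{Var}[\ell_n] = (1+o(1))/n$, which in particular is strictly positive for large $n$ as the theorem requires; and, since Lemma~\ref{heartbreak} gives finite closed forms for the low-order moments of $\ell_n$, $\rho = \operatorname{E}[|\ell_n - \upsilon|^3] < \infty$. The one genuinely needed estimate is an upper bound on $\rho$: here I would invoke the power-mean (Lyapunov) inequality $\operatorname{E}[|\ell_n - \upsilon|^3] \le \operatorname{E}[(\ell_n - \upsilon)^4]^{3/4}$, which combined with the fourth-central-moment formula $\operatorname{E}[(\ell_n - \upsilon)^4] = (3+o(1))/n^2$ from Lemma~\ref{heartbreak} gives $\rho \le (3^{3/4}+o(1))/n^{3/2}$. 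This is precisely the corollary already recorded in the paragraph immediately before the theorem. Plugging in $\sigma^3 = ((1+o(1))/n)^{3/2} = \Theta(n^{-3/2})$ then yields
\[
\frac{C\rho}{\sigma^3\sqrt n} = O\!\left(\frac{n^{-3/2}}{n^{-3/2}\cdot n^{1/2}}\right) = O\!\left(\frac{1}{\sqrt n}\right),
\]
uniformly in $x$, which is exactly the claimed bound.

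Honestly, there is no serious obstacle left: the real work has been front-loaded into Lemma~\ref{heartbreak}'s cumulant/polygamma computation. The only point that deserves a sentence of care is that this is not a classical CLT for a fixed sequence of summands but a \emph{triangular array} --- the common law $\ell_n$ of the $n$ summands itself depends on $n$. This is harmless: Berry--Esseen is a finite-$n$ statement and applies verbatim for each $n$, so all we need is that the first three (absolute, central) moments of $\ell_n$ scale like $\ln n$, $n^{-1}$, and $n^{-3/2}$ respectively --- never a limiting distribution of $\ell_n$ on its own. Combining the above observations completes the proof.
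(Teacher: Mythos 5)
Your proposal is correct and is essentially identical to the paper's argument: the paper also applies Theorem~\ref{berryesseen} to the iid summands $\ln R_i \sim \ell_n$, bounds $\rho$ via the convexity inequality $\operatorname{E}[|\ell_n-\operatorname{E}[\ell_n]|^3]\leq\operatorname{E}[(\ell_n-\operatorname{E}[\ell_n])^4]^{3/4}$ together with Lemma~\ref{heartbreak}, and concludes $\rho/(\sigma^3\sqrt{n})=O(1/\sqrt{n})$. Your remark about the triangular-array issue is a correct and worthwhile clarification, but it does not change the route.
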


Theorem \ref{closetogauss}\ has the following corollary.

\begin{corollary}
\label{er1cor}We have%
\[
\Pr_{\mathcal{N}}\left[  \left\vert R^{\ast}-1\right\vert \geq\frac{1}%
{2}\right]  \geq0.586-O\left(  \frac{1}{\sqrt{n}}\right)  ,~~~~~\frac{1}%
{2}\operatorname*{E}\limits_{\mathcal{N}}\left[  \left\vert R^{\ast
}-1\right\vert \right]  \geq0.146-O\left(  \frac{1}{\sqrt{n}}\right)  .
\]

\end{corollary}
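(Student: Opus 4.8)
The plan is to reduce everything, via Theorem~\ref{closetogauss}, to a computation with the approximating Gaussian $L'$, and then to read off the numerical constants from the asymptotics in Lemma~\ref{heartbreak}. Recall that $R^{\ast}=e^{L-n\ln n}$, so $\ln R^{\ast}=L-n\ln n$, and that $L'\sim\mathcal{N}\!\left(n\operatorname{E}[\ell_n],\,n\operatorname*{Var}[\ell_n]\right)$ matches the CDF of $L$ to within $O(1/\sqrt n)$. From Lemma~\ref{heartbreak}, $n\operatorname*{Var}[\ell_n]=1+O(1/n)$; and using the standard harmonic-number expansion $\sum_{j=1}^{n-1}1/j=\ln n+\gamma-\tfrac1{2n}+O(1/n^{2})$ one extracts $\operatorname{E}[\ell_n]=\ln n-\tfrac1{2n}+O(1/n^{2})$, hence $n\operatorname{E}[\ell_n]-n\ln n=-\tfrac12+O(1/n)$. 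Thus $\ln R^{\ast}$ agrees in CDF, up to $O(1/\sqrt n)$, with a Gaussian of mean $-\tfrac12+O(1/n)$ and variance $1+O(1/n)$; equivalently $R^{\ast}$ is approximately lognormal, consistent with the exact identity $\operatorname{E}[R^{\ast}]=1$.

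For the first inequality, write the disjoint union $\{|R^{\ast}-1|\ge\tfrac12\}=\{R^{\ast}\le\tfrac12\}\cup\{R^{\ast}\ge\tfrac32\}$, so that
\[
\Pr_{\mathcal N}\!\left[|R^{\ast}-1|\ge\tfrac12\right]=\Pr\!\left[L\le n\ln n+\ln\tfrac12\right]+\Pr\!\left[L\ge n\ln n+\ln\tfrac32\right].
\]
I would apply Theorem~\ref{closetogauss} at the two thresholds, standardize using the mean and variance above, and evaluate the normal CDF: in the limit this gives $\Phi\!\left(\ln\tfrac12+\tfrac12\right)+1-\Phi\!\left(\ln\tfrac32+\tfrac12\right)=\Phi(-0.193\ldots)+1-\Phi(0.905\ldots)>0.60$. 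Since $\Phi$ is Lipschitz, the $O(1/n)$ slack in the standardized thresholds and the $O(1/\sqrt n)$ Berry--Esseen error combine into a single $O(1/\sqrt n)$ term, and taking a conservative value $0.586$ below the limiting $\approx0.606$ yields the stated bound.

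The second inequality is then immediate, with no further analysis of $R^{\ast}$: for any real random variable $Y$ one has $\operatorname{E}[|Y|]\ge\tfrac12\Pr[|Y|\ge\tfrac12]$, so with $Y=R^{\ast}-1$,
\[
\tfrac12\operatorname*{E}_{\mathcal N}\!\left[|R^{\ast}-1|\right]\ \ge\ \tfrac14\Pr_{\mathcal N}\!\left[|R^{\ast}-1|\ge\tfrac12\right]\ \ge\ \tfrac14\bigl(0.586-O(1/\sqrt n)\bigr)\ \ge\ 0.146-O(1/\sqrt n).
\]
(A sharper constant is available: $\tfrac12\operatorname{E}[|R^{\ast}-1|]=\operatorname{E}[(1-R^{\ast})^{+}]=\int_0^1\Pr[R^{\ast}<s]\,ds$, which by Theorem~\ref{closetogauss} is within $O(1/\sqrt n)$ of $\int_0^1\Phi(\ln s+\tfrac12)\,ds\approx0.38$; but this is not needed.) The only step requiring genuine care is extracting the location parameter $-\tfrac12$ from Lemma~\ref{heartbreak}: getting the $O(1/n)$ term in $\operatorname{E}[\ell_n]$ wrong would shift every numerical constant, so one must retain the second-order term in the harmonic-number asymptotics. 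Everything else is substitution into a normal table, so I do not anticipate a real obstacle.
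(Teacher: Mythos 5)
Your proposal is correct and follows essentially the same route as the paper: split $\{|R^{\ast}-1|\ge\tfrac12\}$ into the two tails $\{R^{\ast}\le\tfrac12\}$ and $\{R^{\ast}\ge\tfrac32\}$, pass to $L=\ln R$ via Theorem~\ref{closetogauss}, standardize using the mean and variance from Lemma~\ref{heartbreak}, and get the second inequality from the trivial bound $\operatorname{E}[|Y|]\ge\tfrac12\Pr[|Y|\ge\tfrac12]$. The only (cosmetic) difference is that you extract the limiting parameters $\operatorname{E}[\ln R^{\ast}]\to-\tfrac12$, $\operatorname{Var}\to1$ and absorb the $O(1/n)$ slack by Lipschitz continuity of $\Phi$, whereas the paper uses the two-sided bounds $\operatorname{E}[L']\in[n\ln n-\gamma,\,n\ln n-\tfrac12]$ and $\operatorname{Var}[L']\ge1$ directly, evaluating each tail at its worst-case endpoint; both give the stated constant $0.586$.
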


\begin{proof}
Letting $L^{\prime}$ be the real Gaussian from Theorem \ref{closetogauss},
observe that%
\[
\operatorname*{E}\left[  L^{\prime}\right]  =\operatorname{E}\left[  \ell
_{n}\right]  \cdot n=\left(  -\gamma+\sum_{j=1}^{n-1}\frac{1}{j}\right)  \cdot
n\in\left[  n\ln n-\gamma,n\ln n-\frac{1}{2}\right]
\]
and%
\[
\operatorname*{Var}\left[  L^{\prime}\right]  =\operatorname*{Var}\left[
\ell_{n}\right]  \cdot n=\left(  \frac{\pi^{2}}{6}-\sum_{j=1}^{n-1}\frac
{1}{j^{2}}\right)  \cdot n\geq1.
\]
So recalling that $R=n^{n}R^{\ast}$\ and $L=\ln R$, we have%
\begin{align*}
\Pr_{\mathcal{N}}\left[  \left\vert R^{\ast}-1\right\vert \geq\frac{1}%
{2}\right]   &  =\Pr_{\mathcal{N}}\left[  R\leq\frac{n^{n}}{2}\right]
+\Pr_{\mathcal{N}}\left[  R\geq\frac{3n^{n}}{2}\right] \\
&  =\Pr_{\mathcal{N}}\left[  L\leq n\ln n-\ln2\right]  +\Pr_{\mathcal{N}%
}\left[  L\geq n\ln n+\ln\frac{3}{2}\right] \\
&  \geq\Pr\left[  L^{\prime}\leq n\ln n-\ln2\right]  +\Pr\left[  L^{\prime
}\geq n\ln n+\ln\frac{3}{2}\right]  -O\left(  \frac{1}{\sqrt{n}}\right) \\
&  \geq\int_{-\infty}^{-\ln2+1/2}\frac{e^{-x^{2}/2}}{\sqrt{2\pi}}dx+\int%
_{\ln3/2+\gamma}^{\infty}\frac{e^{-x^{2}/2}}{\sqrt{2\pi}}dx-O\left(  \frac
{1}{\sqrt{n}}\right) \\
&  \geq0.586-O\left(  \frac{1}{\sqrt{n}}\right)
\end{align*}
and%
\[
\frac{1}{2}\operatorname*{E}\limits_{\mathcal{N}}\left[  \left\vert R^{\ast
}-1\right\vert \right]  \geq\frac{1}{4}\Pr_{\mathcal{N}}\left[  \left\vert
R^{\ast}-1\right\vert \geq\frac{1}{2}\right]  \geq0.146-O\left(  \frac
{1}{\sqrt{n}}\right)  .
\]

\end{proof}

In particular, combining Lemma \ref{rlem} with Corollary \ref{er1cor} gives%
\[
\Pr_{\mathcal{H}}\left[  R^{\ast}\geq1\right]  -\Pr_{\mathcal{N}}\left[
R^{\ast}\geq1\right]  \geq0.146-O\left(  \frac{1}{\sqrt{n}}\right)  ,
\]
meaning that $R^{\ast}\left(  X\right)  $\ does indeed distinguish the case
$X\sim\mathcal{H}$\ from the case $X\sim\mathcal{N}$\ with constant bias.

Having established that, the last step is to deduce that, if $A$ is a
Haar-random \textsc{BosonSampling}\ matrix with $m\gg n$, then with high
probability over $A$, the row-norm estimator $R^{\ast}\left(  X\right)
$\ \textit{also} distinguishes the case $S\sim\mathcal{D}_{A}$\ from the case
$S\sim\mathcal{U}$\ with constant bias. \ However, this step is precisely
parallel to the analogous step in Section \ref{DEV}: once again, we use the
fact that an $n\times n$\ submatrix of such an $A$\ is close in variation
distance to an iid Gaussian matrix. \ Again, the main technical complications
are that we want to restrict attention to collision-free $S$'s only, and that
we need to argue that different $n\times n$\ submatrices of $A$ are close to
independent, in order to show that quantities such as $\Pr_{S\sim
\mathcal{D}_{A}}\left[  R^{\ast}\left(  A_{S}\right)  \geq1\right]  $\ have
small variances. \ If these complications are handled in precisely the same
way as in Section \ref{DEV}, then we immediately obtain Theorem \ref{detect}
from Section \ref{RESULTS}: namely, that asking whether $R^{\ast}\left(
A_{S}\right)  \geq1$\ suffices to distinguish $S\sim\mathcal{D}_{A}$\ from
$S\sim\mathcal{U}$\ with constant bias, with probability $1-O\left(
\delta\right)  $\ over a Haar-random $A\in\mathbb{C}^{m\times n}$\ ($m\geq
n^{5.1}/\delta$) and for sufficiently large $n$.

\subsection{Classical Mockup Distribution\label{MOCKUP}}

We showed that the row-norm estimator $R^{\ast}$\ can distinguish a generic
\textsc{BosonSampling}\ distribution $\mathcal{D}_{A}$\ from the uniform
distribution $\mathcal{U}$---but not, of course, that $R^{\ast}$%
\ distinguishes $\mathcal{D}_{A}$\ from \textit{any} classically-samplable
distribution. \ And indeed, in this subsection we point out that there
\textit{are} natural distributions that are easy to sample classically, but
that $R^{\ast}$ fails to distinguish from $\mathcal{D}_{A}$.

Given a \textsc{BosonSampling}\ matrix $A\in\mathbb{C}^{m\times n}$, let
$A^{\#}$\ be the $m\times n$\ matrix whose $\left(  i,j\right)  $\ entry is
$\left\vert a_{ij}\right\vert ^{2}$. \ Also, given $S=\left(  s_{1}%
,\ldots,s_{m}\right)  \in\Phi_{m,n}$, let $A_{S}^{\#}$\ be the $n\times
n$\ matrix consisting of $s_{1}$\ copies of $A^{\#}$'s first row, $s_{2}%
$\ copies of $A^{\#}$'s second row, and so on (precisely analogous to $A_{S}%
$). \ Finally, let $\mathcal{M}_{A}$, the \textquotedblleft classical mockup
distribution for $A$,\textquotedblright\ be the distribution over $\Phi_{m,n}%
$\ defined by%
\[
\Pr_{\mathcal{M}_{A}}\left[  S\right]  =\frac{\operatorname*{Per}\left(
A_{S}^{\#}\right)  }{s_{1}!\cdots s_{m}!}.
\]
We claim that $\mathcal{M}_{A}$\ is easy to sample in classical polynomial
time (in fact, in $O\left(  mn\right)  $\ time). \ To do so, for each
$j:=1$\ to $n$, just sample an index $h_{j}\in\left[  m\right]  $\ from the
probability distribution%
\[
\left(  \left\vert a_{1j}\right\vert ^{2},\ldots,\left\vert a_{mj}\right\vert
^{2}\right)  .
\]
Then for all $i\in\left[  m\right]  $, let $s_{i}:=\left\vert \left\{
j:h_{j}=i\right\}  \right\vert $\ , and output\ $\left(  s_{1},\ldots
,s_{m}\right)  $ as $S$. \ It is not hard to see that this algorithm will
output a given $S=\left(  s_{1},\ldots,s_{m}\right)  $ with probability
exactly equal to $\operatorname*{Per}\left(  A_{S}^{\#}\right)  /s_{1}!\cdots
s_{m}!$\ (the numerator comes from summing over all $n!$ possible permutations
of $h_{j}$'s that yield $S$, while the denominator is the size of $S$'s
automorphism group). \ This also implies that $\mathcal{M}_{A}$\ was a
normalized probability distribution in the first place.

Note that there is a clear \textquotedblleft physical\textquotedblright%
\ reason why $\mathcal{M}_{A}$\ is samplable in classical polynomial time.
\ Namely, $\mathcal{M}_{A}$\ is just the distribution output by a
\textsc{BosonSampling}\ device with matrix $A$, if the $n$ input photons are
all \textit{distinguishable}---or equivalently, if they behave as classical
particles rather than as bosons. \ In other words, $\mathcal{M}_{A}$ simply
models $n$ balls being thrown independently into $m$ bins (possibly with a
different distribution for each ball).

On the other hand, we now observe that \textit{our row-norm estimator,
}$R^{\ast}\left(  A_{S}\right)  $, \textit{fails completely to distinguish\ }%
$\mathcal{D}_{A}$\textit{\ from its \textquotedblleft classical
mockup\textquotedblright\ }$\mathcal{M}_{A}$\textit{.} \ The reason is just
that $\Pr_{\mathcal{M}_{A}}\left[  S\right]  $\ is correlated with $R^{\ast
}\left(  A_{S}\right)  $\ in exactly the same way that%
\[
\Pr_{\mathcal{D}_{A}}\left[  S\right]  =\frac{\left\vert \operatorname*{Per}%
\left(  A_{S}\right)  \right\vert ^{2}}{s_{1}!\cdots s_{m}!}%
\]
is correlated with $R^{\ast}\left(  A_{S}\right)  $. \ Indeed, \textit{both}
of these probabilities can be written as the product of $R^{\ast}\left(
A_{S}\right)  $\ with a random variable (the permanent or absolute-squared
permanent of a row-normalized matrix, respectively) that is independent of
$R^{\ast}\left(  A_{S}\right)  $. \ As a result, any verification test only
involving $R^{\ast}$---for example, accepting $S$ if and only if $R^{\ast
}\left(  A_{S}\right)  \geq1$---will accept with exactly the same probability
for $S\sim\mathcal{M}_{A}$\ as for $S\sim\mathcal{D}_{A}$.

Note that there are other distributions, besides $\mathcal{M}_{A}$, with the
same two properties: that they can be sampled in classical polynomial time,
but that the row-norm estimator $R^{\ast}$\ fails completely to distinguish
them from $\mathcal{D}_{A}$. \ One nice example is the
\textsc{FermionSampling}\ distribution $\mathcal{F}_{A}$: see Appendix
\ref{FERMION} for a detailed definition, as well as an $O\left(
mn^{2}\right)  $\ classical sampling algorithm. \ In $\mathcal{F}_{A}$, the
probabilities of collision-free outcomes $S\in\Lambda_{m,n}$\ are given by
$\left\vert \operatorname*{Det}\left(  A_{S}\right)  \right\vert ^{2}$\ rather
than $\left\vert \operatorname*{Per}\left(  A_{S}\right)  \right\vert ^{2}$.
\ But since the determinant is affected by scaling of rows in exactly the same
way as the permanent,\ it follows that $\mathcal{F}_{A}$\ must satisfy the
same row-norm statistics as $\mathcal{D}_{A}$.

Yet another example---and perhaps the simplest---is the distribution
$\mathcal{B}_{A}$\ obtained by sampling $n$\ rows $h_{1},\ldots,h_{n}%
\in\left[  m\right]  $\ independently, with each row $h_{j}$ drawn from the
same distribution%
\[
\Pr\left[  h_{j}=h\right]  =\frac{\left\vert a_{h1}\right\vert ^{2}%
+\cdots+\left\vert a_{hn}\right\vert ^{2}}{n},
\]
and then outputting $s_{i}:=\left\vert \left\{  j:h_{j}=i\right\}  \right\vert
$ as $S=\left(  s_{1},\ldots,s_{m}\right)  $. \ Like $\mathcal{M}_{A}$, the
distribution $\mathcal{B}_{A}$\ is classically samplable in $O\left(
mn\right)  $\ time. \ But again, the probabilities in $\mathcal{B}_{A}$\ are
affected by row scaling in exactly the same way as the probabilities
in\ $\mathcal{D}_{A}$, $\mathcal{M}_{A}$, and $\mathcal{F}_{A}$.

What are we to make of the above observations? \ Arguably, they merely
underscore what we said from the beginning: that the row-norm estimator cannot
prove, by itself, that \textsc{BosonSampling} is being solved. \ Indeed, it
can't even be used to prove the presence of quantum interference in an alleged
\textsc{BosonSampling}\ device. \ If $R^{\ast}\left(  A_{S}\right)
$\ satisfies the expected statistics, then we know that the device's output is
\textit{not} uniform random noise---and moreover, that the device samples from
\textit{some} distribution that depends nontrivially on the actual entries of
$A$, in way consistent with correct \textsc{BosonSampling}. \ If this were
combined with other evidence---e.g., verification with smaller numbers of
photons, verification that the multi-photon collisions satisfy the expected
statistics, and direct ruling out of alternatives such as $\mathcal{M}_{A}%
$---it would arguably provide circumstantial evidence that the device was
working properly, even with hundreds or thousands of photons.

Along those lines, we now observe that, if we only want to verify that a
\textsc{BosonSampling}\ device is \textit{not} sampling from $\mathcal{M}_{A}%
$\ (or from any distribution close to $\mathcal{M}_{A}$\ in variation
distance), then almost certainly this can be done in classical polynomial
time. \ The reason is that the probabilities in $\mathcal{M}_{A}$\ are given
by permanents of $n\times n$\ \textit{nonnegative} matrices---but such
permanents can be approximated to within $\varepsilon$\ multiplicative
error\ in $\operatorname*{poly}\left(  n,1/\varepsilon\right)  $ time, using
the famous randomized algorithm of Jerrum, Sinclair, and Vigoda \cite{jsv}%
.\ \ Thus, given experimental outcomes $S_{1},\ldots,S_{k}\in\Lambda_{m,n}$,
in classical polynomial time we can approximate $\operatorname*{Per}\left(
A_{S_{1}}^{\#}\right)  ,\ldots,\operatorname*{Per}\left(  A_{S_{k}}%
^{\#}\right)  $, then check whether they satisfy the statistics that we expect
if the $S_{i}$'s were drawn from $\mathcal{M}_{A}$. \ For similar but even
simpler reasons, we can almost certainly rule out, in classical polynomial
time, that a \textsc{BosonSampling}\ device\ is sampling from the particular
\textquotedblleft mockup\textquotedblright\ distributions $\mathcal{F}_{A}%
$\ or $\mathcal{B}_{A}$.

Admittedly, to make the argument completely rigorous, we would need to prove
that, with high probability over a Haar-random $A$, the \textsc{BosonSampling}%
\ distribution\ $\mathcal{D}_{A}$\ does \textit{not} give rise to the same
statistics for $\operatorname*{Per}\left(  A_{S}^{\#}\right)  $\ as
$\mathcal{M}_{A}$ does, or the same statistics for $\left\vert
\operatorname*{Det}\left(  A_{S}\right)  \right\vert ^{2}$\ as $\mathcal{F}%
_{A}$ does, or the same statistics for $\Pr_{\mathcal{B}_{A}}\left[  S\right]
$\ as\ $\mathcal{B}_{A}$ does. \ These statements are presumably true, but we
leave their proofs to future work.

\section{Summary and Open Problems\label{CONC}}

We began this paper by considering certain claims about \textsc{BosonSampling}%
\ made by Gogolin et al.\ \cite{gogolin}. \ We found those claims to be
misleading on at least three different levels. \ First, when testing a
\textsc{BosonSampling}\ device's output against the theoretical predictions,
there is not the slightest reason to ignore the labels of the modes---and once
the mode labels are accounted for, Gogolin et al.'s entire argument for the
\textquotedblleft near-uniformity\textquotedblright\ of the output
distribution collapses. \ Second, the observation that certifying a
\textsc{BosonSampling}\ distribution might be classically hard is not new; we
made it in \cite{aark}. \ And third, a \textsc{BosonSampling}\ device can
easily be \textit{faster} than its fastest classical certification, without
classical certification being practically impossible: indeed there exists a
regime (around $n=30$\ photons) where that is precisely what one would expect.
\ Moreover, we did not need any nontrivial technical work to reach these
conclusions: had we been content to refute Gogolin et al., this paper could
have been extremely short.

Having said that, Gogolin et al.'s paper does have the virtue that it suggests
interesting questions about the statistical aspects of \textsc{BosonSampling}.
\ So in the remainder of the paper, we took the opportunity to address some of
those questions. \ First, we proved what had previously been known only
heuristically: that given a Haar-random $A\in\mathbb{C}^{m\times n}$, with
high probability the\ \textsc{BosonSampling}\ distribution $\mathcal{D}_{A}%
$\ will have noticeable \textquotedblleft fluctuations\textquotedblright\ in
the probabilities\ (some outcomes being more likely, others less), which
easily distinguish $\mathcal{D}_{A}$ from the uniform distribution
$\mathcal{U}$. \ More surprisingly, we showed that $\mathcal{D}_{A}$\ can even
be distinguished from $\mathcal{U}$ in \textit{classical polynomial time},
using a simple row-norm estimator.\ \ As we pointed out in Section
\ref{INTRACT}, \textsc{BosonSampling}\ experiments with (say) $n=30$\ photons
could still be feasible and interesting, even if distinguishing $\mathcal{D}%
_{A}$\ from $\mathcal{U}$\ were asymptotically hard---but, as it turns out, it
isn't hard.

Needless to say, many open problems remain.

\begin{enumerate}
\item[(1)] Given a Gaussian matrix $X\sim\mathcal{N}\left(  0,1\right)
_{\mathbb{C}}^{n\times n}$, we showed that, in polynomial time, one can
compute a quantity $R\left(  X\right)  $\ that slightly but nontrivially
correlates with $\left\vert \operatorname*{Per}\left(  X\right)  \right\vert
^{2}$. \ This raises the obvious question: how well \textit{can} one
approximate $\left\vert \operatorname*{Per}\left(  X\right)  \right\vert ^{2}%
$\ in polynomial time, for a large fraction of Gaussian $X$'s? \ Our
\textquotedblleft Permanent of Gaussians Conjecture\textquotedblright\ (PGC),
from \cite{aark}, said that approximating $\left\vert \operatorname*{Per}%
\left(  X\right)  \right\vert ^{2}$\ to within $\pm\varepsilon n!$\ additive
error,\ for a $1-\delta$\ fraction of $X$'s, should be $\mathsf{\#P}$-hard, if
we take $n+1/\varepsilon+1/\delta$\ as the \textquotedblleft input
length.\textquotedblright\ \ But there remains an enormous gap in parameters
between that hardness conjecture and the weak approximation algorithm given
here. \ So in particular, even if we assume the PGC, it's perfectly
conceivable that much better approximation algorithms for $\left\vert
\operatorname*{Per}\left(  X\right)  \right\vert ^{2}$\ exist.

As a simple example, Linial, Samorodnitsky, and Wigderson \cite{lsw} proposed
an approximation algorithm for the permanent that works by first normalizing
all of the rows to $1$, then normalizing all of the columns to $1$, then
normalizing the rows, and so on iteratively until convergence; then using the
final product of row- and column-multipliers to produce an estimate for
$\operatorname*{Per}\left(  X\right)  $.\footnote{Actually, Linial et al.'s
algorithm is for approximating $\operatorname*{Per}\left(  X\right)  $\ where
$X$ is a \textit{nonnegative} matrix, and it proceeds by iteratively
normalizing the $1$\textit{-}norms of the rows and columns (i.e., the sums of
the entries). \ However, one could easily adapt their algorithm to attempt to
approximate $\left\vert \operatorname*{Per}\left(  X\right)  \right\vert ^{2}%
$, where $X$ is an arbitrary matrix, by instead normalizing the $2$-norms of
the rows and columns.} \ How good of an approximation does their algorithm
produce, for a Gaussian matrix $X\sim\mathcal{N}\left(  0,1\right)
_{\mathbb{C}}^{n\times n}$? \ Can we show that, with high probability, it
produces a \textit{better} approximation than our algorithm from Section
\ref{DETECT}, which normalized the rows only? \ Note that the techniques of
Section \ref{DETECT}\ no longer work, since one \textit{cannot} decompose
$\operatorname*{Per}\left(  X\right)  $\ as a product of Linial et al.'s
estimator and an independent random variable.

\item[(2)] Can we prove that a generic \textsc{BosonSampling}\ distribution
$\mathcal{D}_{A}$\ can be distinguished, in classical polynomial time, from
the classical \textquotedblleft mockup\textquotedblright\ distribution
$\mathcal{M}_{A}$\ defined in Section \ref{MOCKUP}? \ In Section \ref{MOCKUP},
we sketched a polynomial-time estimator to distinguish $\mathcal{D}_{A}%
$\ from\ $\mathcal{M}_{A}$\ (based on the famous permanent approximation
algorithm of Jerrum, Sinclair, and Vigoda \cite{jsv}), but it remains to prove
that our estimator works---or even, for that matter, that $\left\Vert
\mathcal{D}_{A}-\mathcal{M}_{A}\right\Vert =\Omega\left(  1\right)  $\ with
high probability over $A$. \ One can ask similar questions about
distinguishing $\mathcal{D}_{A}$\ from $\mathcal{F}_{A}$\ and from
$\mathcal{B}_{A}$.

\item[(3)] Of course, the broader question is whether there exists an
efficiently-samplable mockup distribution, call it $\mathcal{M}_{A}^{\prime}$,
that cannot be distinguished from $\mathcal{D}_{A}$\ by \textit{any} classical
polynomial-time algorithm. \ In Appendix \ref{FIXED}, we report an observation
of Fernando Brandao (personal communication): that, for any \textit{fixed}
$k$, there exists an efficiently-samplable distribution that has large
min-entropy, but that cannot be distinguished from (the collision-free part
of) $\mathcal{D}_{A}$\ by any algorithm running in $O\left(  \left(
m+n\right)  ^{k}\right)  $\ time. \ However, while this sounds striking, it
says very little about \textsc{BosonSampling}\ specifically. \ Indeed, if we
removed the requirement of large min-entropy, then as we point out in Appendix
\ref{FIXED}, it would be trivial to construct such a \textquotedblleft
mockup\textquotedblright\ for \textit{any} distribution $\mathcal{D}$
whatsoever! \ So in our view, the \textquotedblleft real\textquotedblright%
\ question is whether or not there exists a \textit{single}
efficiently-samplable mockup distribution that works against \textit{all}
polynomial-time distinguishers.

\item[(4)] Everything in (1)-(3) can be asked anew, when we consider various
\textquotedblleft experimentally realistic\textquotedblright\ variations of
\textsc{BosonSampling}. \ For example, what happens if some random subset of
the photons gets lost between the sources and the detectors? \ Or what happens
if the inputs are Gaussian states, rather than single-photon Fock states? \ In
Appendix \ref{INITIAL}, we explain a simple reason why even in these cases,
the row-norm estimator $R$\ still easily distinguishes the
\textsc{BosonSampling}\ distribution from the uniform distribution
$\mathcal{U}$. \ On the other hand, it is possible that the distinguishability
between $\mathcal{D}_{A}$\ and $\mathcal{M}_{A}$,\ or between $\mathcal{D}%
_{A}$\ and various other \textquotedblleft mockup\textquotedblright%
\ distributions, decreases when these or other experimental errors are incorporated.

\item[(5)] For most of this paper, we restricted attention to collision-free
outcomes $S\in\Lambda_{m,n}$. \ There were several reasons for this: first, it
greatly simplified the calculations; second, it mirrored the hardness results
of \cite{aark}, which also used collision-free outcomes only; and third, any
positive results about, say, the distinguishability of $\mathcal{D}_{A}$\ and
$\mathcal{U}$\ only become \textit{stronger} under such a restriction.
\ However, what happens when we \textquotedblleft put multi-photon collisions
back in\textquotedblright? \ For example, can we give stronger distinguishing
and verification algorithms, by taking advantage of the collision outcomes
$S\in\Phi_{m,n}\setminus\Lambda_{m,n}$\ that we previously discarded?

\item[(6)] All of our results in this paper hold with high probability for a
Haar-random \textsc{BosonSampling}\ matrix $A\in\mathbb{C}^{m\times n}$, where
(say) $m\geq n^{5.1}$. \ What can we say if $A$ is arbitrary rather than
random? \ What about if $m$ is smaller---say, $O\left(  n^{2}\right)  $ or
even $O\left(  n\right)  $? \ Finally, by choosing $m\geq n^{5.1}p\left(
n\right)  $, all of our results can be made to hold with probability at least
$1-1/p\left(  n\right)  $\ over a Haar-random $A$,\ where $p$ is any desired
polynomial. \ Can we show that they hold even with probability $1-1/\exp
\left(  n\right)  $ over $A$, and even for fixed $m$?

\item[(7)] We showed that $\left\Vert \mathcal{D}_{A}-\mathcal{U}\right\Vert
=\Omega\left(  1\right)  $\ with high probability over $A$. \ Can we show the
stronger result that $\left\Vert \mathcal{D}_{A}-\mathcal{U}\right\Vert
=1-o\left(  1\right)  $?

\item[(8)] We show in Appendix \ref{FERMION}\ that, if $X\sim\mathcal{N}$\ is
Gaussian, then $\left\vert \operatorname*{Det}\left(  X\right)  \right\vert
^{2}$\ is $O\left(  1/\log^{3/2}n\right)  $-close to a lognormal random
variable. \ However, we know that this approximation must break down when
$\left\vert \operatorname*{Det}\left(  X\right)  \right\vert ^{2}$\ is either
very small or very large. \ So it would be nice to have a more detailed
understanding of the pdf for $\left\vert \operatorname*{Det}\left(  X\right)
\right\vert ^{2}$---among other things, because such an understanding might
provide clues about the pdf for $\left\vert \operatorname*{Per}\left(
X\right)  \right\vert ^{2}$,\ and hence about how to prove the Permanent
Anti-Concentration Conjecture (PACC) from \cite{aark}.
\end{enumerate}

\section{Acknowledgments}

We thank Shalev Ben-David and Charles Xu for helpful discussions, and
especially Fernando Brandao\ for allowing us to include the results in
Appendix \ref{FIXED}, and Salil Vadhan and Luca Trevisan for clarifying
aspects of \cite{ttv}. \ We also thank Oded Regev and John Watrous for code
used in generating Figures \ref{pccfig}\ and \ref{rstar}.

\bibliographystyle{plain}
\bibliography{thesis}

\begin{thebibliography}{10}

\bibitem{aark}
S.~Aaronson and A.~Arkhipov.
\newblock The computational complexity of linear optics.
\newblock {\em Theory of Computing}, 9(4):143--252, 2013.
\newblock Conference version in Proceedings of ACM STOC'2011. ECCC TR10-170,
  arXiv:1011.3245.

\bibitem{ebs:broome}
M.~A. Broome, A.~Fedrizzi, S.~Rahimi-Keshari, J.~Dove, S.~Aaronson, T.~C.
  Ralph, and A.~G. White.
\newblock Photonic boson sampling in a tunable circuit.
\newblock {\em Science}, (10.1126/science.1231440), 2012.
\newblock arXiv:1212.2234.

\bibitem{cv}
K.~P. Costello and V.~H. Vu.
\newblock Concentration of random determinants and permanent estimators.
\newblock {\em SIAM J. Discrete Math}, 23(3).

\bibitem{ebs:crespi}
A.~Crespi, R.~Osellame, R.~Ramponi, D.~J. Brod, E.~F.~Galv\ {a}o, N.~Spagnolo,
  C.~Vitelli, E.~Maiorino, P.~Mataloni, and F.~Sciarrino.
\newblock Experimental boson sampling in arbitrary integrated photonic
  circuits.
\newblock {\em Nature Photonics}, 7:545--549, 2013.
\newblock arXiv:1212.2783.

\bibitem{gard}
B.~T. Gard, R.~M. Cross, M.~B. Kim, H.~Lee, and J.~P. Dowling.
\newblock Classical computers very likely can not efficiently simulate
  multimode linear optical interferometers with arbitrary {F}ock-state inputs -
  an elementary argument.
\newblock arXiv:1304.4206, 2013.

\bibitem{girko}
V.~L. Girko.
\newblock A refinement of the {C}entral {L}imit {T}heorem for random
  determinants.
\newblock {\em Teor. Veroyatnost. i Primenen}, 42:63--73, 1997.
\newblock Translation in Theory Probab. Appl 42 (1998), 121-129.

\bibitem{gogolin}
C.~Gogolin, M.~Kliesch, L.~Aolita, and J.~Eisert.
\newblock Boson-{S}ampling in the light of sample complexity.
\newblock arXiv:1306.3995, 2013.

\bibitem{jsv}
M.~Jerrum, A.~Sinclair, and E.~Vigoda.
\newblock A polynomial-time approximation algorithm for the permanent of a
  matrix with non-negative entries.
\newblock {\em J. ACM}, 51(4):671--697, 2004.
\newblock Earlier version in STOC'2001.

\bibitem{jiang}
T.~Jiang.
\newblock How many entries of a typical orthogonal matrix can be approximated
  by independent normals?
\newblock {\em Ann. Probab.}, 34(4):1497--1529, 2006.

\bibitem{knill:matchgate}
E.~Knill.
\newblock Fermionic linear optics and matchgates.
\newblock quant-ph/0108033, 2001.

\bibitem{klm}
E.~Knill, R.~Laflamme, and G.~J. Milburn.
\newblock A scheme for efficient quantum computation with linear optics.
\newblock {\em Nature}, 409:46--52, 2001.
\newblock See also quant-ph/0006088.

\bibitem{lsw}
N.~Linial, A.~Samorodnitsky, and A.~Wigderson.
\newblock A deterministic strongly polynomial algorithm for matrix scaling and
  approximate permanents.
\newblock {\em Combinatorica}, 20(4):545--568, 2000.

\bibitem{rennerwolf}
R.~Renner and S.~Wolf.
\newblock Smooth r\'{e}nyi entropy and applications.
\newblock In {\em Proc. IEEE International Symposium on Information Theory
  (ISIT)}, page 233, 2004.

\bibitem{ebs:spring}
J.~B. Spring, B.~J. Metcalf, P.~C. Humphreys, W.~S. Kolthammer, X.-M. Jin,
  M.~Barbieri, A.~Datta, N.~Thomas-Peter, N.~K. Langford, D.~Kundys, J.~C.
  Gates, B.~J. Smith, P.~G.~R. Smith, and I.~A. Walmsley.
\newblock Boson sampling on a photonic chip.
\newblock {\em Science}, (10.1126/science.1231692), 2012.
\newblock arXiv:1212.2622.

\bibitem{tao:dpp}
T.~Tao.
\newblock Determinantal processes, 2009.
\newblock Blog post.
  http://terrytao.wordpress.com/2009/08/23/determinantal-processes/.

\bibitem{td:fermion}
B.~M. Terhal and D.~P. DiVincenzo.
\newblock Classical simulation of noninteracting-fermion quantum circuits.
\newblock {\em Phys. Rev. A}, 65(032325), 2002.
\newblock quant-ph/0108010.

\bibitem{ebs:tillmann}
M.~Tillmann, B.~Daki\'{c}, R.~Heilmann, S.~Nolte, A.~Szameit, and P.~Walther.
\newblock Experimental boson sampling.
\newblock {\em Nature Photonics}, 7:540--544, 2013.
\newblock arXiv:1212.2240.

\bibitem{ttv}
L.~Trevisan, M.~Tulsiani, and S.~Vadhan.
\newblock Regularity, boosting, and efficiently simulating every high-entropy
  distribution.
\newblock In {\em Proc. IEEE Conference on Computational Complexity}, pages
  126--136, 2009.
\newblock ECCC TR08-103.

\bibitem{valiant:qc}
L.~G. Valiant.
\newblock Quantum circuits that can be simulated classically in polynomial
  time.
\newblock {\em SIAM J. Comput.}, 31(4):1229--1254, 2002.
\newblock Earlier version in STOC'2001.

\end{thebibliography}

\section{Appendix: Fixed Polynomial-Size Distinguishers\label{FIXED}}

In this appendix, we report a result due to Fernando Brandao (with Brandao's
kind permission): namely, that one can apply a recent theorem of Trevisan et
al.\ \cite{ttv} to construct an efficiently-samplable distribution that

\begin{enumerate}
\item[(i)] has large entropy (in fact, large min-entropy), and

\item[(ii)] is indistinguishable from (the collision-free part of) a generic
\textsc{BosonSampling} distribution,\ by circuits of any \textit{fixed}
polynomial size.
\end{enumerate}

We first observe that, if condition (i) is dropped, then condition (ii) is
trivial to satisfy for \textit{any} distribution whatsoever: there is nothing
specific about \textsc{BosonSampling}\ here!\footnote{We thank Salil Vadhan
for this observation.} \ To see this: first, recall that two distributions
$\mathcal{D}$\ and $\mathcal{D}^{\prime}$\ are called $\varepsilon
$\textit{-indistinguishable} with respect to a circuit $C$ if%
\[
\left\vert \Pr_{x\sim\mathcal{D}}\left[  C\left(  x\right)  \text{
accepts}\right]  -\Pr_{x\sim\mathcal{D}^{\prime}}\left[  C\left(  x\right)
\text{ accepts}\right]  \right\vert \leq\varepsilon,
\]
and are $\varepsilon$-indistinguishable with respect to a \textit{class} of
circuits $\mathcal{C}$\ if they are $\varepsilon$-indistinguishable\ with
respect to every $C\in\mathcal{C}$. \ Now let $\mathcal{D}$\ be any
distribution over $\left\{  0,1\right\}  ^{n}$. \ Then choose $w$\ elements
independently with replacement from $\mathcal{D}$, and let $\mathcal{D}%
^{\prime}$\ be the uniform distribution over the resulting multiset (so in
particular, $H\left(  \mathcal{D}^{\prime}\right)  \leq\log_{2}n$).
\ Certainly $\mathcal{D}^{\prime}$\ can be sampled by a circuit of size
$O\left(  nw\right)  $: just hardwire the elements.

Now, by a Chernoff bound, for any \textit{fixed} circuit $C$, clearly
$\mathcal{D}$\ and $\mathcal{D}^{\prime}$\ are $\varepsilon$-indistinguishable
with respect to $C$, with probability at least $1-\exp\left(  -\varepsilon
^{2}w\right)  $\ over the choice of $\mathcal{D}^{\prime}$. \ But there are
\textquotedblleft only\textquotedblright\ $n^{O\left(  n^{k}\right)  }%
$\ Boolean circuits of size at most $n^{k}$. \ So by the union bound, by
simply choosing $w=\Omega\left(  \frac{n^{k}\log n}{\varepsilon^{2}}\right)
$, we can ensure that $\mathcal{D}$\ and $\mathcal{D}^{\prime}$\ are
$\varepsilon$-indistinguishable with respect\ to \textit{all} circuits of size
at most $n^{k}$, with high probability over $\mathcal{D}^{\prime}$.

Thus, we see that the one nontrivial aspect of Brandao's observation is that,
in the case of \textsc{BosonSampling}, the \textquotedblleft
mockup\textquotedblright\ distribution $\mathcal{D}^{\prime}$\ can itself have
large entropy. \ Even for this, however, we will need very little that is
specific to \textsc{BosonSampling}: only that a generic \textsc{BosonSampling}%
\ distribution $\mathcal{D}_{A}$\ is close to a distribution with large min-entropy.

Before proving Brandao's observation, we need some more definitions. \ Given a
probability distribution $\mathcal{D}=\left\{  p_{x}\right\}  _{x}$ over
$\left[  N\right]  $, recall that the \textit{min-entropy} of $\mathcal{D}%
$\ is%
\[
H_{\min}\left(  \mathcal{D}\right)  :=\min_{x\in\left[  N\right]  }\log
_{2}\frac{1}{p_{x}},
\]
while the $2$\textit{-entropy} is%
\[
H_{2}\left(  \mathcal{D}\right)  :=-\log_{2}\sum_{x\in\left[  N\right]  }%
p_{x}^{2}.
\]
One can show that $H_{\min}\left(  \mathcal{D}\right)  \leq H_{2}\left(
\mathcal{D}\right)  \leq H\left(  \mathcal{D}\right)  $ for all $\mathcal{D}$,
where $H\left(  \mathcal{D}\right)  $\ is the ordinary Shannon entropy. \ The
following useful lemma, which we prove for completeness, provides a partial
converse to the inequality $H_{\min}\left(  \mathcal{D}\right)  \leq
H_{2}\left(  \mathcal{D}\right)  $.

\begin{lemma}
[Renner and Wolf \cite{rennerwolf}]\label{min2}For any distribution
$\mathcal{D}$\ over $\left[  N\right]  $\ and $\varepsilon>0$, there exists a
distribution $\mathcal{D}^{\prime}$\ over $\left[  N\right]  $\ such that
$\left\Vert \mathcal{D}-\mathcal{D}^{\prime}\right\Vert \leq\varepsilon$\ and
$H_{\min}\left(  \mathcal{D}^{\prime}\right)  \geq H_{2}\left(  \mathcal{D}%
\right)  -\log_{2}\frac{1}{\varepsilon}$.
\end{lemma}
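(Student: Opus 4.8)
The plan is to obtain $\mathcal{D}'$ by \emph{capping} the heavy atoms of $\mathcal{D}$ at a suitable threshold and then \emph{redistributing} the small amount of mass thereby removed onto the light atoms. Write $\mathcal{D}=\{p_x\}_{x\in[N]}$ and set $S:=\sum_x p_x^2$, so that $H_2(\mathcal{D})=-\log_2 S$. The point is that it suffices to produce $\mathcal{D}'=\{p'_x\}$ with $\|\mathcal{D}-\mathcal{D}'\|\le\varepsilon$ and $\max_x p'_x\le S/\varepsilon$, since this cap on the atoms is exactly the statement $H_{\min}(\mathcal{D}')\ge -\log_2(S/\varepsilon)=H_2(\mathcal{D})-\log_2(1/\varepsilon)$. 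We may assume $0<\varepsilon<1$: for $\varepsilon\ge 1$ the claimed inequality already holds with $\mathcal{D}'$ the uniform distribution, whose min-entropy is $\log_2 N\ge H_2(\mathcal{D})$ and which is within variation distance $1\le\varepsilon$ of anything.

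First I would isolate the set of heavy atoms $T:=\{x:\ p_x>S/\varepsilon\}$ and check that it carries little mass. For $x\in T$ we have $p_x^2>(S/\varepsilon)\,p_x$, so summing over $T$ gives $S\ge\sum_{x\in T}p_x^2>(S/\varepsilon)\sum_{x\in T}p_x$, whence $\Pr_{\mathcal{D}}[T]<\varepsilon$. If it happens that $S/\varepsilon\ge 1$ then $T=\varnothing$ (every $p_x\le 1\le S/\varepsilon$), and one simply takes $\mathcal{D}'=\mathcal{D}$, since $\max_x p_x\le 1\le S/\varepsilon$ already gives the min-entropy bound; so from now on assume $S/\varepsilon<1$.

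Next, I would set $q_x:=\min(p_x,\,S/\varepsilon)$ and $\mu:=1-\sum_x q_x=\sum_{x\in T}(p_x-S/\varepsilon)$, which satisfies $0\le\mu<\sum_{x\in T}p_x=\Pr_{\mathcal{D}}[T]<\varepsilon$. Then form $\mathcal{D}'$ by starting from $\{q_x\}$ and greedily adding the total deficit $\mu$ onto atoms with $q_x<S/\varepsilon$, never letting any $p'_x$ exceed $S/\varepsilon$. This is feasible because the total spare capacity is $\sum_x(S/\varepsilon-q_x)=N\cdot(S/\varepsilon)-(1-\mu)$, which is at least $\mu$ precisely when $N\cdot(S/\varepsilon)\ge 1$; and this last inequality holds since $S=\sum_x p_x^2\ge 1/N$ by Cauchy--Schwarz and $\varepsilon<1$. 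By construction every atom of $\mathcal{D}'$ has probability at most $S/\varepsilon$, so $H_{\min}(\mathcal{D}')\ge H_2(\mathcal{D})-\log_2(1/\varepsilon)$; and since $\mathcal{D}'$ differs from $\mathcal{D}$ only by decreasing the $T$-atoms by a total of $\mu$ and increasing a disjoint set of atoms by a total of $\mu$, we get $\|\mathcal{D}-\mathcal{D}'\|=\mu<\varepsilon$.

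I do not anticipate a real obstacle, as the whole argument is an elementary counting estimate. The one step that needs genuine care is the feasibility of the redistribution, i.e. that there is enough room below the cap to absorb the removed mass; this reduces to the standard bound $\sum_x p_x^2\ge 1/N$ together with $\varepsilon\le 1$. The only other things to keep straight are the two degenerate regimes ($\varepsilon\ge 1$, where one takes $\mathcal{D}'$ uniform, and $T=\varnothing$, where one takes $\mathcal{D}'=\mathcal{D}$), both of which are disposed of immediately.
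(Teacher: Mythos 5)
Your proof is correct and takes essentially the same cap-and-redistribute route as the paper: the paper fixes the removed mass at $\varepsilon$ and then bounds the resulting cutoff by $2^{-H_2(\mathcal{D})}/\varepsilon$, whereas you fix the cutoff at $2^{-H_2(\mathcal{D})}/\varepsilon$ and then bound the removed mass by $\varepsilon$ --- the same inequality read in opposite directions --- and your explicit feasibility check for the redistribution is a nice touch the paper glosses over. (One trivial quibble: your disposal of the $\varepsilon>1$ case via the uniform distribution does not actually verify the inequality there, since $-\log_2\frac{1}{\varepsilon}>0$ makes the target \emph{larger} than $H_2(\mathcal{D})$; but that regime is vacuous for a variation-distance bound and the lemma is only ever invoked with $\varepsilon<1$.)
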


\begin{proof}
Let $p\geq1/N$\ be a \textquotedblleft cutoff\textquotedblright\ to be
determined later. \ We define $\mathcal{D}^{\prime}=\left\{  q_{x}\right\}
_{x}$\ by starting from $\mathcal{D}=\left\{  p_{x}\right\}  _{x}$, then
cutting all probabilities $p_{x}$ such that $p_{x}>p$ down to $p$, and finally
adding back the probability mass that we removed (i.e., increasing the $q_{x}%
$'s) in any way that maintains the property $q_{x}\leq p$\ for all
$x\in\left[  N\right]  $. \ The result clearly satisfies%
\[
H_{\min}\left(  \mathcal{D}^{\prime}\right)  \geq\log_{2}\frac{1}{p}%
\]
and also%
\[
\left\Vert \mathcal{D}-\mathcal{D}^{\prime}\right\Vert =\frac{1}{2}\sum
_{x\in\left[  N\right]  }\left\vert p_{x}-q_{x}\right\vert \leq\sum
_{x:p_{x}>p}\left(  p_{x}-p\right)  .
\]
Indeed, let us simply choose $p$\ such that%
\[
\sum_{x:p_{x}>p}\left(  p_{x}-p\right)  =\varepsilon.
\]
By continuity, it is clear that such a $p$ exists for every $\varepsilon>0$.

Now notice that%
\[
\frac{1}{2^{H_{2}\left(  \mathcal{D}\right)  }}=\sum_{x\in\left[  N\right]
}p_{x}^{2}\geq\sum_{x:p_{x}>p}p_{x}^{2}>p\sum_{x:p_{x}>p}p_{x}\geq
p\varepsilon\geq\frac{\varepsilon}{2^{H_{\min}\left(  \mathcal{D}^{\prime
}\right)  }},
\]
or rearranging,%
\[
H_{\min}\left(  \mathcal{D}^{\prime}\right)  \geq H_{2}\left(  \mathcal{D}%
\right)  -\log_{2}\frac{1}{\varepsilon}.
\]

\end{proof}

We now state the result of Trevisan et al.\ \cite{ttv} that we will use.

\begin{theorem}
[Trevisan-Tulsiani-Vadhan \cite{ttv}]\label{ttvthm}Let $\mathcal{D}$ be any
distribution over $\left\{  0,1\right\}  ^{n}$\ such that $H_{\min}\left(
\mathcal{D}\right)  \geq n-k$. \ Then for every $T$ and $\varepsilon>0$, there
exists a circuit of size $\left(  T+n\right)  \operatorname*{poly}\left(
2^{k},1/\varepsilon\right)  $\ that samples a distribution $\mathcal{D}%
^{\prime}$\ that has $H_{\min}\left(  \mathcal{D}^{\prime}\right)  \geq
n-k$\ and that is $\varepsilon$-indistinguishable from $\mathcal{D}$\ by
circuits of size at most $T$.
\end{theorem}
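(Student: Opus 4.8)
The plan is to prove Theorem~\ref{ttvthm} by the boosting (multiplicative-weights) argument underlying Impagliazzo's hardcore lemma and the dense-model theorem; the construction of $\mathcal{D}'$ need not be efficient, only the resulting sampler must be small. First I would reformulate the target: a distribution $\mathcal{D}'$ over $\left\{0,1\right\}^{n}$ with $H_{\min}\left(\mathcal{D}'\right)\geq n-k$ is exactly a density $g:\left\{0,1\right\}^{n}\rightarrow\left[0,2^{-(n-k)}\right]$ with $\sum_{x}g\left(x\right)=1$; write $K$ for the polytope of all such densities. The hypothesis $H_{\min}\left(\mathcal{D}\right)\geq n-k$ says precisely that $\mathcal{D}\in K$, and this membership is the one place the hypothesis gets used. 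We seek $g\in K$ that $\varepsilon$-fools every circuit of size at most $T$ and is easy to sample.

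Second I would run the following iteration. Let $g_{0}$ be the uniform density, which lies in $K$. Given $g_{i}$, if some circuit $C_{i}$ of size at most $T$ satisfies $\operatorname{E}_{\mathcal{D}}\left[C_{i}\right]-\operatorname{E}_{g_{i}}\left[C_{i}\right]>\varepsilon$ (replace $C_{i}$ by $1-C_{i}$ to cover the opposite sign), form the exponential-weights update $\widetilde{g}_{i+1}\left(x\right)\propto g_{i}\left(x\right)\exp\left(\eta C_{i}\left(x\right)\right)$ with step size $\eta=\Theta\left(\varepsilon\right)$, and let $g_{i+1}$ be the $\operatorname{KL}$-projection of $\widetilde{g}_{i+1}$ onto $K$, which is the closed-form ``water-filling'' operation $g_{i+1}\left(x\right)=\min\left(2^{-(n-k)},\lambda\widetilde{g}_{i+1}\left(x\right)\right)$ for the appropriate normalizing $\lambda\geq1$. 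Track the potential $\Phi_{i}:=\operatorname{KL}\left(\mathcal{D}\,\|\,g_{i}\right)$. The standard multiplicative-weights regret bound, together with the assumed $\varepsilon$-gap, gives that $\operatorname{KL}\left(\mathcal{D}\,\|\,g_{i}\right)-\operatorname{KL}\left(\mathcal{D}\,\|\,\widetilde{g}_{i+1}\right)$ is $\Omega\left(\varepsilon^{2}\right)$, while the projection step cannot increase $\operatorname{KL}\left(\mathcal{D}\,\|\,\cdot\right)$ by the generalized Pythagorean inequality for Bregman divergences --- here using $\mathcal{D}\in K$. Since $\Phi_{0}=n-H\left(\mathcal{D}\right)\leq n-H_{\min}\left(\mathcal{D}\right)\leq k$ and $\Phi_{i}\geq0$, the iteration halts after $t=O\left(k/\varepsilon^{2}\right)$ steps, and the final $g:=g_{t}\in K$ then $\varepsilon$-fools every circuit of size at most $T$.

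Third I would account for the sampling cost. After halting, $g\left(x\right)$ is given by a short explicit formula in the bits $C_{1}\left(x\right),\ldots,C_{t}\left(x\right)$ (a composition of $t$ exponential reweightings and cappings), so, given as advice the $t$ circuits of size at most $T$ and the $t$ normalizing constants $\lambda_{i}$ --- each an expectation under the uniform distribution, hence estimable to high precision by sampling during the (non-uniform) construction --- a circuit of size $\operatorname{poly}\left(n,t,T\right)$ evaluates $g\left(x\right)$ from $x$. To sample $\mathcal{D}'$, use rejection sampling against the uniform distribution: draw $x$ uniformly and accept with probability $2^{n-k}g\left(x\right)\leq1$; the overall acceptance probability is $\sum_{x}2^{-n}\cdot2^{n-k}g\left(x\right)=2^{-k}$, so the expected number of rounds is $2^{k}$. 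This is where the $2^{k}$ enters, yielding a sampler of size $\left(T+n\right)\operatorname{poly}\left(2^{k},1/\varepsilon\right)$, and $H_{\min}\left(\mathcal{D}'\right)\geq n-k$ holds automatically because $g\in K$.

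The step I expect to be the main obstacle is making the projection rigorous and compatible with efficient sampling: confirming that capping-and-renormalizing is genuinely the $\operatorname{KL}$-projection onto $K$ and obeys the Pythagorean inequality (so that capping does not destroy the per-round $\Omega\left(\varepsilon^{2}\right)$ drop in $\Phi$), and checking that the $\lambda_{i}$ can be approximated finely enough that the resulting density still $\varepsilon$-fools size-$T$ circuits and the rejection sampler is faithful. A lesser point is bookkeeping the exact polynomial: the $\operatorname{KL}$ potential yields $\operatorname{poly}\left(k,1/\varepsilon\right)$ rounds while an $\ell_{2}$-type potential gives $\operatorname{poly}\left(2^{k},1/\varepsilon\right)$ rounds directly, but either is absorbed by the rejection-sampling overhead into the claimed bound.
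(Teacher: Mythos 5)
The paper does not prove Theorem~\ref{ttvthm} at all: it is imported as a black box from Trevisan--Tulsiani--Vadhan \cite{ttv}, and Appendix~\ref{FIXED} only combines it with Corollary~\ref{hmincor}. So there is no in-paper proof to compare against; what you have written is a reconstruction of the cited result itself, and it is essentially the standard argument (boosting by multiplicative weights with Bregman/KL projections onto the polytope of $2^{-(n-k)}$-bounded densities, followed by rejection sampling against the uniform distribution to pay the $2^{k}$ factor). The outline is sound: the identification of min-entropy $\geq n-k$ with membership in the convex set $K$, the potential $\Phi_{0}=\mathrm{KL}\left(\mathcal{D}\,\|\,U\right)=n-H\left(\mathcal{D}\right)\leq k$, the per-round $\Omega\left(\varepsilon^{2}\right)$ drop with $\eta=\Theta\left(\varepsilon\right)$, the fact that capping-and-renormalizing is the I-projection onto $K$ (this follows from the KKT conditions) and hence satisfies Csisz\'{a}r's Pythagorean inequality because $\mathcal{D}\in K$, and the $O\left(k/\varepsilon^{2}\right)$ round bound are all correct. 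The points you flag as remaining obstacles are genuinely the right ones but are routine: the normalizers can simply be hardwired as non-uniform advice to sufficient precision; and the rejection sampler must be truncated to $O\left(2^{k}\log\left(1/\varepsilon\right)\right)$ trials with (say) a uniform output on failure, which perturbs the output by at most $\varepsilon$ in variation distance while keeping the maximum probability at most $2^{-(n-k)}$, so the min-entropy guarantee survives. In short: correct approach, no gap of substance, but note that relative to the paper you are proving a theorem the authors only cite.
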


Let $\mathcal{D}_{A}^{\ast}$\ be the \textsc{BosonSampling}\ distribution
$\mathcal{D}_{A}$\ conditioned on a collision-free outcome (that is, on
$S\in\Lambda_{m,n}$). \ Then it remains to show that $\mathcal{D}_{A}^{\ast}%
$\ has large min-entropy, with high probability over $A$. \ To do this, we
first recall two results proved in \cite{aark}\ (Theorem 5.2\ and Lemma 8.8
respectively there).

\begin{theorem}
[\cite{aark}]\label{haarhide}Let $m\geq\frac{n^{5}}{\delta}\log^{2}\frac
{n}{\delta}$\ for any $\delta>0$, let $\mathcal{N}$\ and $\mathcal{N}^{\prime
}$\ be as defined in Section \ref{DEV}, and let $f_{\mathcal{N}}\left(
X\right)  $\ and $f_{\mathcal{N}^{\prime}}\left(  X\right)  $\ be the pdfs of
$\mathcal{N}$\ and $\mathcal{N}^{\prime}$\ respectively. \ Then for all
$X\in\mathbb{C}^{n\times n}$,%
\[
f_{\mathcal{N}^{\prime}}\left(  X\right)  \leq\left(  1+O\left(
\delta\right)  \right)  f_{\mathcal{N}}\left(  X\right)  .
\]

\end{theorem}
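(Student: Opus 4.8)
The plan is to work directly with the exact density of $\mathcal{N}^{\prime}$ rather than through a coupling or transport argument. By the row-permutation symmetry noted in Section \ref{DEV}, it suffices to take a single fixed outcome, say $S=\left\{1,\ldots,n\right\}$, so that $\sqrt{m}\,A_{S}$ is $\sqrt{m}$ times the top-left $n\times n$ block $W$ of a Haar-random $m\times n$ column-orthonormal matrix, equivalently of a Haar-random $U\in U\left(m\right)$. The one external input I would use is the classical random-matrix formula for such truncations: for $2n\leq m$, the density of $W$ with respect to Lebesgue measure on $\mathbb{C}^{n\times n}$ is proportional to $\det\!\left(I_{n}-W^{\dagger}W\right)^{m-2n}$, supported on the matrix ball $\left\{W:W^{\dagger}W\preceq I_{n}\right\}$.

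Rescaling to $X=\sqrt{m}\,W$ (which multiplies the density by $m^{-n^{2}}$) and dividing by $f_{\mathcal{N}}\left(X\right)=\pi^{-n^{2}}e^{-\operatorname{Tr}\left(X^{\dagger}X\right)}$ gives
\[
\frac{f_{\mathcal{N}^{\prime}}\left(X\right)}{f_{\mathcal{N}}\left(X\right)}=K_{m,n}\cdot\mathbb{1}\!\left[X^{\dagger}X\preceq mI_{n}\right]\cdot\prod_{i=1}^{n}\left(1-\frac{\lambda_{i}}{m}\right)^{m-2n}e^{\lambda_{i}},
\]
where $\lambda_{1},\ldots,\lambda_{n}\geq0$ are the eigenvalues of $X^{\dagger}X$ and $K_{m,n}>0$ is a normalization constant. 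Integrating this identity against $f_{\mathcal{N}}$ shows $K_{m,n}=1/\operatorname*{E}_{Y\sim\mathcal{N}}\!\left[g\left(Y\right)\right]$, where $g\left(Y\right)$ is the same indicator-times-product evaluated at the eigenvalues of $Y^{\dagger}Y$. Thus the theorem reduces to two bounds: (a) the product $\prod_{i}\left(1-\lambda_{i}/m\right)^{m-2n}e^{\lambda_{i}}$ is at most $1+O\left(\delta\right)$ on the support (where every $\lambda_{i}<m$), and (b) $K_{m,n}\leq1+O\left(\delta\right)$.

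For (a): viewed as a function of a single $\lambda\in\left[0,m\right)$, the factor $\left(1-\lambda/m\right)^{m-2n}e^{\lambda}$ has log-derivative $1-\left(m-2n\right)/\left(m-\lambda\right)$, which vanishes only at $\lambda=2n$; there the factor equals $\exp\!\left(m\,\phi\left(2n/m\right)\right)$ with $\phi\left(t\right):=\left(1-t\right)\ln\left(1-t\right)+t\leq t^{2}$ for $t\leq\tfrac12$, hence is $\leq e^{O\left(n^{2}/m\right)}$, so the full product over $i\in\left[n\right]$ is $\leq e^{O\left(n^{3}/m\right)}$, which is $1+O\left(\delta\right)$ since $m\geq n^{5}\delta^{-1}\log^{2}\!\left(n/\delta\right)\gg n^{3}/\delta$ (in fact $e^{O\left(n^{3}/m\right)}=1+O\left(\delta/n^{2}\right)$ here, so there is ample slack). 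For (b) it suffices to show $\operatorname*{E}_{Y\sim\mathcal{N}}\!\left[g\left(Y\right)\right]\geq1-O\left(\delta\right)$. On the event that the largest eigenvalue of the complex Wishart matrix $Y^{\dagger}Y$ is $O\left(n\right)\ll m$ — an event of probability $1-e^{-\Omega\left(n\right)}$ — expanding $\ln\left(1-\lambda_{i}/m\right)$ gives $\ln g\left(Y\right)=\tfrac{2n}{m}\operatorname{Tr}\!\left(Y^{\dagger}Y\right)-\tfrac{1}{2m}\operatorname{Tr}\!\left(\left(Y^{\dagger}Y\right)^{2}\right)+\text{(lower-order terms)}$, and the crude facts that $\operatorname{Tr}\!\left(Y^{\dagger}Y\right)=\left\Vert Y\right\Vert_{F}^{2}=\Theta\!\left(n^{2}\right)$ and $\operatorname{Tr}\!\left(\left(Y^{\dagger}Y\right)^{2}\right)=\Theta\!\left(n^{3}\right)$ with high probability make $\left\vert\ln g\left(Y\right)\right\vert=O\left(n^{3}/m\right)=O\left(\delta\right)$ there; since $g\geq0$ everywhere, $\operatorname*{E}\!\left[g\right]\geq\left(1-O\left(\delta\right)\right)\!\left(1-e^{-\Omega\left(n\right)}\right)\geq1-O\left(\delta\right)$. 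Multiplying (a) and (b) yields $f_{\mathcal{N}^{\prime}}\leq\left(1+O\left(\delta\right)\right)f_{\mathcal{N}}$ pointwise.

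The main obstacle, modest as it is, lies in part (b): the expansion of $\ln\left(1-\lambda_{i}/m\right)$ is only accurate when every $\lambda_{i}\ll m$, so one must separately dispose of the regime in which some eigenvalue of $Y^{\dagger}Y$ is moderately large — between $\Theta\left(n\right)$ and $m$ — by noting that $g$ is uniformly bounded there (by the bound from part (a)) and that this regime has exponentially small probability, hence cannot damage the \emph{lower} bound on $\operatorname*{E}\!\left[g\right]$. A cleaner variant of (b) is to use the known closed form of $K_{m,n}$ as a ratio of products of $\Gamma$-functions and invoke Stirling with explicit error terms, replacing the probabilistic estimates by routine asymptotics. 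Finally, I note that a completely different proof — closer to the one in \cite{aark} — avoids the truncated-unitary density altogether: realize $A$ via Gram--Schmidt from an iid complex Gaussian $m\times n$ matrix $G$ (each entry $\mathcal{N}\left(0,1\right)_{\mathbb{C}}$, so $G=AR$ with $R$ upper-triangular), write $\sqrt{m}\,A_{S}=\sqrt{m}\,G_{S}R^{-1}$ with $G_{S}$ genuinely Gaussian, and control $\left\Vert\sqrt{m}\,R^{-1}-I_{n}\right\Vert$; this route is more robust but loses polynomial factors in $n$, which is presumably why \cite{aark} obtains only $m\gtrsim n^{5}$, even though — as the authors remark — $m\gtrsim n^{2}$ ought to suffice.
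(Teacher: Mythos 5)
The first thing to say is that this paper contains no proof of Theorem \ref{haarhide} to compare against: the result is imported verbatim from \cite{aark} (it is Theorem 5.2 there), and the present paper merely restates it. Judged on its own terms, your argument is sound in outline and arguably cleaner than the original: you work with the exact joint density of an $n\times n$ truncation of a Haar-random element of $U\left(m\right)$, proportional to $\det\left(I_{n}-W^{\dagger}W\right)^{m-2n}$ on the matrix ball, whereas \cite{aark} builds the estimate up row by row from the density of the truncation of a single Haar-random unit vector in $\mathbb{C}^{m}$ (so your closing ``Gram--Schmidt'' remark is in fact the one closer in spirit to what they actually do). Your part (a) is correct as stated: the per-eigenvalue factor $\left(1-\lambda/m\right)^{m-2n}e^{\lambda}$ is maximized at $\lambda=2n$, where it equals $e^{m\phi\left(2n/m\right)}\leq e^{4n^{2}/m}$ since $\phi''\left(t\right)=1/\left(1-t\right)\leq2$ on $\left[0,\tfrac{1}{2}\right]$, and the product over $n$ eigenvalues is comfortably $1+O\left(\delta/n^{2}\right)$.

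Two caveats. First, the truncated-unitary density is a genuine external input (\.{Z}yczkowski--Sommers); it needs a citation, and one should note it is valid in the form you use only for $m\geq2n$, which the hypothesis amply supplies. Second, and more substantively, part (b) as written does not cover arbitrarily small $\delta$: you control the bad event (some eigenvalue of $Y^{\dagger}Y$ exceeding $\Theta\left(n\right)$) only up to probability $e^{-\Omega\left(n\right)}$ and then absorb it into $O\left(\delta\right)$, but the theorem is asserted for \emph{every} $\delta>0$, including $\delta\ll e^{-n}$. The repair is routine: take the eigenvalue threshold to be $C\left(n+\log\frac{1}{\delta}\right)$ and the Frobenius-norm threshold to be $C\left(n^{2}+\log\frac{1}{\delta}\right)$, so the complementary event has probability at most $\delta$ and on the good event $\left\vert\ln g\left(Y\right)\right\vert=O\left(\left(n+\log\frac{1}{\delta}\right)\left(n^{2}+\log\frac{1}{\delta}\right)/m\right)=O\left(\delta\right)$ under the stated hypothesis on $m$ (this is precisely where the $\log^{2}\frac{n}{\delta}$ factor earns its keep); alternatively, use the closed form of $K_{m,n}$ as a ratio of Gamma functions, as you yourself suggest. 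With either repair the argument is complete.
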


In particular, if (say) $m\geq n^{5.1}$ and $n$ is sufficiently large, then
Theorem \ref{haarhide}\ implies that $f_{\mathcal{N}^{\prime}}\left(
X\right)  \leq2f_{\mathcal{N}}\left(  X\right)  $.

\begin{lemma}
[\cite{aark}]\label{4thmoment}$\operatorname{E}_{X\sim\mathcal{N}}\left[
\left\vert \operatorname*{Per}\left(  X\right)  \right\vert ^{4}\right]
=\left(  n+1\right)  \left(  n!\right)  ^{2}$.
\end{lemma}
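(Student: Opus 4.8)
The plan is to expand the fourth moment directly over four permutations and reduce it to a classical cycle-counting identity for the symmetric group. Writing $\operatorname*{Per}\left(  X\right)  =\sum_{\sigma\in S_{n}}\prod_{i}x_{i,\sigma\left(  i\right)  }$ and using that the $n$ rows of $X$ are independent,
\[
\operatorname{E}_{X\sim\mathcal{N}}\left[  \left\vert \operatorname*{Per}\left(  X\right)  \right\vert ^{4}\right]  =\sum_{\sigma_{1},\sigma_{2},\tau_{1},\tau_{2}\in S_{n}}\ \prod_{i=1}^{n}\operatorname{E}\left[  x_{i,\sigma_{1}\left(  i\right)  }\,x_{i,\sigma_{2}\left(  i\right)  }\,\overline{x_{i,\tau_{1}\left(  i\right)  }}\,\overline{x_{i,\tau_{2}\left(  i\right)  }}\right]  .
\]
Within a fixed row the $n$ entries are independent standard complex Gaussians, and $\operatorname{E}\left[  z^{p}\overline{z}^{q}\right]  =0$ unless $p=q$, in which case it equals $p!$. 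Hence the row-$i$ factor vanishes unless the multiset $\left\{  \sigma_{1}\left(  i\right)  ,\sigma_{2}\left(  i\right)  \right\}  $ equals $\left\{  \tau_{1}\left(  i\right)  ,\tau_{2}\left(  i\right)  \right\}  $, and when they are equal the factor is $\operatorname{E}\left[  \left\vert z\right\vert ^{4}\right]  =2$ if $\sigma_{1}\left(  i\right)  =\sigma_{2}\left(  i\right)  $ (which forces $\tau_{1}\left(  i\right)  =\tau_{2}\left(  i\right)  =\sigma_{1}\left(  i\right)  $ as well), and $\operatorname{E}\left[  \left\vert z\right\vert ^{2}\right]  ^{2}=1$ otherwise. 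So only those $4$-tuples survive for which $\left\{  \sigma_{1}\left(  i\right)  ,\sigma_{2}\left(  i\right)  \right\}  =\left\{  \tau_{1}\left(  i\right)  ,\tau_{2}\left(  i\right)  \right\}  $ for every $i$, each contributing exactly $2^{t}$ with $t=\#\left\{  i:\sigma_{1}\left(  i\right)  =\sigma_{2}\left(  i\right)  \right\}  $.

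Next I would normalize $\sigma_{1}$: for each of the $n!$ choices of $\sigma_{1}$, put $\beta:=\sigma_{2}\sigma_{1}^{-1}$, $\gamma_{1}:=\tau_{1}\sigma_{1}^{-1}$, $\gamma_{2}:=\tau_{2}\sigma_{1}^{-1}$ and reindex the row by the column $j:=\sigma_{1}\left(  i\right)  $. The surviving-term condition becomes $\left\{  j,\beta\left(  j\right)  \right\}  =\left\{  \gamma_{1}\left(  j\right)  ,\gamma_{2}\left(  j\right)  \right\}  $ for all $j$, and the weight becomes $2^{\operatorname{fix}\left(  \beta\right)  }$, where $\operatorname{fix}\left(  \beta\right)  $ is the number of fixed points of $\beta$; neither depends on $\sigma_{1}$, so
\[
\operatorname{E}_{X\sim\mathcal{N}}\left[  \left\vert \operatorname*{Per}\left(  X\right)  \right\vert ^{4}\right]  =n!\sum_{\beta\in S_{n}}2^{\operatorname{fix}\left(  \beta\right)  }N\left(  \beta\right)  ,\qquad N\left(  \beta\right)  :=\#\bigl\{(\gamma_{1},\gamma_{2})\in S_{n}^{2}:\forall j\ \left\{  \gamma_{1}\left(  j\right)  ,\gamma_{2}\left(  j\right)  \right\}  =\left\{  j,\beta\left(  j\right)  \right\}  \bigr\}.
\]
The crux is the combinatorial claim $N\left(  \beta\right)  =2^{c\left(  \beta\right)  }$, where $c\left(  \beta\right)  $ is the number of cycles of $\beta$ of length at least $2$. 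To prove it, decompose $\beta$ into cycles: on a fixed point $j$ one is forced to take $\gamma_{1}\left(  j\right)  =\gamma_{2}\left(  j\right)  =j$, while on a nontrivial cycle $\left(  j_{1}\,j_{2}\cdots j_{k}\right)  $ one must choose, at each position $t$, whether $\left(  \gamma_{1}\left(  j_{t}\right)  ,\gamma_{2}\left(  j_{t}\right)  \right)  $ is $\left(  j_{t},j_{t+1}\right)  $ or $\left(  j_{t+1},j_{t}\right)  $ (subscripts mod $k$). Writing $\gamma_{1}\left(  j_{t}\right)  =j_{t+\varepsilon_{t}}$ with $\varepsilon_{t}\in\left\{  0,1\right\}  $, injectivity of $\gamma_{1}$ on the cycle is equivalent to $t\mapsto t+\varepsilon_{t}\bmod k$ being a bijection of $\mathbb{Z}/k$, which by a short preimage-counting argument forces all the $\varepsilon_{t}$ to be equal; so exactly the two options $\left(  \gamma_{1},\gamma_{2}\right)  =\left(  \mathrm{id},\beta\right)  $ and $\left(  \beta,\mathrm{id}\right)  $ on the cycle remain, and since cycles have disjoint supports these local choices assemble into global permutations. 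Hence $N\left(  \beta\right)  =2^{c\left(  \beta\right)  }$.

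Finally, since $\operatorname{fix}\left(  \beta\right)  +c\left(  \beta\right)  $ is the total number $\operatorname{cyc}\left(  \beta\right)  $ of cycles of $\beta$, we have $2^{\operatorname{fix}\left(  \beta\right)  }N\left(  \beta\right)  =2^{\operatorname{cyc}\left(  \beta\right)  }$, so
\[
\operatorname{E}_{X\sim\mathcal{N}}\left[  \left\vert \operatorname*{Per}\left(  X\right)  \right\vert ^{4}\right]  =n!\sum_{\beta\in S_{n}}2^{\operatorname{cyc}\left(  \beta\right)  }.
\]
I would then use the classical identity $\sum_{\beta\in S_{n}}t^{\operatorname{cyc}\left(  \beta\right)  }=t\left(  t+1\right)  \cdots\left(  t+n-1\right)  $ (obtained by inserting the largest element, which gives the recursion $\sum_{\beta\in S_{n}}t^{\operatorname{cyc}\left(  \beta\right)  }=\left(  t+n-1\right)  \sum_{\beta\in S_{n-1}}t^{\operatorname{cyc}\left(  \beta\right)  }$); evaluating at $t=2$ gives $2\cdot3\cdots\left(  n+1\right)  =\left(  n+1\right)  !$, whence $\operatorname{E}\left[  \left\vert \operatorname*{Per}\left(  X\right)  \right\vert ^{4}\right]  =n!\,\left(  n+1\right)  !=\left(  n+1\right)  \left(  n!\right)  ^{2}$, as required. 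The one genuinely nonroutine step is the cycle-bijectivity argument behind $N\left(  \beta\right)  =2^{c\left(  \beta\right)  }$; the Gaussian-moment bookkeeping and the symmetric-group identity are routine.
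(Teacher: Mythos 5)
Your proof is correct. Note that the paper you are being compared against does not actually prove this lemma: it is imported verbatim from \cite{aark} (Lemma 8.8 there) with no argument given here, so the only thing to evaluate is your derivation on its own merits, and it holds up. The Wick-type bookkeeping is right (the row-$i$ factor survives iff the multisets $\left\{ \sigma_{1}(i),\sigma_{2}(i)\right\}$ and $\left\{ \tau_{1}(i),\tau_{2}(i)\right\}$ coincide, contributing $2$ exactly when all four agree), the reduction to $n!\sum_{\beta}2^{\operatorname{fix}(\beta)}N(\beta)$ by right-translation is legitimate, and the key claim $N(\beta)=2^{c(\beta)}$ is correctly justified: on each nontrivial cycle the map $t\mapsto t+\varepsilon_{t}$ fails to be injective as soon as some $\varepsilon_{t}=1$ is followed by $\varepsilon_{t+1}=0$, which forces the $\varepsilon_{t}$ to be constant, leaving only the two choices $(\gamma_{1},\gamma_{2})=(\mathrm{id},\beta)$ or $(\beta,\mathrm{id})$ per cycle. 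The final evaluation $\sum_{\beta\in S_{n}}2^{\operatorname{cyc}(\beta)}=(n+1)!$ via the rising-factorial identity gives $n!\,(n+1)!=(n+1)(n!)^{2}$, and the answer checks against $n=1$ and $n=2$ directly. This permutation-expansion-plus-cycle-counting route is essentially the standard way this moment is computed (and is in the spirit of the proof in \cite{aark}), so there is nothing to flag.
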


Combining Theorem \ref{haarhide}\ with Lemma \ref{4thmoment}, Brandao observed
the following.

\begin{lemma}
[Brandao]\label{brandaolem1}Let $m\geq n^{5.1}$, and let $A\in\mathbb{C}%
^{m\times n}$ be a Haar-random \textsc{BosonSampling}\ matrix. \ Then for
sufficiently large $n$ and for all $\delta>0$, with probability at least
$1-\delta$\ over $A$ we have%
\[
\sum_{S\in\Lambda_{m,n}}\left\vert \operatorname*{Per}\left(  A_{S}\right)
\right\vert ^{4}\leq\frac{2\left(  n+1\right)  !}{\delta m^{n}}.
\]

\end{lemma}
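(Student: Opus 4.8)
The plan is to prove this by a first-moment (Markov) argument, so the entire task reduces to bounding the expectation $\operatorname{E}_A\!\left[\sum_{S\in\Lambda_{m,n}}\left\vert\operatorname*{Per}(A_S)\right\vert^4\right]$ and then invoking Markov's inequality. First I would exchange the expectation and the sum, reducing the problem to controlling $\operatorname{E}_A\!\left[\left\vert\operatorname*{Per}(A_S)\right\vert^4\right]$ for a single outcome $S\in\Lambda_{m,n}$. Here I would use the symmetry of the Haar measure under permutations of $A$'s rows: the distribution of $A_S$ does not depend on which collision-free $S$ is chosen, so (as already noted in Section \ref{DEV}) the distribution of $X:=\sqrt{m}\,A_S$ is exactly $\mathcal{N}^{\prime}$, independently of $S$. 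Thus $\operatorname{E}_A\!\left[\left\vert\operatorname*{Per}(A_S)\right\vert^4\right]=m^{-2n}\operatorname{E}_{X\sim\mathcal{N}^{\prime}}\!\left[\left\vert\operatorname*{Per}(X)\right\vert^4\right]$.

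Next I would pass from $\mathcal{N}^{\prime}$ to the clean Gaussian distribution $\mathcal{N}$ using Theorem \ref{haarhide}: since $m\geq n^{5.1}$ and $n$ is large, $f_{\mathcal{N}^{\prime}}(X)\leq 2f_{\mathcal{N}}(X)$ pointwise, and since $\left\vert\operatorname*{Per}(X)\right\vert^4\geq 0$ this immediately gives $\operatorname{E}_{\mathcal{N}^{\prime}}\!\left[\left\vert\operatorname*{Per}(X)\right\vert^4\right]\leq 2\operatorname{E}_{\mathcal{N}}\!\left[\left\vert\operatorname*{Per}(X)\right\vert^4\right]$. Now Lemma \ref{4thmoment} evaluates the right-hand side exactly as $(n+1)(n!)^2$, so $\operatorname{E}_A\!\left[\left\vert\operatorname*{Per}(A_S)\right\vert^4\right]\leq 2(n+1)(n!)^2/m^{2n}$.

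Summing over the $\left\vert\Lambda_{m,n}\right\vert=\binom{m}{n}$ collision-free outcomes and using the crude bound $\binom{m}{n}\leq m^n/n!$, I get
\[
\operatorname{E}_A\!\left[\sum_{S\in\Lambda_{m,n}}\left\vert\operatorname*{Per}(A_S)\right\vert^4\right]\leq\frac{m^n}{n!}\cdot\frac{2(n+1)(n!)^2}{m^{2n}}=\frac{2(n+1)\,n!}{m^n}=\frac{2(n+1)!}{m^n}.
\]
Finally, applying Markov's inequality to the nonnegative random variable $\sum_{S}\left\vert\operatorname*{Per}(A_S)\right\vert^4$ yields $\Pr_A\!\left[\sum_S\left\vert\operatorname*{Per}(A_S)\right\vert^4\geq\frac{2(n+1)!}{\delta m^n}\right]\leq\delta$, which is the claim.

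Honestly there is no serious obstacle here: everything is a direct consequence of the two previously established facts (Theorem \ref{haarhide} and Lemma \ref{4thmoment}) plus Markov. The only point requiring a moment's care is the reduction of $\operatorname{E}_A[\left\vert\operatorname*{Per}(A_S)\right\vert^4]$ to a moment under $\mathcal{N}^{\prime}$, i.e., recognizing that the $S$-dependence washes out by symmetry and that the $m^{-2n}$ scaling is bookkept correctly; once that is in place, the numerics (including the harmless $\binom{m}{n}\leq m^n/n!$ estimate and the identity $(n+1)\,n!=(n+1)!$) close out immediately.
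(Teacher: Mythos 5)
Your proposal is correct and follows essentially the same route as the paper's proof: reduce $\operatorname{E}_A\bigl[\left\vert\operatorname*{Per}(A_S)\right\vert^4\bigr]$ to a fourth moment under $\mathcal{N}^{\prime}$, pass to $\mathcal{N}$ via Theorem \ref{haarhide}, evaluate with Lemma \ref{4thmoment}, sum over $\binom{m}{n}\leq m^n/n!$ outcomes, and apply Markov. No discrepancies to note.
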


\begin{proof}
For any fixed $S\in\Lambda_{m,n}$, we have%
\begin{align*}
\operatorname*{E}\limits_{A}\left[  \left\vert \operatorname*{Per}\left(
A_{S}\right)  \right\vert ^{4}\right]   &  =\frac{1}{m^{2n}}\operatorname*{E}%
\limits_{X\sim\mathcal{N}^{\prime}}\left[  \left\vert \operatorname*{Per}%
\left(  X\right)  \right\vert ^{4}\right] \\
&  \leq\frac{2}{m^{2n}}\operatorname*{E}\limits_{X\sim\mathcal{N}}\left[
\left\vert \operatorname*{Per}\left(  X\right)  \right\vert ^{4}\right] \\
&  =\frac{2\left(  n+1\right)  \left(  n!\right)  ^{2}}{m^{2n}},
\end{align*}
where the second line uses Theorem \ref{haarhide}\ and the third uses Lemma
\ref{4thmoment}. \ Hence%
\[
\operatorname*{E}\limits_{A}\left[  \sum_{S\in\Lambda_{m,n}}\left\vert
\operatorname*{Per}\left(  A_{S}\right)  \right\vert ^{4}\right]  \leq
\binom{m}{n}\frac{2\left(  n+1\right)  \left(  n!\right)  ^{2}}{m^{2n}}%
\leq\frac{2\left(  n+1\right)  !}{m^{n}}.
\]
So by Markov's inequality,%
\[
\Pr_{A}\left[  \sum_{S\in\Lambda_{m,n}}\left\vert \operatorname*{Per}\left(
A_{S}\right)  \right\vert ^{4}>\frac{2\left(  n+1\right)  !}{\delta m^{n}%
}\right]  <\delta.
\]

\end{proof}

Combining Lemma \ref{brandaolem1} with Lemma \ref{min2}\ now yields the
following corollary.

\begin{corollary}
[Brandao]\label{hmincor}Let $m\geq n^{5.1}$, and let $A\in\mathbb{C}^{m\times
n}$ be a Haar-random \textsc{BosonSampling}\ matrix. \ Then for all
$\varepsilon,\delta>0$, with probability at least $1-\delta$\ over $A$, there
exists a distribution $\mathcal{D}^{\prime}$\ over $\Lambda_{m,n}$\ such that
$\left\Vert \mathcal{D}^{\prime}-\mathcal{D}_{A}^{\ast}\right\Vert
\leq\varepsilon$\ and%
\[
H_{\min}\left(  \mathcal{D}^{\prime}\right)  \geq\log_{2}\binom{m}{n}-\log
_{2}\frac{n}{\varepsilon\delta}-O\left(  1\right)  .
\]

\end{corollary}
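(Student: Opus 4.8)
The plan is to obtain a high-probability lower bound on the collision (R\'enyi-2) entropy $H_{2}(\mathcal{D}_{A}^{\ast})$ and then convert it into the desired min-entropy bound via Lemma \ref{min2}. Write $Z:=\Pr_{\mathcal{D}_{A}}[S\in\Lambda_{m,n}]=\sum_{S\in\Lambda_{m,n}}\left\vert\operatorname*{Per}(A_{S})\right\vert^{2}$, so that $\Pr_{\mathcal{D}_{A}^{\ast}}[S]=\left\vert\operatorname*{Per}(A_{S})\right\vert^{2}/Z$ for each $S\in\Lambda_{m,n}$, and hence
\[
\sum_{S\in\Lambda_{m,n}}\Pr_{\mathcal{D}_{A}^{\ast}}[S]^{2}=\frac{1}{Z^{2}}\sum_{S\in\Lambda_{m,n}}\left\vert\operatorname*{Per}(A_{S})\right\vert^{4}.
\]
It therefore suffices to lower-bound $Z$ and upper-bound $\sum_{S}\left\vert\operatorname*{Per}(A_{S})\right\vert^{4}$, each with high probability over $A$. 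For the former, Theorem \ref{birthday} gives $\operatorname*{E}_{A}[1-Z]<2n^{2}/m$, so by Markov's inequality $\Pr_{A}[Z<1/2]<4n^{2}/m$; with $m\geq n^{5.1}$ this is $O(n^{-3.1})$, which is below $\delta/2$ once $n$ is sufficiently large. For the latter, I would invoke Lemma \ref{brandaolem1} with parameter $\delta/2$ in place of $\delta$, obtaining $\sum_{S\in\Lambda_{m,n}}\left\vert\operatorname*{Per}(A_{S})\right\vert^{4}\leq 4(n+1)!/(\delta m^{n})$ with probability at least $1-\delta/2$.

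By the union bound, both good events hold with probability at least $1-\delta$, and on that event
\[
\sum_{S\in\Lambda_{m,n}}\Pr_{\mathcal{D}_{A}^{\ast}}[S]^{2}\leq 4\cdot\frac{4(n+1)!}{\delta m^{n}}=\frac{16(n+1)!}{\delta m^{n}}.
\]
Using the crude estimate $\binom{m}{n}=\frac{m(m-1)\cdots(m-n+1)}{n!}\leq\frac{m^{n}}{n!}$, i.e.\ $m^{n}\geq n!\binom{m}{n}$, we get $\frac{m^{n}}{(n+1)!}\geq\frac{1}{n+1}\binom{m}{n}$, and therefore
\[
H_{2}(\mathcal{D}_{A}^{\ast})=-\log_{2}\sum_{S}\Pr_{\mathcal{D}_{A}^{\ast}}[S]^{2}\geq\log_{2}\frac{\delta m^{n}}{16(n+1)!}\geq\log_{2}\binom{m}{n}-\log_{2}\frac{n+1}{\delta}-4.
\]
Finally, applying Lemma \ref{min2} to $\mathcal{D}_{A}^{\ast}$ with parameter $\varepsilon$ yields a distribution $\mathcal{D}^{\prime}$ supported on $\Lambda_{m,n}$ (the cutoff-and-redistribute construction never enlarges the support, and there is room to redistribute the excess mass within $\Lambda_{m,n}$ since the cutoff $p\geq 1/\binom{m}{n}$) with $\left\Vert\mathcal{D}^{\prime}-\mathcal{D}_{A}^{\ast}\right\Vert\leq\varepsilon$ and $H_{\min}(\mathcal{D}^{\prime})\geq H_{2}(\mathcal{D}_{A}^{\ast})-\log_{2}\frac{1}{\varepsilon}$. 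Combining the last two displays and absorbing $\log_{2}\frac{n+1}{n}$ and the constant into the $O(1)$ term gives $H_{\min}(\mathcal{D}^{\prime})\geq\log_{2}\binom{m}{n}-\log_{2}\frac{n}{\varepsilon\delta}-O(1)$, as claimed.

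Since every ingredient is already available, there is no substantial obstacle here; the only delicate point is the bookkeeping of the failure probabilities. Lemma \ref{brandaolem1} overloads its parameter $\delta$ as simultaneously a failure probability and a multiplicative factor in its conclusion, and the separate bad event $\{Z<1/2\}$ must also be absorbed into the union bound, which is precisely what forces the mild ``$n$ sufficiently large'' hypothesis so that $4n^{2}/m=O(n^{-3.1})$ is dominated by $\delta/2$.
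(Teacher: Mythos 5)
Your proposal is correct and follows essentially the same route as the paper: lower-bound $H_{2}(\mathcal{D}_{A}^{\ast})$ via the fourth-moment bound of Lemma \ref{brandaolem1}, then convert to min-entropy with Lemma \ref{min2}. If anything you are slightly more careful than the paper, which silently uses $\Pr_{\mathcal{D}_{A}^{\ast}}[S]\leq 2\Pr_{\mathcal{D}_{A}}[S]$ without folding the event $\{Z<1/2\}$ into the union bound as you do.
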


\begin{proof}
By Lemma \ref{brandaolem1}, for sufficiently large $n$ and all $\delta>0$,
with probability at least $1-\delta$\ over $A$ we have%
\begin{align*}
H_{2}\left(  \mathcal{D}_{A}^{\ast}\right)   &  =-\log_{2}\sum_{S\in
\Lambda_{m,n}}\Pr_{\mathcal{D}_{A}^{\ast}}\left[  S\right]  ^{2}\\
&  \geq-\log_{2}\sum_{S\in\Lambda_{m,n}}\left(  2\Pr_{\mathcal{D}_{A}}\left[
S\right]  \right)  ^{2}\\
&  =-2-\log_{2}\sum_{S\in\Lambda_{m,n}}\left\vert \operatorname*{Per}\left(
A_{S}\right)  \right\vert ^{4}\\
&  \geq-2-\log_{2}\frac{2\left(  n+1\right)  !}{\delta m^{n}}\\
&  =\log_{2}\frac{m^{n}}{n!}-\log_{2}\left(  n+1\right)  -\log_{2}\frac
{8}{\delta}\\
&  \geq\log_{2}\binom{m}{n}-\log_{2}n-\log_{2}\frac{9}{\delta}.
\end{align*}
So suppose the above inequality holds. \ Then by Lemma \ref{min2}, for every
$\varepsilon>0$\ there exists a distribution $\mathcal{D}^{\prime}$\ over
$\Lambda_{m,n}$\ such that $\left\Vert \mathcal{D}^{\prime}-\mathcal{D}%
_{A}^{\ast}\right\Vert \leq\varepsilon$\ and%
\begin{align*}
H_{\min}\left(  \mathcal{D}^{\prime}\right)   &  \geq H_{2}\left(
\mathcal{D}_{A}^{\ast}\right)  -\log_{2}\frac{1}{\varepsilon}\\
&  \geq\log_{2}\binom{m}{n}-\log_{2}\frac{n}{\varepsilon\delta}-O\left(
1\right)  .
\end{align*}

\end{proof}

It remains only to combine Corollary \ref{hmincor} with Theorem \ref{ttvthm}.

\begin{theorem}
[Brandao]\label{brandaothm}Let $m\geq n^{5.1}$, and let $A\in\mathbb{C}%
^{m\times n}$ be a Haar-random \textsc{BosonSampling}\ matrix. \ Then with
probability at least $1-\delta$\ over $A$, for every $T$ and $\varepsilon>0$,
there exists a circuit of size $T\operatorname*{poly}\left(  n,1/\varepsilon
,1/\delta\right)  $\ that samples a distribution $\mathcal{D}^{\prime}$\ that
has%
\[
H_{\min}\left(  \mathcal{D}^{\prime}\right)  \geq\log_{2}\binom{m}{n}-\log
_{2}\frac{n}{\varepsilon\delta}-O\left(  1\right)  ,
\]
and that is $\varepsilon$-indistinguishable from $\mathcal{D}_{A}^{\ast}$\ by
circuits of size at most $T$.
\end{theorem}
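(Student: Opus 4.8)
The plan is to obtain Theorem~\ref{brandaothm} by composing Corollary~\ref{hmincor} with the Trevisan--Tulsiani--Vadhan Theorem~\ref{ttvthm}; the only real work is reconciling their formats (distributions over $\Lambda_{m,n}$ versus over $\{0,1\}^n$) and arranging the quantifiers correctly.

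First I would fix $\delta>0$ and invoke Corollary~\ref{hmincor}. Unpacking its proof, the event over $A$ on which everything rests is really the $2$-entropy bound $H_2(\mathcal{D}_A^{\ast})\geq\log_2\binom{m}{n}-\log_2 n-\log_2(9/\delta)$ supplied by Lemma~\ref{brandaolem1}, which is independent of $\varepsilon$; Lemma~\ref{min2} (Renner--Wolf) then yields, \emph{simultaneously for every} $\varepsilon>0$, a distribution $\mathcal{D}''_\varepsilon$ over $\Lambda_{m,n}$ with $\|\mathcal{D}''_\varepsilon-\mathcal{D}_A^{\ast}\|\leq\varepsilon/2$ and $H_{\min}(\mathcal{D}''_\varepsilon)\geq\log_2\binom{m}{n}-\log_2\frac{n}{\varepsilon\delta}-O(1)$. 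Thus with probability $1-\delta$ over $A$ we may assume all the $\mathcal{D}''_\varepsilon$ exist, which is exactly the ``with probability $1-\delta$ over $A$, for every $T$ and $\varepsilon$'' quantifier shape the theorem asks for; everything below is then deterministic.

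Now fix $T$ and $\varepsilon>0$. I would identify $\Lambda_{m,n}$ with an efficiently en/decodable subset of $\{0,1\}^{N}$, $N:=\lceil\log_2\binom{m}{n}\rceil$, so that $\mathcal{D}''_\varepsilon$ becomes a distribution on $\{0,1\}^N$ of the same min-entropy. Put $k:=\lceil N-H_{\min}(\mathcal{D}''_\varepsilon)\rceil$, so $H_{\min}(\mathcal{D}''_\varepsilon)\geq N-k$ and, using $N\leq\log_2\binom{m}{n}+1$, also $k\leq\log_2\frac{n}{\varepsilon\delta}+O(1)$, whence $2^{k}=\operatorname{poly}(n,1/\varepsilon,1/\delta)$. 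Apply Theorem~\ref{ttvthm} to $\mathcal{D}=\mathcal{D}''_\varepsilon$ with this $k$, error parameter $\varepsilon/2$, and the given $T$: this produces a circuit of size $(T+N)\operatorname{poly}(2^{k},2/\varepsilon)=T\cdot\operatorname{poly}(n,1/\varepsilon,1/\delta)$ sampling a distribution $\mathcal{D}'$ with $H_{\min}(\mathcal{D}')\geq N-k\geq H_{\min}(\mathcal{D}''_\varepsilon)\geq\log_2\binom{m}{n}-\log_2\frac{n}{\varepsilon\delta}-O(1)$, which is moreover $(\varepsilon/2)$-indistinguishable from $\mathcal{D}''_\varepsilon$ by circuits of size at most $T$.

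To finish, I would use the triangle inequality for distinguishing advantage: since $\|\mathcal{D}''_\varepsilon-\mathcal{D}_A^{\ast}\|\leq\varepsilon/2$, these two are $(\varepsilon/2)$-indistinguishable by \emph{any} test, in particular by every size-$\leq T$ circuit; composing with the $(\varepsilon/2)$-indistinguishability of $\mathcal{D}'$ from $\mathcal{D}''_\varepsilon$ shows $\mathcal{D}'$ is $\varepsilon$-indistinguishable from $\mathcal{D}_A^{\ast}$ by circuits of size at most $T$, as desired. There is no deep estimate here; the points to watch are (i) keeping the good event over $A$ free of $\varepsilon$ so that ``for every $\varepsilon$'' can be pulled outside the probability over $A$, (ii) checking that $\Lambda_{m,n}\hookrightarrow\{0,1\}^N$ preserves min-entropy and inflates circuit sizes only by $\operatorname{poly}(n)$ (and that $N=\operatorname{poly}(n)$ in the regime of interest), and (iii) noting that the factor-$2$ losses in $\varepsilon$ and in $k$ are harmless, being swallowed by the $O(1)$ in the min-entropy bound and by the $\operatorname{poly}(n,1/\varepsilon,1/\delta)$ in the circuit size.
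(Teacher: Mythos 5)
Your proposal is correct and follows exactly the route the paper intends: the paper's entire proof is the single sentence ``It remains only to combine Corollary~\ref{hmincor} with Theorem~\ref{ttvthm},'' and you have filled in precisely that combination, including the quantifier-order point (the good event over $A$ depends only on the $H_2$ bound from Lemma~\ref{brandaolem1}, not on $\varepsilon$) that the paper glosses over. The minor off-by-one in the chain $N-k\geq H_{\min}(\mathcal{D}''_\varepsilon)$ is harmless, being absorbed into the $O(1)$ as you note.
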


\section{Appendix: Arbitrary Initial States\label{INITIAL}}

Suppose the input to a \textsc{BosonSampling}\ device does \textit{not}
consist of a single-photon Fock state in each mode,\ but of some
\textquotedblleft messier\textquotedblright\ pure or mixed state. \ For
example, perhaps some subset of photons are randomly lost to the environment,
or perhaps the input involves coherent states or other Gaussian states. \ One
might wonder: \textit{then} does the device's output distribution
\textquotedblleft flatten out\textquotedblright\ and become nearly uniform, as
Gogolin et al.\ \cite{gogolin}\ claimed?

In this appendix, we briefly explain why the answer is still no. \ We will
argue that, for essentially \textit{any} initial state, the row-norm estimator
$R^{\ast}$\ from Section \ref{DETECT}\ still \textquotedblleft
works,\textquotedblright\ in the sense that it still distinguishes the output
distribution from the uniform distribution with non-negligible (and typically
constant) bias. \ Our discussion will be at a \textquotedblleft physics level
of rigor.\textquotedblright

Let $k$ be the total number of photons. Then an arbitrary initial pure state
might involve a superposition of different $k$'s, like so:%
\begin{equation}
\left\vert \psi\right\rangle =\sum_{k=0}^{\infty}\sum_{S\in\Phi_{n,k}}%
\alpha_{S}\left\vert S\right\rangle . \label{phi}%
\end{equation}
(Later we will generalize to mixed states.) \ We make the following assumptions:

\begin{enumerate}
\item[(a)] $m\gg n^{2}$: i.e., there are many more output modes than input
modes, as assumed throughout this paper. \ Alternatively, if we identify input
and output modes, then we assume that all photons are initially concentrated
in the first $n$ of $m$ modes.

\item[(b)] With non-negligible probability, $m\gg k^{2}$ (i.e., there are many
more output modes than photons).

\item[(c)] With non-negligible probability, $k>0$\ (i.e., at least one photon
gets detected).
\end{enumerate}

Assumption (c) is obviously necessary, but we believe that a more
sophisticated analysis would let us weaken assumption (a) and remove
assumption (b).\footnote{On the other hand, we observe that assumption (a)
cannot be removed entirely. \ For suppose the initial state was the $k$-photon
maximally mixed state $I_{m,k}$ (for any $k$), which assigns equal probability
to each element of $\Phi_{m,k}$. \ Then one can show that, \textit{regardless}
of what unitary transformation $U$\ was applied, the final state would again
be $I_{m,k}$. \ And therefore, neither $R^{\ast}$\ nor any other estimator
could distinguish the output distribution from uniform.}

If we want to model photon losses in this framework, we can do so by simply
letting our initial state be a mixture of different $\left\vert \psi
\right\rangle $'s, corresponding to different combinations of photons that are
\textquotedblleft destined to get lost.\textquotedblright\ \ We do not attempt
to model more complicated loss processes (e.g., processes that depend on the
unitary), or other sources of error such as mode mismatch. \ Extending the
statistical analysis of \textsc{BosonSampling}\ to those cases\ is an
interesting challenge for future work.

In what follows, we assume some familiarity with \textit{Fock polynomials},
i.e., polynomials in photon creation operators (see \cite[Section 3.2]{aark}
for a computer-science-friendly introduction). \ In particular, the Fock
polynomial associated to the state $\left\vert \psi\right\rangle $ from
(\ref{phi}) is
\[
p_{\psi}\left(  x\right)  =\sum_{k=0}^{\infty}\sum_{S\in\Phi_{n,k}}%
\frac{\alpha_{S}x^{S}}{\sqrt{S!}},
\]
where $x=\left(  x_{1},\ldots,x_{m}\right)  $ and $S=\left(  s_{1}%
,\ldots,s_{m}\right)  $ (in this particular case, $s_{i}=0$\ if $i>n$).
\ Here\ we use the shorthands $x^{S}=x_{1}^{s_{1}}\cdots x_{m}^{s_{m}}$ and
$S!=s_{1}!\cdots s_{m}!$. \ Applying a unitary transformation $U$ results in
the rotated state $p_{\psi}\left(  Ux\right)  $. \ So the probability of
measuring some particular $k$-photon\ outcome $S\in\Phi_{m,k}$ is then%
\[
\Pr\left[  S\right]  =\frac{1}{S!}\left\vert \left\langle x^{S},p_{\psi
}\left(  Ux\right)  \right\rangle \right\vert ^{2},
\]
where $\left\langle ,\right\rangle $\ represents the Fock inner product. \ We
can now use the adjoint property of the Fock product (Theorem 3.4 in
\cite{aark}) to move $U$ to the left-hand side:%
\begin{align*}
\Pr\left[  S\right]   &  =\frac{1}{S!}\left\vert \left\langle \left(  U^{\dag
}x\right)  ^{S},p_{\psi}\left(  x\right)  \right\rangle \right\vert ^{2}\\
&  =\frac{1}{S!}\left\vert \left\langle
{\textstyle\prod\limits_{i=1}^{m}}
\left(  U^{\dag}x_{i}\right)  ^{s_{i}},p_{\psi}\left(  x\right)  \right\rangle
\right\vert ^{2}.
\end{align*}
Because $\left\vert \psi\right\rangle $\ has support only on the first $n$
modes, the right-hand side uses only the variables $x_{1},\ldots,x_{n}$, so
only the first $n\ll\sqrt{m}$ rows of $U$ affect $\Pr\left[  S\right]  $.
\ Moreover, since only $i$'s such that $s_{i}>0$\ contribute to the contribute
to the left-hand side, only those columns of $U$ matter, of which there are at
most $k\ll\sqrt{m}$ (since $S$ is a $k$-photon outcome).

So by Theorem \ref{trunc}, we can achieve a good approximation to $\Pr\left[
S\right]  $\ by replacing $U^{\dag}$\ with an $n\times n$\ scaled Gaussian
matrix, $G/\sqrt{m}$\ where $G\sim\mathcal{N}\left(  0,1\right)  _{\mathbb{C}%
}^{n\times n}$. \ In that case,%
\[
\Pr\left[  S\right]  \approx\frac{1}{S!m^{k}}\left\vert \left\langle
{\textstyle\prod\limits_{i=1}^{m}}
\left(  Gx_{i}\right)  ^{s_{i}},p_{\psi}\left(  x\right)  \right\rangle
\right\vert ^{2}.
\]
We can then factor out the squared $2$-norms of the rows of $G$ (call them
$R_{1},\ldots,R_{n}$), resulting in a random $n\times n$\ matrix $G^{\prime}%
$\ remaining, each of whose rows is an independent, Haar-random unit vector:%
\begin{equation}
\Pr\left[  S\right]  \approx\left(
{\textstyle\prod\limits_{i=1}^{n}}
R_{i}^{s_{i}}\right)  \cdot\frac{1}{S!m^{k}}\left\vert \left\langle
{\textstyle\prod\limits_{i=1}^{m}}
\left(  G^{\prime}x_{i}\right)  ^{s_{i}},p_{\psi}\left(  x\right)
\right\rangle \right\vert ^{2}. \label{indep}%
\end{equation}
We have thus approximated $\Pr\left[  S\right]  $\ by the product of two
\textit{independent} random variables, the first of which is precisely the
row-norm estimator,%
\[
R_{S}:=%
{\textstyle\prod\limits_{i=1}^{n}}
R_{i}^{s_{i}}.
\]
Let%
\[
R_{S}^{\ast}=%
{\textstyle\prod\limits_{i=1}^{n}}
\frac{R_{i}^{s_{i}}}{n\left(  n+1\right)  \cdots\left(  n+s_{i}-1\right)  }%
\]
be the scaled version of $R_{S}$, so that $\operatorname*{E}\left[
R_{S}^{\ast}\right]  =1$. \ Also, let $\mathcal{U}$\ be the uniform
distribution over $S\in\Phi_{m,k}$, and let $\mathcal{D}$\ be the distribution
over $S\in\Phi_{m,k}$\ induced by \textsc{BosonSampling}\ with unitary
transformation $U$, and then conditioning on a $k$-photon outcome. \ Then just
like in Lemma \ref{rlem}\ of Section \ref{DETECT}, an immediate consequence of
the decomposition (\ref{indep}) is that%
\[
\Pr_{S\sim\mathcal{D}}\left[  R_{S}^{\ast}\geq1\right]  -\Pr_{S\sim
\mathcal{U}}\left[  R_{S}^{\ast}\geq1\right]  \approx\frac{1}{2}%
\operatorname*{E}\limits_{S\sim\mathcal{U}}\left[  \left\vert R_{S}^{\ast
}-1\right\vert \right]  .
\]
Or in other words, $R_{S}^{\ast}$\ will distinguish the case $S\sim
\mathcal{D}$\ from the case $S\sim\mathcal{U}$ with non-negligible bias,
\textit{if and only if} there is non-negligible \textquotedblleft intrinsic
variation\textquotedblright\ in $R_{S}^{\ast}$\ itself when $S\sim\mathcal{U}$.

Now, at least when $S$\ is collision-free (i.e., $s_{i}\in\left\{
0,1\right\}  $\ for all $i\in\left[  m\right]  $), our analysis in Section
\ref{DETECT}\ implies that $R_{S}^{\ast}$\ converges to a lognormal random
variable with%
\[
\operatorname*{E}\limits_{S\sim\mathcal{U}}\left[  \left\vert R_{S}^{\ast
}-1\right\vert \right]  =\Omega\left(  \frac{k}{n}\right)  .
\]
This is non-negligible whenever $k>0$,\ and constant whenever $k=\Omega\left(
n\right)  $. \ Moreover, one can show that if $S$ contains collisions, then
for a fixed $k$, the variation $\operatorname*{E}\left[  \left\vert
R_{S}^{\ast}-1\right\vert \right]  $\ only becomes \textit{larger} (we omit
the proof).

We can extend the above analysis to a mixed initial state $\rho$ by observing
that, in the mixed case, $\Pr\left[  S\right]  $\ is just a convex combination
of the probabilities arising from the pure states $\left\vert \psi
\right\rangle $ comprising $\rho$. \ So an identical term $R_{S}^{\ast}$\ can
be factored out from all of those probabilities.

\section{Appendix: FermionSampling\label{FERMION}}

In this appendix, we prove two results about \textsc{FermionSampling}: the
\textquotedblleft easier cousin\textquotedblright\ of \textsc{BosonSampling},
involving determinants rather than permanents. \ The first result is that, in
sharp contrast to what we conjecture for the bosonic case,
\textsc{FermionSampling} can be solved in classical polynomial time (indeed,
$O\left(  mn^{2}\right)  $ time), and by a particularly simple algorithm. \ It
was already known, from work by Valiant \cite{valiant:qc}, Terhal and
DiVincenzo \cite{td:fermion}, and Knill \cite{knill:matchgate}, that
\textsc{FermionSampling}\ is efficiently solvable classically. \ However, our
algorithm---which is basically identical to an algorithm discovered
independently in the field of \textit{determinantal point processes} (see Tao
\cite{tao:dpp}\ for example)---is both simpler and faster than the previous
\textsc{FermionSampling} algorithms.

The second result is that, if $X\sim\mathcal{N}\left(  0,1\right)
_{\mathbb{C}}^{n\times n}$\ is Gaussian, then $\left\vert \operatorname*{Det}%
\left(  X\right)  \right\vert ^{2}$\ converges at a $1/\log^{3/2}n$ rate to a
lognormal random variable. \ Again, this essentially follows from earlier
results by Girko \cite{girko} and Costello and Vu \cite{cv},\ except that they
considered real matrices only and did not bound the convergence rate. \ An
implication is that, as $n\rightarrow\infty$, the histogram of outcome
probabilities $\left\{  \Pr\left[  S\right]  \right\}  _{S\in\Lambda_{m,n}}$
for a Haar-random \textsc{FermionSampling}\ distribution converges to lognormal.

What does any of this have to do with the rest of the paper? \ There are three
connections. \ First, as discussed in Section \ref{MOCKUP},
\textsc{FermionSampling}\ is extremely interesting as a \textquotedblleft
mockup\textquotedblright\ of \textsc{BosonSampling}: something that satisfies
the same row-norm statistics (and, indeed, also involves quantum interference
among $n$ identical particles), but is nevertheless easy for a classical
computer. \ So it might be illuminating to see explicitly just \textit{why}
\textsc{FermionSampling}\ is so easy. \ Second, we will see how the
mathematical techniques developed in Sections \ref{DEV}\ and \ref{DETECT}\ for
\textsc{BosonSampling}, can easily be reapplied to \textsc{FermionSampling}.
\ And third, recall that the pdfs for $\left\vert \operatorname*{Det}\left(
X\right)  \right\vert ^{2}$\ and $\left\vert \operatorname*{Per}\left(
X\right)  \right\vert ^{2}$ look almost identical when plotted (see Figure
\ref{pccfig}). \ For this reason, \textsc{FermionSampling}\ can serve as a
useful \textquotedblleft laboratory\textquotedblright\ or \textquotedblleft
model system\textquotedblright:\ something whose statistical properties are
almost certainly similar to those of \textsc{BosonSampling}, but that's easier
to understand rigorously.

Before going further, we should define \textsc{FermionSampling}\ formally.
\ Let $A\in\mathbb{C}^{m\times n}$\ ($m\geq n$)\ be a column-orthonormal
matrix. \ Then the \textsc{FermionSampling}\ distribution $\mathcal{F}_{A}%
$\ ranges over collision-free outcomes $S\in\Lambda_{m,n}$\ (or equivalently,
subsets of $\left[  m\right]  $\ of size $n$), and is given by%
\[
\Pr_{\mathcal{F}_{A}}\left[  S\right]  =\left\vert \operatorname*{Det}\left(
A_{S}\right)  \right\vert ^{2},
\]
where $A_{S}$\ is the $n\times n$\ submatrix of $A$ corresponding to $S$.
\ Note that, unlike with \textsc{BosonSampling}, we do not need to worry about
outcomes with collisions. \ For if $S\notin\Lambda_{m,n}$, then $A_{S}$\ has a
repeated row, and therefore $\operatorname*{Det}\left(  A_{S}\right)  =0$
(physically, this just reflects the famous \textit{Pauli exclusion principle},
that no two fermions can occupy the same state at the same time).

We now show another difference between \textsc{BosonSampling}\ and
\textsc{FermionSampling}: namely, that the latter is easy to solve
classically. \ As we said, this fact was known
\cite{valiant:qc,td:fermion,knill:matchgate}, but our proof is shorter than
previous ones and leads to a faster algorithm.

\begin{theorem}
\label{fasample}There exists a probabilistic classical algorithm that, given
$A\in\mathbb{C}^{m\times n}$\ as input, samples from $\mathcal{F}_{A}$ in
$O\left(  mn^{2}\right)  $\ time.
\end{theorem}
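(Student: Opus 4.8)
The plan is to sample the set $S \in \Lambda_{m,n}$ one mode at a time, using the chain rule for probabilities and the structure of the determinant. Write $S = \{h_1 < h_2 < \cdots < h_n\}$; instead I will sample an \emph{ordered} tuple $(h_1,\ldots,h_n)$ where at step $j$ I pick the row to add to the current set, and then forget the order at the end (since $|\operatorname{Det}(A_S)|^2$ is symmetric in the rows up to sign, the order does not affect which set is produced). The key observation is that $|\operatorname{Det}(A_S)|^2 = \operatorname{Det}(A_S A_S^\dagger)$, and more usefully, that $\Pr_{\mathcal{F}_A}[S]$ equals the determinant of the $n \times n$ Gram matrix of the rows of $A$ indexed by $S$ \emph{pulled back through the column space} — equivalently, if $\Pi = A A^\dagger \in \mathbb{C}^{m \times m}$ is the (rank-$n$) orthogonal projector onto the column space of $A$, then $\Pr_{\mathcal{F}_A}[S] = \operatorname{Det}(\Pi_S)$, where $\Pi_S$ is the $n\times n$ principal submatrix of $\Pi$ on rows/columns in $S$. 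This is the standard determinantal-point-process representation, and it is what makes the conditional probabilities tractable.

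The core computation is then: having already committed to rows $T = \{h_1,\ldots,h_j\}$, what is the conditional probability that the next sampled row is $i \notin T$? By the cofactor/Schur-complement identity for determinants, the marginal probability that a random size-$n$ set from $\mathcal{F}_A$ contains $T \cup \{i\}$ is proportional to $\operatorname{Det}(\Pi_{T\cup\{i\}})$, and the ratio $\operatorname{Det}(\Pi_{T \cup \{i\}}) / \operatorname{Det}(\Pi_T)$ is exactly the $(i,i)$ Schur complement of $\Pi_T$ in $\Pi_{T\cup\{i\}}$, i.e.\ $\Pi_{ii} - \Pi_{i,T}(\Pi_T)^{-1}\Pi_{T,i}$. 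Concretely, I will maintain an orthonormal basis $v_1,\ldots,v_j$ for the span of the already-chosen rows of $A$ (obtained by Gram–Schmidt as rows are added), and then the conditional probability of picking row $i$ next is proportional to the squared norm of the component of $A$'s $i$-th row orthogonal to $\operatorname{span}(v_1,\ldots,v_j)$. Summing these residual squared norms over all $i$ and normalizing gives a genuine probability distribution over the next mode; one sample from it costs $O(mn)$ arithmetic operations (computing $m$ inner products of length $n$, or updating residuals), and there are $n$ rounds, for $O(mn^2)$ total. The only nontrivial correctness fact is that this round-by-round rule produces the set $S$ with probability exactly $|\operatorname{Det}(A_S)|^2$; this follows by telescoping the Schur-complement ratios $\operatorname{Det}(\Pi_{h_1,\ldots,h_j})/\operatorname{Det}(\Pi_{h_1,\ldots,h_{j-1}})$ across $j = 1,\ldots,n$, times the $n!$ orderings that yield the same unordered $S$, cancelling against the normalization — and using that for an orthogonal projector $\Pi$ of rank $n$, $\sum_{|S|=n}\operatorname{Det}(\Pi_S) = 1$ (Cauchy–Binet applied to $\Pi = AA^\dagger$ with $A$ column-orthonormal), so the process never "runs dry."

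The main obstacle — really the only place requiring care rather than bookkeeping — is verifying that the residual-squared-norm rule is literally the Schur complement and hence that the telescoping is exact, including the handling of the $n!$ orderings and the degenerate case where a candidate residual norm is zero (which happens precisely when row $i$ already lies in the span of the chosen rows, i.e.\ adding it would force $\operatorname{Det}(A_S) = 0$; such $i$ get probability zero automatically, consistent with collision-free sampling and the Pauli exclusion remark). I would also want to note that intermediate quantities stay numerically sane because we are tracking an orthonormal basis rather than raw minors. Everything else — Gram–Schmidt updates, inner-product counts, the $O(mn^2)$ bound — is routine. I expect the write-up to state the projector identity, give the Gram–Schmidt/residual sampling loop explicitly, prove the telescoping-product correctness lemma, and then read off the running time.
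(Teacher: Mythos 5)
Your proposal is correct and is essentially the paper's own algorithm and proof: sample rows one at a time with probability proportional to the squared norm of the residual of each candidate row after projecting out the span of the already-chosen rows, and verify correctness by telescoping the product of residual norms to $\left\vert \operatorname{Det}\left( A_{S}\right) \right\vert ^{2}$ while accounting for the $n!$ orderings of a given set $S$. The only cosmetic difference is that the paper gets the per-round normalization by noting that the total residual squared norm drops by exactly $1$ each iteration (so the $t$-th denominator is $n-t+1$) rather than by invoking the projector/Schur-complement and Cauchy--Binet formulation, which is the same determinantal-point-process sampler the paper itself credits to that literature.
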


\begin{proof}
The algorithm is the following:

\begin{enumerate}
\item[(1)] Let $v_{1,i}:=\left(  a_{i1},\ldots,a_{in}\right)  ^{T}$\ be the
column vector in $\mathbb{C}^{n}$ obtained by transposing the $i^{th}$\ row of
$A$.

\item[(2)] For $t=1$\ to $n$:

\begin{itemize}
\item For all $i\in\left[  m\right]  $, let%
\[
p_{t,i}:=\frac{\left\Vert v_{t,i}\right\Vert _{2}^{2}}{n-t+1}.
\]

\item Sample $h_{t}\in\left[  m\right]  $\ from the probability distribution
$\left(  p_{t,1},\ldots,p_{t,m}\right)  $.

\item For all $i\in\left[  m\right]  $, set $v_{t+1,i}$\ equal to the
projection of $v_{t,i}$\ onto the orthogonal complement of $v_{t,h_{t}}$:%
\[
v_{t+1,i}:=v_{t,i}-\frac{v_{t,h_{t}}^{\dagger}v_{t,i}}{\left\Vert v_{t,h_{t}%
}\right\Vert _{2}^{2}}v_{t,h_{t}}.
\]
(So in particular, $v_{t+1,h_{t}}$\ is set to the all-$0$ vector.)
\end{itemize}

\item[(3)] Output $S=\left\{  h_{1},\ldots,h_{n}\right\}  $\ as a sample from
$\mathcal{F}_{A}$.
\end{enumerate}

Clearly the algorithm runs in $O\left(  mn^{2}\right)  $\ time. \ To see that
the probability distributions are normalized, observe that%
\[
\sum_{i=1}^{m}\left\Vert v_{1,i}\right\Vert _{2}^{2}=\sum_{i=1}^{m}\sum
_{j=1}^{n}\left\vert a_{ij}\right\vert ^{2}=n,
\]
and that each iteration decreases $\sum_{i=1}^{m}\left\Vert v_{t,i}\right\Vert
_{2}^{2}$ by exactly $1$. \ Furthermore, if $h_{t}$\ is sampled,\ then%
\[
\left\Vert v_{t,h_{t}}\right\Vert _{2}^{2}=\left(  n-t+1\right)  p_{t,h_{t}%
}>0,
\]
so the projection $v_{t+1,i}$\ is well-defined.

To see that the algorithm is correct, recall that, if $X\in\mathbb{C}^{n}$\ is
an $n\times n$ matrix, then one way to \textit{compute} $\left\vert
\operatorname*{Det}\left(  X\right)  \right\vert ^{2}$\ is to project each row
$x_{t}$ of $X$\ onto the orthogonal complement of the subspace spanned by all
the rows $x_{1},\ldots,x_{t-1}$\ above $x_{t}$, and then take the product of
the squared row-norms of the matrix $X^{\prime}$\ that results. \ But this is
precisely what our sampling procedure does. \ More concretely, let $H=\left(
h_{1},\ldots,h_{n}\right)  \in\left[  m\right]  ^{n}$\ be an ordered sequence
of rows, and let $S=\left\{  h_{1},\ldots,h_{n}\right\}  $.\ \ Then $H$ gets
sampled with probability equal to%
\[
p_{1,h_{n}}\cdots p_{n,h_{n}}=\frac{\left\Vert v_{1,h_{1}}\right\Vert _{2}%
^{2}}{n}\frac{\left\Vert v_{2,h_{2}}\right\Vert _{2}^{2}}{n-1}\cdots
\frac{\left\Vert v_{n,h_{n}}\right\Vert _{2}^{2}}{1}=\frac{\left\vert
\operatorname*{Det}\left(  A_{S}\right)  \right\vert ^{2}}{n!}.
\]
So the probability that $h_{1},\ldots,h_{n}$ get sampled in \textit{any} order
is simply $n!$\ times the above, or $\left\vert \operatorname*{Det}\left(
A_{S}\right)  \right\vert ^{2}$.
\end{proof}

We now turn to understanding the pdf of $\left\vert \operatorname*{Det}\left(
X\right)  \right\vert ^{2}$, where $X\sim\mathcal{N}\left(  0,1\right)
_{\mathbb{C}}^{n\times n}$\ is an iid Gaussian matrix. \ The reason we are
interested in this pdf is that, by Theorem \ref{trunc}, an $n\times n$
submatrix of a Haar-random $A\in\mathbb{C}^{m\times n}$\ will be close in
variation distance to an iid Gaussian matrix if $m\gg n^{2}$. \ And therefore,
the pdf of $\left\vert \operatorname*{Det}\left(  X\right)  \right\vert ^{2}%
$\ controls the behavior of a Haar-random \textsc{FermionSampling}%
\ distribution, in exactly the same way that we saw that the pdf of
$\left\vert \operatorname*{Per}\left(  X\right)  \right\vert ^{2}$\ controls
the behavior of a Haar-random \textsc{BosonSampling}\ distribution.
\ (However, we will not discuss explicitly how to move from statements about
$\left\vert \operatorname*{Det}\left(  X\right)  \right\vert ^{2}$\ to
statements about \textsc{FermionSampling}, since it is precisely analogous to
how we moved from statements about $\left\vert \operatorname*{Per}\left(
X\right)  \right\vert ^{2}$\ to statements about \textsc{BosonSampling}\ in
Section \ref{DEV}.)

The key to understanding the pdf of $\left\vert \operatorname*{Det}\left(
X\right)  \right\vert ^{2}$\ is the following proposition, which is noted for
example in Costello and Vu \cite[Appendix]{cv}, and which we also used in
\cite{aark}.

\begin{proposition}
\label{piped}If $\mathcal{N}\left(  0,1\right)  _{\mathbb{C}}^{n\times n}$,
then $\left\vert \operatorname*{Det}\left(  X\right)  \right\vert ^{2}$\ is
distributed precisely as%
\[
\left\vert y_{11}\right\vert ^{2}\left(  \left\vert y_{21}\right\vert
^{2}+\left\vert y_{22}\right\vert ^{2}\right)  \cdots\left(  \left\vert
y_{n1}\right\vert ^{2}+\cdots+\left\vert y_{nn}\right\vert ^{2}\right)  ,
\]
where each $y_{ij}$\ is an independent $\mathcal{N}\left(  0,1\right)
_{\mathbb{C}}$\ Gaussian.
\end{proposition}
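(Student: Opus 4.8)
The plan is to diagonalize $\left\vert \operatorname*{Det}(X)\right\vert ^2$ by Gram--Schmidt on the rows of $X$, exactly as in the correctness argument for Theorem~\ref{fasample}. Write $x_1,\ldots,x_n\in\mathbb{C}^n$ for the rows of $X$, and let $x_t'$ denote the component of $x_t$ orthogonal to the span of $x_1,\ldots,x_{t-1}$. Subtracting linear combinations of earlier rows leaves the determinant unchanged, so $\left\vert \operatorname*{Det}(X)\right\vert ^2=\left\vert \operatorname*{Det}(X')\right\vert ^2$ where $X'$ has rows $x_1',\ldots,x_n'$; and since those rows are mutually orthogonal (each $x_i'$ lies in the span of $x_1,\ldots,x_i$, while $x_j'\perp$ that span for $j>i$), the matrix $X'X'^{\dagger}$ is diagonal with entries $\|x_t'\|_2^2$, whence $\left\vert \operatorname*{Det}(X)\right\vert ^2=\prod_{t=1}^n\|x_t'\|_2^2$.

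It then remains to identify the joint distribution of the $\|x_t'\|_2^2$, which I would do by conditioning successively. Almost surely $x_1,\ldots,x_{t-1}$ are linearly independent, so their orthogonal complement $V_t$ has dimension $n-t+1$; fix any such realization. Conditioned on $x_1,\ldots,x_{t-1}$, the row $x_t$ is still an i.i.d.\ complex Gaussian vector, by independence of the rows of $X$. Expressing the orthogonal projection of $x_t$ onto $V_t$ in an orthonormal basis of $V_t$, unitary invariance of $\mathcal{N}(0,1)_{\mathbb{C}}^n$ shows that the coordinates of that projection are $n-t+1$ independent $\mathcal{N}(0,1)_{\mathbb{C}}$ variables. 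Hence $\|x_t'\|_2^2$ is distributed as $\left\vert z_1\right\vert ^2+\cdots+\left\vert z_{n-t+1}\right\vert ^2$ with the $z_j$ independent standard complex Gaussians --- and, crucially, this conditional law does not depend on the particular $x_1,\ldots,x_{t-1}$ we conditioned on. Therefore $\|x_t'\|_2^2$ is independent of $(x_1,\ldots,x_{t-1})$, and in particular of $\|x_1'\|_2^2,\ldots,\|x_{t-1}'\|_2^2$.

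Putting this together, $\|x_1'\|_2^2,\ldots,\|x_n'\|_2^2$ are mutually independent, with $\|x_t'\|_2^2$ a sum of $n-t+1$ independent $\left\vert \mathcal{N}(0,1)_{\mathbb{C}}\right\vert ^2$ terms. Since their product equals $\left\vert \operatorname*{Det}(X)\right\vert ^2$, and since relabeling the factors (replacing $t$ by $n+1-t$) is harmless, we obtain
\[
\left\vert \operatorname*{Det}(X)\right\vert ^2 \;\overset{d}{=}\; \left\vert y_{11}\right\vert ^2\bigl(\left\vert y_{21}\right\vert ^2+\left\vert y_{22}\right\vert ^2\bigr)\cdots\bigl(\left\vert y_{n1}\right\vert ^2+\cdots+\left\vert y_{nn}\right\vert ^2\bigr)
\]
with all $y_{ij}$ i.i.d.\ complex Gaussian. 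The only point requiring genuine care is the independence step --- observing that the conditional distribution of $\|x_t'\|_2^2$ is truly free of the conditioning --- together with the routine measure-zero caveat that the earlier rows are linearly independent; neither is a real obstacle. (As a sanity check, this recovers $\operatorname{E}\left[\left\vert \operatorname*{Det}(X)\right\vert ^2\right]=\prod_{t=1}^n(n-t+1)=n!$.)
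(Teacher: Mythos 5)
Your proof is correct and follows essentially the same route as the paper's own sketch: Gram--Schmidt on the rows to write $\left\vert \operatorname*{Det}(X)\right\vert^{2}$ as the product of squared norms of successive orthogonal projections, then rotational invariance of the Gaussian measure to identify each factor as an independent complex $\chi^{2}$ with decreasing degrees of freedom. Your conditioning argument just spells out the independence claim that the paper states more tersely.
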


\begin{proof}
[Proof Sketch]Let $v_{1},\ldots,v_{n}$\ be the rows of $X$, and let $w_{t}%
$\ be the projection of $v_{t}$\ onto the orthogonal complement of the
subspace spanned by $v_{1},\ldots,v_{t-1}$. \ Then just like in the proof of
Theorem \ref{fasample}, we can write $\left\vert \operatorname*{Det}\left(
X\right)  \right\vert ^{2}$\ as $\left\Vert w_{1}\right\Vert _{2}^{2}%
\cdots\left\Vert w_{n}\right\Vert _{2}^{2}$. \ Now, clearly $\left\Vert
w_{1}\right\Vert _{2}^{2}=\left\Vert v_{1}\right\Vert _{2}^{2}$\ is
distributed as a complex $\chi^{2}$\ random variable with $n$ degrees of
freedom. \ But it follows, from the rotational symmetry of the Gaussian
measure, that $\left\Vert w_{2}\right\Vert _{2}^{2}$\ is distributed as a
complex $\chi^{2}$\ with $n-1$\ degrees of freedom, $\left\Vert w_{3}%
\right\Vert _{2}^{2}$\ is distributed as a complex $\chi^{2}$\ with
$n-2$\ degrees of freedom, and so on. \ Moreover these $\chi^{2}$'s are all
independent of each other, since the $v_{t}$'s are.
\end{proof}

Using Proposition \ref{piped}, Girko \cite{girko} and Costello and Vu
\cite[Appendix]{cv}\ showed that, if $X\sim\mathcal{N}\left(  0,1\right)
_{\mathbb{R}}^{n\times n}$, then%
\[
\frac{\ln\left\vert \operatorname*{Det}\left(  X\right)  \right\vert -\ln
\sqrt{\left(  n-1\right)  !}}{\sqrt{\frac{\ln n}{2}}}%
\]
converges weakly to the standard normal distribution $\mathcal{N}\left(
0,1\right)  _{\mathbb{R}}$. \ This result falls short of what we want in two
minor respects: it's for real rather than complex $X$, and it doesn't
quantitatively bound the convergence rate. \ For completeness, we now fill
these gaps. \ We will do so using the Berry-Esseen Theorem (Theorem
\ref{berryesseen}), but in a variant for sums of non-iid random variables.

\begin{theorem}
[Berry-Esseen, non-iid case]\label{noniid}Let $Z_{1},\ldots,Z_{n}$\ be real
iid random variables satisfying%
\begin{align*}
\operatorname{E}\left[  Z_{i}\right]   &  =\upsilon_{i},\\
\operatorname{E}\left[  \left(  Z_{i}-\upsilon_{i}\right)  ^{2}\right]   &
=\sigma_{i}^{2}>0,\\
\operatorname{E}\left[  \left\vert Z_{i}-\upsilon_{i}\right\vert ^{3}\right]
&  =\rho_{i}<\infty.
\end{align*}
Then let%
\[
Z:=Z_{1}+\cdots+Z_{n},
\]
and let $W$\ be a real Gaussian with mean $\upsilon_{1}+\cdots+\upsilon_{n}%
$\ and variance $\sigma_{1}^{2}+\cdots+\sigma_{n}^{2}$. \ Then for all
$x\in\mathbb{R}$,%
\[
\left\vert \Pr\left[  Z\leq x\right]  -\Pr\left[  W\leq x\right]  \right\vert
\leq C\frac{\rho_{1}+\cdots+\rho_{n}}{\left(  \sigma_{1}^{2}+\cdots+\sigma
_{n}^{2}\right)  ^{3/2}},
\]
where $C$ is some universal constant.
\end{theorem}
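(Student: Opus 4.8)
Since this is the standard Lyapunov--Esseen form of the Berry--Esseen theorem for independent but non-identically-distributed summands (due to Esseen; see e.g.\ Feller's \emph{An Introduction to Probability Theory and its Applications}, Vol.~II, or Petrov's \emph{Sums of Independent Random Variables}), the most economical option is simply to invoke it as a known result. For completeness, here is the plan I would follow for a self-contained proof, which is the classical characteristic-function route.

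First I would normalize. Set $B^{2}:=\sigma_{1}^{2}+\cdots+\sigma_{n}^{2}$, let $N$ be a standard real Gaussian with cdf $G(x):=\Pr[N\le x]$, and let $\lambda:=(\rho_{1}+\cdots+\rho_{n})/B^{3}$ be the Lyapunov ratio; by Jensen's inequality $\rho_{i}\ge\sigma_{i}^{3}$, so $\lambda\ge1/\sqrt{n}>0$. Writing $g_{i}$ for the law of $Z_{i}-\upsilon_{i}$, with characteristic function $\widehat{g}_{i}$, the standardized sum $\widetilde{Z}:=(Z-\sum_{i}\upsilon_{i})/B$ has characteristic function $\psi(t)=\prod_{i=1}^{n}\widehat{g}_{i}(t/B)$, and it suffices to prove $\sup_{x}|\Pr[\widetilde{Z}\le x]-G(x)|=O(\lambda)$. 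If $\lambda\ge1$ the left-hand side is at most $1\le\lambda$, so I may assume $\lambda$ is small.

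The engine is Esseen's smoothing inequality: for any cdf $F$ and any $T>0$,
\[
\sup_{x}\,\bigl|F(x)-G(x)\bigr|\;\le\;\frac{1}{\pi}\int_{-T}^{T}\left|\frac{\widehat{F}(t)-e^{-t^{2}/2}}{t}\right|\,dt\;+\;\frac{C_{0}}{T},
\]
with $C_{0}$ an absolute constant. I would apply this with $F$ the cdf of $\widetilde{Z}$ and $T$ a suitable constant multiple of $1/\lambda$; all the remaining work is then in bounding $|\psi(t)-e^{-t^{2}/2}|$ for $|t|\le T$. Taylor-expanding each factor gives $\widehat{g}_{i}(s)=1-\tfrac12\sigma_{i}^{2}s^{2}+\theta_{i}(s)$ with $|\theta_{i}(s)|\le\tfrac16\rho_{i}|s|^{3}$, and combining this with the universal bound $|\widehat{g}_{i}(s)|\le1$ and a telescoping/logarithm estimate should yield $|\psi(t)-e^{-t^{2}/2}|\le C_{1}\lambda|t|^{3}e^{-t^{2}/3}$ on $|t|\le T$; then $\frac{1}{\pi}\int_{-T}^{T}C_{1}\lambda|t|^{2}e^{-t^{2}/3}\,dt=O(\lambda)$ and $C_{0}/T=O(\lambda)$, which finishes the bound.

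The hard part is precisely that last characteristic-function estimate. Because the $\sigma_{i}^{2}$ are unequal, one cannot simply assert $|\widehat{g}_{i}(t/B)|\le e^{-c\sigma_{i}^{2}t^{2}/B^{2}}$ uniformly over the smoothing window $|t|\le T$; the per-summand third-moment errors $\theta_{i}$ have to be aggregated with care --- using $\rho_{i}\ge\sigma_{i}^{3}$ to keep the Taylor regime valid where it applies, and falling back on $|\widehat{g}_{i}|\le1$ elsewhere --- so that the Gaussian decay factor $e^{-t^{2}/3}$ survives into the final integral. Everything else (the smoothing inequality itself, the final Gaussian integration, the normalization) is routine bookkeeping. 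A conceptually cleaner but quantitatively weaker alternative would be Lindeberg's replacement method: swap each $Z_{i}$ for a Gaussian of matching mean and variance one at a time, Taylor-expand a smooth test function, and pay $O((\rho_{i}+\operatorname{E}[|G_{i}|^{3}])/B^{3})$ per swap; but extracting the sharp $O(\lambda)$ Kolmogorov-distance bound from it, rather than $O(\lambda^{1/3})$, still requires the smoothing step above.
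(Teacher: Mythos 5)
Your approach matches the paper's: Theorem \ref{noniid} is the classical Lyapunov--Esseen form of the Berry--Esseen theorem for independent, non-identically-distributed summands, and the paper simply states it as a known result without proof (just as it does for the iid version in Theorem \ref{berryesseen}), which is exactly what you propose to do. Your supplementary sketch of the characteristic-function proof via Esseen's smoothing inequality is the standard argument and is accurate, so there is nothing to correct.
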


By combining Theorem \ref{noniid} with Lemma \ref{heartbreak}\ and Proposition
\ref{piped}, we can now upper-bound the variation distance between $\left\vert
\operatorname*{Det}\left(  X\right)  \right\vert ^{2}$\ and a lognormal random variable.

\begin{theorem}
\label{detdist}Let $X\sim\mathcal{N}\left(  0,1\right)  _{\mathbb{C}}^{n\times
n}$\ be Gaussian. \ Then for all $x\in\mathbb{R}$,%
\[
\Pr_{X}\left[  \frac{\left\vert \operatorname*{Det}\left(  X\right)
\right\vert ^{2}}{n!}\leq\sqrt{\frac{e}{2\pi n}}e^{x\sqrt{\ln n+1+\gamma}%
}\right]  =\int_{-\infty}^{x}\frac{e^{-y^{2}/2}}{\sqrt{2\pi}}dy\pm O\left(
\frac{1}{\log^{3/2}n}\right)  .
\]
In other words, the cdf of $\left\vert \operatorname*{Det}\left(  X\right)
\right\vert ^{2}$\ is pointwise $O\left(  \log^{-3/2}n\right)  $-close to the
cdf of a lognormal random variable.
\end{theorem}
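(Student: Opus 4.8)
The plan is to apply the non-iid Berry--Esseen theorem (Theorem~\ref{noniid}) to $L:=\ln\left\vert\operatorname*{Det}\left(X\right)\right\vert^{2}$. By Proposition~\ref{piped}, $\left\vert\operatorname*{Det}\left(X\right)\right\vert^{2}$ is distributed as a product of independent factors, the $t$-th of which is a complex $\chi^{2}$ variable with $t$ degrees of freedom. Hence $L=\sum_{t=1}^{n}\ell_{t}$, where the $\ell_{t}$ are independent and $\ell_{t}$ is distributed as the log of a complex $\chi^{2}$ with $t$ degrees of freedom (so Lemma~\ref{heartbreak} applies with its parameter $n$ replaced by $t$; in particular all the relevant moments of $\ell_{t}$ are finite, including for $t=1$). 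I would then invoke Theorem~\ref{noniid} to conclude that, uniformly in $y$, $\Pr\left[L\le y\right]$ is within $O\!\left(\frac{\sum_{t}\rho_{t}}{V^{3/2}}\right)$ of $\Pr\left[W\le y\right]$, where $\rho_{t}:=\operatorname{E}\left[\left\vert\ell_{t}-\operatorname{E}\left[\ell_{t}\right]\right\vert^{3}\right]$, $V:=\operatorname*{Var}\left[L\right]$, and $W$ is a real Gaussian with mean $\operatorname{E}\left[L\right]$ and variance $V$.

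Next I would compute $\operatorname{E}\left[L\right]$ and $V$ from Lemma~\ref{heartbreak}. For the variance, $\operatorname*{Var}\left[\ell_{t}\right]=\sum_{j\ge t}j^{-2}$, so $V=\sum_{t=1}^{n}\sum_{j\ge t}j^{-2}=H_{n}+n\sum_{j>n}j^{-2}$, where $H_{n}$ is the $n$-th harmonic number; a one-term-past-leading asymptotic expansion of both pieces gives $V=\ln n+\gamma+1+O(1/n^{2})$, which is exactly $\sigma^{\ast 2}:=\ln n+1+\gamma$ up to lower-order terms. For the mean, $\operatorname{E}\left[\ell_{t}\right]=-\gamma+H_{t-1}$, so $\operatorname{E}\left[L\right]=-n\gamma+\sum_{t=1}^{n}H_{t-1}=-n\gamma+nH_{n-1}-(n-1)=n\left(H_{n-1}-\gamma\right)-n+1=n\ln n-n+\tfrac12+O(1/n)$; and by Stirling, $\mu^{\ast}:=\ln n!+\tfrac12\ln\tfrac{e}{2\pi n}$ also equals $n\ln n-n+\tfrac12+O(1/n)$. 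Thus $\operatorname{E}\left[L\right]=\mu^{\ast}+O(1/n)$ and $V=\sigma^{\ast 2}+O(1/n^{2})$.

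For the third-moment numerator I would use the convexity bound quoted just before Theorem~\ref{closetogauss}: $\rho_{t}\le\operatorname{E}\left[\left(\ell_{t}-\operatorname{E}\left[\ell_{t}\right]\right)^{4}\right]^{3/4}$. By Lemma~\ref{heartbreak}, $\operatorname{E}\left[\left(\ell_{t}-\operatorname{E}\left[\ell_{t}\right]\right)^{4}\right]=6\sum_{j\ge t}j^{-4}+3\bigl(\sum_{j\ge t}j^{-2}\bigr)^{2}=O(1/t^{2})$ uniformly in $t\ge1$, so $\rho_{t}=O(t^{-3/2})$ and hence $\sum_{t=1}^{n}\rho_{t}=O(1)$, since $\sum_{t}t^{-3/2}$ converges. (This convergence is the reason the final rate is $\log^{-3/2}n$ and not slower: the variance of the $t$-th factor shrinks like $1/t$, which is exactly fast enough to make the third moments summable.) Since $V=\Theta(\log n)$, Theorem~\ref{noniid} now yields $\sup_{y}\bigl\vert\Pr\left[L\le y\right]-\Pr\left[W\le y\right]\bigr\vert=O(\log^{-3/2}n)$.

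Finally I would replace $W$ by the idealized Gaussian $W^{\ast}\sim\mathcal{N}\left(\mu^{\ast},\sigma^{\ast 2}\right)$. Because $\left\vert\operatorname{E}\left[L\right]-\mu^{\ast}\right\vert=O(1/n)$, $\left\vert V-\sigma^{\ast 2}\right\vert=O(1/n^{2})$ and $\sigma^{\ast 2}=\Theta(\log n)$, the standard estimate on Gaussian cdfs (their density is at most $(2\pi)^{-1/2}$, and rescaling the variance by a factor $1+O(1/(n^{2}\log n))$ moves each cdf value by $O(1/(n^{2}\log n))$) shows the cdfs of $W$ and $W^{\ast}$ differ uniformly by $O\!\left(\frac{1}{n\sqrt{\log n}}\right)=o(\log^{-3/2}n)$. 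Putting the two comparisons together and substituting $y=\mu^{\ast}+x\sqrt{\ln n+1+\gamma}$ --- equivalently $\left\vert\operatorname*{Det}\left(X\right)\right\vert^{2}/n!=\sqrt{e/(2\pi n)}\,e^{x\sqrt{\ln n+1+\gamma}}$ --- gives $\Pr\left[L\le y\right]=\Pr\left[W^{\ast}\le y\right]\pm O(\log^{-3/2}n)=\int_{-\infty}^{x}\frac{e^{-u^{2}/2}}{\sqrt{2\pi}}\,du\pm O(\log^{-3/2}n)$, which is the claim. The only delicate point is the bookkeeping in the mean/variance step, where matching the exact constants $\tfrac12\ln\tfrac{e}{2\pi n}$ and $1+\gamma$ forces one to expand $H_{n}$, $\sum_{j>n}j^{-2}$, $H_{n-1}$ and Stirling's series each one order beyond the leading term; the Berry--Esseen input ($\sum_{t}\rho_{t}=O(1)$, $V=\Theta(\log n)$) and the $W\to W^{\ast}$ swap are both robust and need no such care.
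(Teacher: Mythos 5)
Your proposal is correct and follows essentially the same route as the paper's own proof: decompose $\ln\left\vert\operatorname*{Det}\left(X\right)\right\vert^{2}$ via Proposition \ref{piped} into a sum of independent $\ell_{t}$'s, extract the moments from Lemma \ref{heartbreak}, apply the non-iid Berry--Esseen theorem (with $\sum_{t}\rho_{t}=O(1)$ and variance $\Theta(\log n)$), and then match the mean and variance to $n\ln n-n+\tfrac12$ and $\ln n+1+\gamma$ via Stirling. The only difference is presentational: you spell out the harmonic-sum identities and the $W\to W^{\ast}$ swap explicitly, where the paper simply notes that the $O(1/n)$ and $O(1/n^{2})$ errors are swamped by the $O(\log^{-3/2}n)$ term.
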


\begin{proof}
We have%
\begin{align*}
\ln\left(  \left\vert \operatorname*{Det}\left(  X\right)  \right\vert
^{2}\right)   &  =\ln\left(  \left\vert y_{11}\right\vert ^{2}\right)
+\ln\left(  \left\vert y_{21}\right\vert ^{2}+\left\vert y_{22}\right\vert
^{2}\right)  +\cdots+\ln\left(  \left\vert y_{n1}\right\vert ^{2}%
+\cdots+\left\vert y_{nn}\right\vert ^{2}\right) \\
&  =\ell_{1}+\cdots+\ell_{n},
\end{align*}
where the first line uses the notation of Proposition \ref{piped} (each
$y_{ij}$\ being an independent $\mathcal{N}\left(  0,1\right)  _{\mathbb{C}}%
$\ Gaussian), and the second line uses the notation of Section \ref{DETECT}.
\ Thus, $\ln\left(  \left\vert \operatorname*{Det}\left(  X\right)
\right\vert ^{2}\right)  $\ is similar to the random variable $L=\ln
R$\ studied in Section \ref{DETECT}, except that, whereas $L$ was a sum of $n$
\textit{identical} $\ell_{n}$\ random variables, $\ln\left(  \left\vert
\operatorname*{Det}\left(  X\right)  \right\vert ^{2}\right)  $\ is a sum of
$n$ non-identical variables $\ell_{1},\ldots,\ell_{n}$.

Still, if we let%
\begin{align*}
\upsilon_{t}  &  =\operatorname{E}\left[  \ell_{t}\right]  ,\\
\sigma_{t}^{2}  &  =\operatorname*{Var}\left[  \ell_{t}\right]  ,\\
\rho_{t}  &  =\operatorname{E}\left[  \left\vert \ell_{n}-\upsilon
_{t}\right\vert ^{3}\right]  ,
\end{align*}
then by Lemma \ref{heartbreak},%
\[
\sigma_{t}^{2}=\frac{1+o\left(  1\right)  }{t},~~~~\rho_{t}=\frac
{3^{3/4}+o\left(  1\right)  }{t^{3/2}}.
\]
So let $W$\ be a real Gaussian with mean $\upsilon_{1}+\cdots+\upsilon_{n}%
$\ and variance $\sigma_{1}^{2}+\cdots+\sigma_{n}^{2}$. \ Then by Theorem
\ref{noniid}, for all $x\in\mathbb{R}$,%
\[
\left\vert \Pr\left[  \ln\left(  \left\vert \operatorname*{Det}\left(
X\right)  \right\vert ^{2}\right)  \leq x\right]  -\Pr\left[  W\leq x\right]
\right\vert \leq C\frac{\rho_{1}+\cdots+\rho_{n}}{\left(  \sigma_{1}%
^{2}+\cdots+\sigma_{n}^{2}\right)  ^{3/2}}=O\left(  \frac{1}{\log^{3/2}%
n}\right)  .
\]
It remains only to estimate the mean and variance of $\ln\left(  \left\vert
\operatorname*{Det}\left(  X\right)  \right\vert ^{2}\right)  $. \ Using Lemma
\ref{heartbreak}, it is not hard to show the following:%
\begin{align*}
\operatorname*{E}\left[  \ln\left(  \left\vert \operatorname*{Det}\left(
X\right)  \right\vert ^{2}\right)  \right]   &  =\upsilon_{1}+\cdots
+\upsilon_{n}=n\ln n-n+\frac{1}{2}-O\left(  \frac{1}{n}\right)  ,\\
\operatorname*{Var}\left[  \ln\left(  \left\vert \operatorname*{Det}\left(
X\right)  \right\vert ^{2}\right)  \right]   &  =\sigma_{1}^{2}+\cdots
+\sigma_{n}^{2}=\ln n+1+\gamma+O\left(  \frac{1}{n^{2}}\right)  .
\end{align*}
Since the $O\left(  1/n\right)  $\ and $O\left(  1/n^{2}\right)  $\ error
terms are \textquotedblleft swamped\textquotedblright\ by the $O\left(
\log^{-3/2}n\right)  $, this gives us that for all $x\in\mathbb{R}$,%
\[
\Pr_{X}\left[  \frac{\ln\left(  \left\vert \operatorname*{Det}\left(
X\right)  \right\vert ^{2}\right)  -\left(  n\ln n-n+1/2\right)  }{\sqrt{\ln
n+1+\gamma}}\leq x\right]  =\int_{-\infty}^{x}\frac{e^{-y^{2}/2}}{\sqrt{2\pi}%
}dy\pm O\left(  \frac{1}{\log^{3/2}n}\right)  .
\]
Rearranging and applying Stirling's approximation now yields the theorem.
\end{proof}

As a counterpoint to Theorem \ref{detdist}, let us give a simple argument for
why $\left\vert \operatorname*{Det}\left(  X\right)  \right\vert ^{2}$\ cannot
be \textit{exactly} lognormal, for any fixed $n$. \ Recall, from Section
\ref{DEV}, that the pdf of $\left\vert \operatorname*{Per}\left(  X\right)
\right\vert ^{2}$\ is a mixture of exponential random variables (Lemma
\ref{expmix}) and is therefore monotonically decreasing (Theorem
\ref{grabbag}). \ It is easy to see that the same is true for $\left\vert
\operatorname*{Det}\left(  X\right)  \right\vert ^{2}$, since the proof of
Lemma \ref{expmix}\ works just as well for the determinant as for the
permanent. \ By contrast, the pdf of a lognormal random variable is
\textit{not} monotonically decreasing,\ but is $0$ at the origin.

\end{document}